\newtheorem{theorem}{Theorem}
\newtheorem{lemma}{Lemma}
\newtheorem{corollary}{Corollary}
\newtheorem{proposition}{Proposition}
\theoremstyle{definition}
\newtheorem*{assumption*}{\assumptionnumber}
\providecommand{\assumptionnumber}{}
\newcommand*\rel@kern[1]{\kern#1\dimexpr\macc@kerna}
\newcommand*\widebar[1]{%
  \begingroup
  \def\mathaccent##1##2{%
    \rel@kern{0.8}%
    \overline{\rel@kern{-0.8}\macc@nucleus\rel@kern{0.2}}%
    \rel@kern{-0.2}%
  }%
  \macc@depth\@ne
  \let\math@bgroup\@empty \let\math@egroup\macc@set@skewchar
  \mathsurround\z@ \frozen@everymath{\mathgroup\macc@group\relax}%
  \macc@set@skewchar\relax
  \let\mathaccentV\macc@nested@a
  \macc@nested@a\relax111{#1}%
  \endgroup
}
\newcommand{\minimize}{\mathop{\mathrm{minimize}}}
\def\R{\mathbb{R}}
\def\Z{\mathbb{Z}}
\def\E{\mathbb{E}}
\def\P{\mathbb{P}}
\def\T{\mathsf{T}}
\def\Cov{\mathrm{Cov}}
\def\Var{\mathrm{Var}}
\def\th{^{\text{th}}}
\def\hf{\hat{f}}
\def\hmu{\hat{\mu}}
\def\cF{\mathcal{F}}
\def\Risk{\mathrm{Risk}}
\def\Err{\mathrm{Err}}
\def\hErr{\widehat\Err}
\def\UE{\mathrm{UE}}
\def\CB{\mathrm{CB}}
\def\pb{{*b}}
\def\mb{{\dagger{b}}}
\def\Bias{\mathrm{Bias}}
\def\IVar{\mathrm{IVar}}
\def\RVar{\mathrm{RVar}}
\def\Cor{\mathrm{Cor}}
\def\Pois{\mathrm{Pois}}
\def\Binom{\mathrm{Binom}}
\def\Z{\mathbb{Z}}
\def\newY{{\tilde Y}}
\def\newy{{\tilde y}}
\def\dev{\mathrm{dev}}
\def\sqr{\mathrm{sqr}}
\def\UESS{\mathrm{UE}_\mathrm{ss}}
\def\th{^{\text{th}}}
\title{Unbiased Test Error Estimation in the Poisson Means Problem via Coupled
  Bootstrap Techniques}  
\author{Natalia L.\ Oliveira$^{1,2}$ \and Jing Lei$^{1}$ \and Ryan J.\ Tibshirani$^{1,2}$} 
\date{$^1$Department of Statistics and Data Science, 
  $^2$Machine Learning Department\\ 
    Carnegie Mellon University}
\begin{document}
\maketitle

\begin{abstract} 
We propose a \emph{coupled bootstrap} (CB) method for the test error of an 
arbitrary algorithm that estimates the mean in a Poisson sequence, often called
the Poisson means problem. The idea behind our method is to generate two
carefully-designed data vectors from the original data vector, by using
synthetic binomial noise. One such vector acts as the training sample and the
second acts as the test sample. To stabilize the test error estimate, we average
this over multiple bootstrap $B$ of the synthetic noise. A key property of the
CB estimator is that it is unbiased for the test error in a Poisson problem
where the original mean has been shrunken by a small factor, driven by the
success probability $p$ in the binomial noise. Further, in the limit as $B \to
\infty$ and $p \to 0$, we show that the CB estimator recovers a known unbiased
estimator for test error based on Hudson's lemma, under no assumptions on the
given algorithm for estimating the mean (in particular, no smoothness
assumptions). Our methodology applies to two central loss functions that can be
sused to define test error: Poisson deviance and squared loss. Via a
bias-variance decomposition, for each loss function, we analyze the effects of
the binomial success probability and the number of bootstrap samples and on the
accuracy of the estimator. We also investigate our method empirically across a
variety of settings, using simulated as well as real data.  
\end{abstract}

\section{Introduction}
\label{sec:introduction}

We study the problem of estimating the test error of an algorithm in the Poisson
many means problem, also called the Poisson compound decision problem. The
importance of test error estimation in general rests on the fact that such
estimates can be used in many dowstream applications, such as model assessment,
selection, or tuning. To fix notation, given a data vector $Y = (Y_1,\dots Y_n)
\in \Z_+^n$ (where we write $\Z_+= \{0,1,2,\dots\}$ for the nonnegative
integers) distributed according to  
\begin{equation}
\label{eq:data_model}
\text{$Y_i \sim \Pois(\mu_i)$, \; independently, \; for $i=1,\dots,n$}, 
\end{equation}
we seek to estimate the mean vector $\mu = (\mu_1,\dots,\mu_n) \in \R^n_+$
(where we use $\R_+ = \{ x \in \R : x \geq 0 \}$ for the set of nonnegative 
real numbers). Let $g : \Z_+^n \to \R_+^n$ be a measurable function that
estimates $\mu$ from the data $Y$, so that we can write \smash{$\hmu =
  g(Y)$}. We will often refer to $g$ as an algorithm, in the context of
estimating $\mu$ in the Poisson many means problem. 

To evaluate the performance of $g$, we can use various metrics. One class of
metrics evaluate what we call \emph{test error}, based on a loss function $L :
\Z_+^n \times \R_+^n \to \R$,    
\begin{equation}
\label{eq:test_error}
\text{$\Err(g) = \E[L(\newY, g(Y))]$, \; where $\newY$ is drawn from
  \eqref{eq:data_model}, independently of $Y$}, 
\end{equation}
which measures how well $g$ tracks an independent copy \smash{$\newY$} of the
data. A second class of metrics evaluate what we call \emph{risk}, again based
on a loss function $L$, 
\begin{equation}
\label{eq:risk}
\Risk(g) = \E[L(\mu, g(Y))],
\end{equation}
which measures how well $g$ tracks the mean $\mu = \E[Y]$ of the
data. Admittedly, many authors use the terms ``test error'' and ``risk''
interchangeably, but in this paper we are careful to use terminology that
distinguishes the two, for reasons that we will become apparent in the next  
subsection.      

\subsection{Test error versus risk}
\label{sec:test_error_vs_risk}

In the classical normal means problem, where instead of \eqref{eq:data_model} we
observe $Y_i \sim N(\mu_i, \sigma^2)$ independently, it is straightforward to
show that under a squared loss $L$, the test error \eqref{eq:test_error} and
risk \eqref{eq:risk} differ only by the noise level $\sigma^2$. In the Poisson
means problem, there is no direct analogy for typical loss functions of
interest, and the difference between \eqref{eq:test_error} and \eqref{eq:risk}
will generally depend on $\mu$. This means that an estimator of one metric (test
error or risk) does not as easily translate into an estimator of the other, 
since $\mu$ is of course unknown, and the primary estimand of interest.  

Thankfully, as we show here, when $L$ is a Bregman divergence the difference
between test error and risk does not depend on $g$. A Bregman divergence is a 
loss function of the form $L(a,b) = D_\phi(a,b)$, where
\begin{equation}
\label{eq:bregman}
D_\phi(a,b) = \phi(a) - \phi(b) - \langle \nabla \phi(b), a-b \rangle,
\end{equation}
for a convex, differentiable function $\phi : \R^n \to \R$, where here an
subsequently we use $\langle u, v \rangle = u^\T v$ for vectors $u,v$. In
this case it is straightforward to see that  
\begin{align}
\nonumber
\Err(g) - \Risk(g) 
&= \E[D_\phi(\newY, g(Y))] - \E[D_\phi(\mu, g(Y))] \\
\nonumber
&= \E[\phi(\newY)] - \E[\phi(g(Y))] - 
\E[\langle \nabla \phi(g(Y)), \newY-g(Y) \rangle] \\
\nonumber
&\qquad - \phi(\mu) + \E[\phi(g(Y))] +
\E[\langle \nabla \phi(g(Y)), \mu-g(Y) \rangle] \\ 
\label{eq:jensen_gap}
&= \E[\phi(Y)] - \phi(\mu),
\end{align}
where the cancellation of terms in the third line holds because
\smash{$Y,\newY$} are i.i.d., and thus $\E[\langle \nabla \phi(g(Y)), \newY 
\rangle] = \langle \E[\nabla \phi(g(Y))], \mu \rangle$. Observe that
\eqref{eq:jensen_gap} is the gap in Jensen's inequality. Therefore it is always 
nonnegative, and $\Err(g) \geq \Risk(g)$.  

In this paper, we will focus on estimating the test error \eqref{eq:test_error}
in the Poisson means problem \eqref{eq:data_model}, for two special instances of 
a Bregman divergence: squared loss and Poisson deviance, as will be discussed in 
the next subsection. Since $\Err(g) - \Risk(g) = \E[\phi(Y)] - \phi(\mu)$
depends on $\mu$, it will not be the case that we can automatically translate an
estimator of the test error of $g$ into an estimator of its risk. However, we
can still unbiasedly estimate the \emph{difference} in risk between two models
$g$ and $h$, as discussed next.   

\paragraph{Model comparisons.}

The gap \eqref{eq:jensen_gap} does not depend on $g$. Thus for a comparison 
between two algorithms $g$ and $h$, we always have (provided we use Bregman
divergence to define the test error and risk metrics):
\[
\Err(g) - \Err(h) = \Risk(g) - \Risk(h),
\]
To be clear, this means that if \smash{$\hErr(g)$} is an unbiased estimator of 
$\Err(g)$ for any $g$, just as we will produce in this paper, then
\[
\text{$\hErr(g) - \hErr(h)$ \; is unbiased for \; $\Risk(g) - \Risk(h)$, \; for
  any $g,h$}.
\]
As such, we can still use the tools developed in this paper to perform model
comparisons, or more broadly, model tuning (where $g_s$ is indexed by a tuning  
parameter $s \in S$, and we select $s$ to minimize an unbiased estimate of
test error, or equivalently, risk).

\paragraph{Fixed-X Poisson regression.}

A special case of our problem setting to keep in mind is \emph{fixed-X} Poisson
regression. Here we view $Y \in \R^n$ as a response vector and we have an
associated feature matrix $X \in \R^{n \times p}$. The algorithm $g$ typically
performs a kind of Poisson regression of $Y$ on $X$. As long as we consider $X$
to be fixed (nonrandom), we can still interpret this as a problem of the form
\eqref{eq:data_model}, with $\mu = \mu(X)$. In this setting, the test error
metric \eqref{eq:test_error} translates to what is called fixed-X prediction 
error, where we evaluate predictions at the same feature vectors (rows of $X$),
but against new responses (elements of \smash{$\newY$}).

While fixed-X analyses are more typical in classical statistics, the
\emph{random-X} perspective is great interest in modern prediction
problems. Here the feature vectors at which we make predictions are random,
giving rise to random-X prediction error as the metric of concern. Estimating
random-X prediction error is \emph{not} in general equivalent to estimating
fixed-X prediction error and the two can behave quite differently (see, e.g.,
\citet{rosset2020from} for an extended discussion). The random-X perspective
eludes the framework of the current paper, but is an important topic for future
work.  

\subsection{Squared loss versus Poisson deviance}

When $\phi(x) = \|x\|_2^2$, it is easy to check that 
\[
D_\phi(a,b) = \|a - b\|_2^2, 
\]
which is the squared loss. Meanwhile, when \smash{$\phi(x) = 2\sum_{i=1}^n x_i
  (\log x_i - 1)$}, it follows that
\[
D_\phi(a,b) = 2 \sum_{i=1}^n \bigg( a_i \log \frac{a_i}{b_i} + b_i - a_i \bigg),
\]
which is known as \emph{Poisson deviance}. We will take these to be the two loss
functions of primary interest in our work. Accordingly, we introduce the
notation for test error under squared loss and Poisson deviance: 
\begin{align}
\label{eq:err_sqr}
\Err^\sqr(g) &= \E \|\newY - g(Y)\|_2^2, \\ 
\label{eq:err_dev}
\Err^\dev(g) &= 2 \E \bigg[ \sum_{i=1}^n \bigg( \newY_i \log
  \frac{\newY_i}{g_i(Y)} + g_i(Y) - \newY_i \bigg) \bigg]. 
\end{align} 

Squared loss is a standard choice in many estimation and prediction problems and
does not really need further motivation. Poisson deviance can be motivated from
different perspectives; one nice perspective is that, if we parametrize $g_i(Y)
= \exp(\theta_i)$ for $i=1,\dots,n$, then fitting $g$ to minimize Poisson
deviance on the given data is equivalent to maximum likelihood in the Poisson
model,    
\[
\minimize_g \; 2 \bigg[ \sum_{i=1}^n \bigg( Y_i \log \frac{Y_i}{g_i(Y)} + g_i(Y)
- Y_i \bigg) \bigg] \iff 
\minimize_\theta \; \sum_{i=1}^n \Big( -Y_i \theta_i + \exp(\theta_i) \Big).  
\]
In the same vein, evaluating $g$ by Poisson deviance on \smash{$\newY$} is
equivalent to evaluating $g$ by Poisson likelihood on an independent copy of the 
training sample.       

In our view, squared loss and Poisson deviance are each important loss
functions, and are each deserving of study. This is only strengthened by the
fact that they can have very different behaviors in certain problem settings. As
a simple example, suppose $n=1$, and we have two scenarios: in the first
\smash{$\newY = 1$} and $g(Y) = 2$, while in the second \smash{$\newY = 500$}
and $g(Y) = 501$. The squared loss in each scenario is 1. However, the Poisson  
deviance in first scenario is $\approx 0.307$, and in the second scenario it is 
$\approx 0.001$. The difference here is driven by the fact that in the Poisson
model the variance scales with the mean. Hence according to Poisson deviance
(equivalent to Poisson likelihood), a prediction of 502 when the predictand is
501 is not nearly as bad as a prediction of 2 when the predictand is 1.

In Sections \ref{sec:experiments} and \ref{sec:applications}, we will present
and discuss several examples that expose differences in the behavior of squared
loss and Poisson deviance in different settings. That said, our primary focus is
on estimating test error defined with respect to these loss functions, and not
on comparing them. A comprehensive analysis of their differences is beyond the 
scope of the current paper.    

\subsection{Hudson's lemma}

A fundamental result in this area is \emph{Hudson's lemma}, due to
\citet{hudson1978natural}. Hudson actually derived two identities, one each for   
continuous and discrete exponential families. These can be viewed as extensions  
of Stein's celebrated identity \citep{stein1981estimation} for the Gaussian
family.\footnote{Stein's work was actually completed as a technical report in
  1973, and was a motivation for Hudson's work, even though the publication
  dates of their papers do not reflect this. According to Hudson, Stein already
  knew of the result in \eqref{eq:hudson}.} 
For concreteness, we state Hudson's result for the Poisson case. 

\begin{lemma}[\citealt{hudson1978natural}]
\label{lem:hudson}
Let $Y_i \sim \Pois(\mu_i)$, independently, for $i=1,\dots,n$. Let $g: \Z_+^n
\to \R^n$ be such that $\E|g_i(Y)| < \infty$, $i=1,\dots,n$. Then, denoting by
$e_i \in \R^n$ the vector whose $i\th$ entry is 1, with all others 0, 
\begin{equation}
\label{eq:hudson}
\mu_i \E[g_i(Y)] = \E[Y_ig_i(Y-e_i)], \quad i=1,\dots,n,
\end{equation}
where by convention we set $g_i(-1) = 0$, $i=1,\dots,n$.
\end{lemma}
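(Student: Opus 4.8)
The plan is to reduce \eqref{eq:hudson} to a one-dimensional identity for the Poisson law and then integrate out the remaining coordinates. Fix an index $i$ and condition on $Y_{-i} = (Y_j)_{j \neq i}$, which by \eqref{eq:data_model} is independent of $Y_i \sim \Pois(\mu_i)$. Expanding the conditional expectation as a series and using the elementary identity $\mu_i \cdot e^{-\mu_i}\mu_i^k/k! = (k+1)\cdot e^{-\mu_i}\mu_i^{k+1}/(k+1)!$, i.e.\ $k \, \P(Y_i = k) = \mu_i \, \P(Y_i = k-1)$,
\[
\mu_i \, \E[g_i(Y) \mid Y_{-i}] = \sum_{k=0}^\infty g_i(k, Y_{-i}) \, \mu_i \, \P(Y_i = k) = \sum_{k=0}^\infty (k+1) \, g_i(k, Y_{-i}) \, \P(Y_i = k+1).
\]

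Next I would reindex the sum by $j = k+1$, and observe that the $j = 0$ term vanishes (it carries the factor $j$), so it can be adjoined without changing anything --- this is precisely where the convention $g_i(-1) = 0$ enters, making $g_i(j-1, Y_{-i})$ well defined at $j=0$. This gives
\[
\mu_i \, \E[g_i(Y) \mid Y_{-i}] = \sum_{j=0}^\infty j \, g_i(j-1, Y_{-i}) \, \P(Y_i = j) = \E\big[ Y_i \, g_i(Y - e_i) \mid Y_{-i} \big].
\]
Taking expectations over $Y_{-i}$ and applying the tower property yields \eqref{eq:hudson}.

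The only point demanding care is the justification of the term-by-term reindexing and the interchange of expectation with the infinite series. I would dispatch this by first running the identical computation with $|g_i|$ in place of $g_i$: every term is then nonnegative, so Tonelli's theorem legitimizes all the manipulations and produces the identity $\E|Y_i \, g_i(Y - e_i)| = \mu_i \, \E|g_i(Y)|$, which is finite under the hypothesis $\E|g_i(Y)| < \infty$. With absolute convergence in hand, the same rearrangements are valid for $g_i$ itself by dominated convergence. I expect this integrability bookkeeping to be the main (and really the only) obstacle; the algebraic core is the single relation $k \, \P(Y_i = k) = \mu_i \, \P(Y_i = k-1)$, which plays the role for the Poisson family that Gaussian integration by parts plays in Stein's identity.
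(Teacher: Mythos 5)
Your proof is correct. The paper itself gives no proof of Lemma \ref{lem:hudson}---it is quoted from \citet{hudson1978natural} as a known result---so there is nothing to compare against, but your argument is the standard one: condition on $Y_{-i}$, use the Poisson recursion $k\,\P(Y_i=k)=\mu_i\,\P(Y_i=k-1)$ to shift the index, and justify the reindexing by first running the computation with $|g_i|$ so that Tonelli gives $\E|Y_i g_i(Y-e_i)| = \mu_i\,\E|g_i(Y)| < \infty$ under the stated hypothesis. The handling of the $j=0$ term via the convention $g_i(-1)=0$ and the integrability bookkeeping are both exactly right.
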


Compared to Stein's identity, which requires that $g$ is weakly differentiable,
Hudson's identity \eqref{eq:hudson} holds without any smoothness assumptions on
$g$ (of course, even formulating precisely what smoothness would mean over a
discrete domain like $\Z_n^+$ would be tricky, but the lack of assumptions
needed for Lemma \ref{lem:hudson} are remarkable nonetheless). Hudson's main
interest was in developing inadmissibility results for estimators of the
location parameter in an exponential family distribution. The identities 
he established were used as tools in his analysis, which parallels Stein's use 
of his own identity in \citet{stein1981estimation}. 

Moreover, analogous to what can be done with Stein's lemma, Hudson's lemma 
can be used to derived unbiased estimators for various risk metrics in
exponential families. An important contribution in this area is
\citet{eldar2009generalized}, and further contributions (along with a
comprehensive summary of available tools and results from the literature) are
given in \citet{deledalle2017estimation}.           

\subsection{Unbiased estimation}
\label{sec:ue}

Our focus in this paper is slightly unique, since we consider test error
\eqref{eq:test_error} as the primary target and not risk \eqref{eq:risk}, as
considered by \citet{eldar2009generalized, deledalle2017estimation}, and most
other authors in the literature. Nonetheless, the estimators developed by these
authors have natural analogues for test error. In fact, the story is for test
error is simpler, and an unbiased estimator can be obtained for any Bregman
divergence loss function.   

To see this, we first recall a general decomposition of test error for Bregman
divergence losses known as \emph{Efron's optimism theorem}, due to
\citet{efron1975defining, efron1986biased, efron2004estimation}: this shows that
for any Bregman divergence $D_\phi$ in \eqref{eq:bregman} and any algorithm $g$,
this difference in test error and training error satisfies
\begin{align}
\nonumber
\E[D_\phi(\newY, g(Y))] - \E[D_\phi(Y, g(Y))] 
&= \E[\phi(\newY)] - \E[\phi(g(Y))] - 
\E[\langle \nabla \phi(g(Y)), \newY-g(Y) \rangle] \\
\nonumber
&\qquad - \E[\phi(Y)] + \E[\phi(g(Y))] +
\E[\langle \nabla \phi(g(Y)), Y-g(Y) \rangle] \\
\label{eq:opt_bregman}
&= \E[\langle \nabla \phi(g(Y)), Y \rangle] - 
\E[\langle \nabla \phi(g(Y)), \mu \rangle].  
\end{align}
The second line follows from the fact that \smash{$\E[\phi(\newY)] = 
  \E[\phi(Y)]$}.\footnote{This exposes the reason why the analogous
  decomposition for risk can be more complex: when we replace \smash{$\newY$}
  with $\mu$ in the calculation that led to \eqref{eq:opt_bregman}, we are left
  with an extra term \smash{$\phi(\mu) - \E[\phi(Y)]$} that does not cancel and  
  must be estimated.}  
Simply rewriting the above, we see that if we are able to construct an unbiased
estimator $V(g)$ of \smash{$\E[\langle \nabla \phi(g(Y)), \mu \rangle]$} then  
\[
D_\phi(Y, g(Y)) + \langle \nabla \phi(g(Y)), Y \rangle - V(g)
\]
will be an unbiased estimator for $\E[D_\phi(\newY, g(Y))]$. In the Poisson
case, Hudson's identity \eqref{eq:hudson} precisely gives the unbiased
estimator $V(g)$ that we require, which leads to the following result.  

\begin{proposition}
\label{prop:ue_bregman}
Let $Y_i \sim \Pois(\mu_i)$, independently, for $i=1,\dots,n$. Let $g: \Z_+^n
\to \R^n$ be any algorithm and $D_\phi$ be any Bregman divergence loss function
(indexed by a convex, differentiable function $\phi : \R^n \to \R$) such that
$\E|\phi(g(Y))| < \infty$ and $\E|\nabla_i\phi(g(Y))| <\infty$, 
$i=1,\dots,n$. Then    
\begin{equation}
\label{eq:ue_bregman}
\UE(g) = D_\phi(Y, g(Y)) +  \langle \nabla \phi(g(Y)), Y \rangle - 
\langle \nabla \phi(g_-(Y)), Y \rangle
\end{equation}
is unbiased for \smash{$\Err(g) = \E[D_\phi(\newY, g_i(Y))]$}, where we
abbreviate $g_-(Y) = (g_1(Y-e_1), \dots, g_n(Y-e_n))$, and as usual,
\smash{$\newY$} denotes an independent copy of $Y$.     
\end{proposition}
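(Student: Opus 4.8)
The plan is to combine the two ingredients already recalled in Section~\ref{sec:ue}: Efron's optimism theorem \eqref{eq:opt_bregman} and Hudson's lemma (Lemma~\ref{lem:hudson}). First I would rearrange \eqref{eq:opt_bregman} into the form
\[
\Err(g) = \E[D_\phi(Y, g(Y))] + \E[\langle \nabla\phi(g(Y)), Y\rangle] - \E[\langle \nabla\phi(g(Y)), \mu\rangle].
\]
The first two expectations on the right coincide with the expectations of the first two terms in the definition \eqref{eq:ue_bregman} of $\UE(g)$, so the whole task reduces to showing that the remaining term $\langle \nabla\phi(g_-(Y)), Y\rangle$ in \eqref{eq:ue_bregman} is an unbiased estimator of the ``optimism penalty'' $\E[\langle \nabla\phi(g(Y)), \mu\rangle] = \sum_{i=1}^n \mu_i\,\E[\nabla_i\phi(g(Y))]$.

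The key step is to apply Hudson's identity \eqref{eq:hudson} coordinatewise, with the choice of function $h_i(y) = \nabla_i\phi(g(y))$; the hypothesis $\E|\nabla_i\phi(g(Y))| < \infty$ is precisely the integrability condition that Lemma~\ref{lem:hudson} requires. This gives $\mu_i\,\E[\nabla_i\phi(g(Y))] = \E[Y_i\,\nabla_i\phi(g(Y - e_i))]$ for each $i$, and summing over $i = 1,\dots,n$ yields
\[
\E[\langle \nabla\phi(g(Y)), \mu\rangle] = \sum_{i=1}^n \E[Y_i\,\nabla_i\phi(g(Y - e_i))] = \E[\langle \nabla\phi(g_-(Y)), Y\rangle],
\]
where in the last equality $\nabla\phi(g_-(Y))$ is read as the vector whose $i$th entry is $\nabla_i\phi$ evaluated at $g(Y - e_i)$ --- this matches the spliced vector $g_-(Y) = (g_1(Y - e_1),\dots,g_n(Y - e_n))$ exactly when $\phi$ is separable, which covers both loss functions of interest. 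Substituting back into the rearranged optimism identity gives $\E[\UE(g)] = \Err(g)$, as claimed. The boundary convention $g_i(-1) = 0$ is harmless: on the event $\{Y_i = 0\}$ the $i$th summand $Y_i\,\nabla_i\phi(g(Y - e_i))$ vanishes regardless of how $g$ is extended, in line with the convention already built into Lemma~\ref{lem:hudson}.

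I expect that the part needing the most care --- more a matter of bookkeeping than a genuine obstacle --- is verifying that all the expectations above are finite and well defined, so that the derivation of \eqref{eq:opt_bregman} and the subsequent subtraction and termwise use of Hudson's lemma are all legitimate. The hypotheses $\E|\phi(g(Y))| < \infty$ and $\E|\nabla_i\phi(g(Y))| < \infty$ handle the pieces involving $g(Y)$; the remaining pieces --- those involving $\phi(Y)$ and $\phi(\newY)$ (needed so that $\E[\phi(\newY)] = \E[\phi(Y)]$ in the proof of \eqref{eq:opt_bregman}) and the mixed term $\langle \nabla\phi(g(Y)), Y\rangle$ --- should be controlled using moment properties of the Poisson distribution together with convexity of $\phi$, and for the two concrete choices $\phi(x) = \|x\|_2^2$ and $\phi(x) = 2\sum_{i=1}^n x_i(\log x_i - 1)$ they can be checked by direct computation. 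A secondary point is that for Poisson deviance $\phi$ and $\nabla\phi$ are naturally defined only on the nonnegative orthant and $\nabla\phi$ blows up near its boundary; but since $g$ maps into $\R_+^n$ and the $(-1)$-convention is only invoked on the null event where the relevant $Y_i = 0$, every quantity in \eqref{eq:ue_bregman} is well defined and the argument goes through on the event where all relevant coordinates are strictly positive, its complement contributing nothing.
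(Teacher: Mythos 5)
Your proof is correct and follows essentially the same route as the paper: rearrange Efron's optimism identity \eqref{eq:opt_bregman} and apply Hudson's identity \eqref{eq:hudson} coordinatewise with $h_i(y)=\nabla_i\phi(g(y))$ to unbiasedly estimate the optimism penalty $\E[\langle\nabla\phi(g(Y)),\mu\rangle]$. Your side remark that identifying $\sum_{i=1}^n\E[Y_i\,\nabla_i\phi(g(Y-e_i))]$ with $\E[\langle\nabla\phi(g_-(Y)),Y\rangle]$ requires $\phi$ to be separable is a genuine subtlety that the paper glosses over, and it does hold for both loss functions of interest.
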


As a consequence, we have the following unbiased estimators for squared loss and
Poisson deviance: 
\begin{align}
\label{eq:ue_sqr}
\UE^\sqr(Y) &= \|Y\|_2^2 + \|g(Y)\|_2^2 - 2 \langle g_-(Y), Y \rangle, \\ 
\label{eq:ue_dev}
\UE^\dev(Y) &= 2 \sum_{i=1}^n \Big( Y_i \log Y_i - Y_i \log g_i(Y-e_i) +
  g_i(Y) - Y_i \Big).  
\end{align}
These are altogether highly similar to the unbiased risk estimators in
\citet{eldar2009generalized, deledalle2017estimation}, and to be clear, we do
not consider \eqref{eq:ue_sqr}, \eqref{eq:ue_dev} to be major (or even original) 
contributions of our work. That said, we have not yet seen the general unbiased
estimator for Bregman divergence \eqref{eq:ue_bregman} noted in the
literature, thus we believe it may be useful to record it (along with the
observation that estimation of test error can be easier than estimation of
risk).  

The estimators in \eqref{eq:ue_sqr}, \eqref{eq:ue_dev} have a clear strength: 
they are unbiased \emph{for any algorithm $g$}. This is a strong property;
recall that by comparison, in the Gaussian model, the analogous estimator is
Stein's unbiased risk estimator (SURE), which requires $g$ to be weakly
differentiable. The estimators in \eqref{eq:ue_sqr}, \eqref{eq:ue_dev} also have
a clear downside: they require the algorithm $g$ to be run $n+1$ times, once to
obtain the original fit $g(Y)$, and then $n$ more times to obtain $g_-(Y)$,
which recall has entries $g_i(Y-e_i)$, $i=1,\dots,n$. Thus we can liken  
\eqref{eq:ue_sqr}, \eqref{eq:ue_dev} to leave-one-out cross-validation, in terms
of computational cost.   

This draws a clear line of motivation to the main contribution of our paper: in
what follows, we develop an unbiased estimator of test error, for any Bregman
divergence loss function $D_\phi$ and any algorithm $g$, using a
carefully-crafted parametric bootstrap scheme. The computational cost (number of
runs of $g$) here is tied to a user-controlled parameter $B$, the number of
bootstrap samples. In general, increasing $B$ decreases the variance of the
estimator, but any choice of $B \geq 1$ yields an estimator that is unbiased for
the test error in a \emph{mean-shrunken} Poisson problem, with mean $(1-p) \mu$,
where $p>0$ is another user-controlled parameter. 

\subsection{Summary of contributions}

The following gives a summary of our main contributions and an outline for this 
paper. 

\begin{itemize}
\item In Section \ref{sec:cb}, we introduce the coupled bootstrap (CB)
  estimator, and prove that it is unbiased for the test error in a mean-shrunken
  Poisson problem.  

\item In Section \ref{sec:noiseless_limit}, we analyze the behavior of the 
  CB estimator as $B \to \infty$ and $p \to 0$, and prove that the limiting
  CB estimator is exactly the unbiased estimator \eqref{eq:ue_bregman} from
  Hudson's lemma. 

\item In Section \ref{sec:bias_variance}, we study the bias and variance of
  the CB estimator and quantify how they depend on $B,p$ and other problem
  parameters. 

\item In Section \ref{sec:experiments}, we compare the CB and the unbiased
  estimator on various simulated data sets, and find that the performance of the
  CB estimator is favorable, especially when the algorithm $g$ is unstable. 

\item In Section \ref{sec:applications}, we examine the use of the CB estimator 
  for model tuning---selecting from a family $g_s$, $s \in S$ of algorithms---in 
  two applications: image denoising and density estimation. We find that using
  Poisson deviance (to define the test error metric) consistently delivers more
  regularized models than using squared loss.     

\item In Section \ref{sec:discussion}, we conclude with a brief discussion and
  ideas for future work.
\end{itemize}

\subsection{Related work}

Estimating risk and test error is of central importance in statistics and
machine learning. In the random-X prediction setting (which recall does not 
fit in the framework of our work) the most ubiquitous estimator is arguably
cross-validation, which itself carries a long line of literature. We do not
describe this literature here, but highlight \citet{bates2021cross} as a nice
recent paper that carefully reexamines this classic estimator, and also provides
a nice overview of literature on cross-validation. 

In the fixed-X prediction setting---or in general, parametric many means   
problems---there has also been a long history of work in statistics, with   
\citet{akaike1973information, mallows1973comments, efron1975defining,
  stein1981estimation, efron1986biased} marking early important
contributions. This has been particularly well-studied in the Gaussian means
problem, and in this area, we draw attention to \citet{breiman1992little,
  ye1998measuring, efron2004estimation}, and particularly to
\citet{oliveira2021unbiased}, as motivation for our current work in the Poisson
means problem. These papers use auxiliary noise---they inject synthetic
(Gaussian) noise into the data at hand---in order to estimate the risk or
fixed-X prediction error of an arbitrary algorithm $g$. Our previous work,
\citet{oliveira2021unbiased}, proposes a coupled bootstrap (CB) scheme for 
doing so that has a simple, intuitive target of estimation for any auxiliary
noise level. In particular, for any auxiliary noise level $\alpha>0$ (a
user-controlled parameter), the CB method produces an unbiased estimator for 
the risk in a Gaussian means problem that has an inflated noise variance
$(1+\alpha) \sigma^2$ (where $\sigma^2$ denotes the original noise level). The 
current paper builds off this idea, and develops a coupled boostrap scheme in
the Poisson model that enjoys analogous properties.      

Relative to the Gaussian case, risk and test error estimation in the Poisson
means model has been less well-studied. However, there has still certainly been
important and influential work in the area. This includes
\citet{hudson1978natural}, as already described in the introduction, and also
\citet{Ye2004, eldar2009generalized, deledalle2017estimation}. Meanwhile,
the literature on \emph{mean estimation}---the role played by what we are
calling the algorithm $g$---in the Poisson many means problem is vast. Quite a
lot of work on this topic has been done in the signal processing community,
where it is often called Poisson denoising; see, e.g., \citet{Harmany2009,
  Luisier2010, Raginsky2010, Harmany2012, Salmon2014, Cao2016}. Therefore we
believe that the techniques we develop for estimating test error in the Poisson
means model should have widespread practical applications, in signal processing
and elsewhere.    

Lastly, we mention a concurrent, related line of work on auxiliary randomization
approaches that allow for rigorous post-selection inference in parametric many
means models, including the Poisson means model. We highlight
\citet{Leiner2021, Neufeld2023} as two nice recent papers in the area. In
particular, a core piece of the auxiliary randomization procedure in our work
was directly inspired by the former paper, and their use of binomial auxiliary
noise in the Poisson model.   

\section{Coupled boostrap estimator}
\label{sec:cb}

In this section, we introduce the CB estimator in the Poisson means model, and
investigate some of its basic properties. 

\subsection{Proposed estimator}

The following is a simple but key ``three-point'' formula for expected Bregman
divergence loss from \citet{oliveira2021unbiased} that will drive our main
proposal in this paper.

\begin{proposition}
\label{prop:three_point}
Let $U,V,W \in \R^n$ be independent random vectors.  For any $g$, and 
Bregman divergence $D_\phi$, 
\begin{equation}
\label{eq:three_point1}
\E[ D_\phi(V, g(U)) ] - \E[ D_\phi(W, g(U)) ] = \E[\phi(V)] - \E[\phi(W)] +
\langle \E[\nabla \phi(U)], \E[W] - \E[V] \rangle,
\end{equation}
assuming all expectations exist and are finite. In particular, if $U,V$ are
i.i.d.\ and $\E[U] = \E[W]$, then
\begin{equation}
\label{eq:three_point2}
\E[ D_\phi(V, g(U)) ]  = \E[ D_\phi(W, g(U)) ]  + \E[\phi(U)] - \E[\phi(W)] . 
\end{equation}
\end{proposition}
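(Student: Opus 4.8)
The plan is to derive \eqref{eq:three_point1} by writing out both Bregman divergences from the definition \eqref{eq:bregman} and exploiting that $U$ is independent of $V$ and of $W$; the special case \eqref{eq:three_point2} then follows immediately by substitution.

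First I would expand $D_\phi(V, g(U)) = \phi(V) - \phi(g(U)) - \langle \nabla\phi(g(U)), V - g(U)\rangle$, together with the same identity with $V$ replaced by $W$. Taking expectations of each (all of which are assumed finite) and using independence of $U$ from $V$ to factor the mixed term, $\E[\langle \nabla\phi(g(U)), V\rangle] = \langle \E[\nabla\phi(g(U))], \E[V]\rangle$, and similarly for $W$, produces two expressions that share the $V,W$-free terms $\E[\phi(g(U))]$ and $\E[\langle \nabla\phi(g(U)), g(U)\rangle]$.

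Subtracting the two expressions, those shared terms cancel and one is left with $\E[D_\phi(V, g(U))] - \E[D_\phi(W, g(U))] = \E[\phi(V)] - \E[\phi(W)] + \langle \E[\nabla\phi(g(U))], \E[W] - \E[V]\rangle$, which is \eqref{eq:three_point1}. For \eqref{eq:three_point2}, assume $U,V$ are i.i.d.\ and $\E[U] = \E[W]$: then $\E[\phi(V)] = \E[\phi(U)]$ and $\E[V] = \E[U] = \E[W]$, so the inner-product term on the right-hand side of \eqref{eq:three_point1} vanishes, and rearranging yields $\E[D_\phi(V, g(U))] = \E[D_\phi(W, g(U))] + \E[\phi(U)] - \E[\phi(W)]$.

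The one place that deserves care is the factorization $\E[\langle \nabla\phi(g(U)), V\rangle] = \langle \E[\nabla\phi(g(U))], \E[V]\rangle$: this is Fubini's theorem applied to the product law of the independent pair $(U,V)$, and it is licensed by the standing hypothesis that all the expectations appearing exist and are finite (so in particular $\E\|\nabla\phi(g(U))\|$ and $\E\|V\|$ are finite). Apart from this bookkeeping the argument is pure linearity of expectation, so I do not anticipate any genuine obstacle.
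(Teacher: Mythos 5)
Your proof is correct and is exactly the argument the paper intends: expand both Bregman divergences from \eqref{eq:bregman}, use independence of $U$ from $V$ and $W$ to factor the cross terms, cancel the shared $\E[\phi(g(U))]$ and $\E[\langle\nabla\phi(g(U)),g(U)\rangle]$ terms, and specialize using $\E[\phi(V)]=\E[\phi(U)]$ and $\E[V]=\E[W]$. Note that your derivation yields $\langle \E[\nabla\phi(g(U))], \E[W]-\E[V]\rangle$ rather than the $\langle \E[\nabla\phi(U)], \E[W]-\E[V]\rangle$ printed in \eqref{eq:three_point1}; yours is what the definition actually gives (the printed version appears to be a typo), and the discrepancy is immaterial for \eqref{eq:three_point2} since that term vanishes there.
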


\begin{proof}
The first statement \eqref{eq:three_point1} follows from the definition of
Bregman divergence \eqref{eq:bregman}, and the independence of $U,V,W$. The  
second result \eqref{eq:three_point2} follows from the first, by noting that
if $U,V$ are i.i.d.\ and $\E[U] = \E[W]$ then $\E[V] = \E[W]$, thus the last
term on the right-hand side in \eqref{eq:three_point1} is zero, and the first
term is $\E[\phi(U)]$. 
\end{proof}

While simple to state and prove, the results in Proposition
\ref{prop:three_point} are useful observations. To map them onto to the problem
of estimating of test error in the  Poisson model, consider the following. Given 
a Poisson data vector $Y$ from \eqref{eq:data_model}, suppose that we can
generate a pair of vectors \smash{$(U,W)=(Y^*, Y^\dagger)$} that are independent
of each other and have the same mean. Then \eqref{eq:three_point2} says that   
\begin{equation}
\label{eq:cb_idea}
\text{$D_\phi(Y^\dagger, g(Y^*)) + \phi(Y^*) - \phi(Y^\dagger)$ \; is unbiased
  for \; $\E[ D_\phi(\newY^*, g(Y^*)) ]$}, 
\end{equation}
where \smash{$\newY^*$} is an independent copy of $Y^*$. In other words, the
above constructs an unbiased esitmator for the test error in a problem in which
the original data vector was $Y^*$, rather than $Y$. Thus if \emph{$Y^*$ was
  ``close'' in distribution to $Y$}, then this estimator would be meaningful. 
(Ideally, we would like $Y^*$ to be be identical in distribution to $Y$, but
that will not be generically possible without knowledge of $\mu$.)     

What remains is a precise scheme in the Poisson setting to generate the pair
\smash{$(Y^*, Y^\dagger)$} from $Y$ such that \smash{$Y^*, Y^\dagger$} are
independent, \smash{$\E[Y^*] = \E[Y^\dagger]$}, and $Y^*,Y$ are ``close'' in
distribution. The next lemma does the trick and fulfills these three properties
precisely. It was brought to our attention by \citet{Leiner2021} who used it in
a distinct but generally related post-selection inference context. For
completeness, we provide a proof in Appendix \ref{app:pois_binom}. Here and
henceforth we use the following abbreviations: we write $Y \sim \Pois(\mu)$ to 
mean that we draw $Y_i \sim \Pois(\mu_i)$, independently, for $i=1,\dots,n$, and
similarly $Z \sim \Binom(N,  p)$ to mean that we draw $Z_i \sim \Binom(N_i, p)$,
independently, for $i=1,\dots,n$.  

\begin{lemma}
\label{lem:pois_binom}
Given $Y \sim \Pois(\mu)$, fix any $0 < p < 1$ and let $\omega \,|\, Y \sim
\Binom(Y, p)$. Then, defining $Y^* = Y - \omega$ and $Y^\dagger = (1-p)/p \cdot
\omega$, 
it holds that: 
\begin{enumerate}[(i)]
\item $Y^*, Y^\dagger$ are independent;
\item $\E[Y^*] = \E[Y^\dagger]$; and 
\item $Y^* \sim \Pois((1-p) \mu)$.
\end{enumerate}
\end{lemma}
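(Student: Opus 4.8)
The plan is to reduce all three claims to the classical \emph{Poisson thinning} identity, which itself falls out of a one-line computation of a joint probability mass function. Since all the randomness is independent across the coordinates $i=1,\dots,n$ (the $Y_i$ are independent, and given $Y$ the $\omega_i$ are drawn independently), it suffices to prove the three properties for a single coordinate and then appeal to independence across coordinates. So fix $i$ and, abusing notation, write $Y,\omega,\mu$ for $Y_i,\omega_i,\mu_i$, and $Y^*=Y-\omega$, $Y^\dagger=\frac{1-p}{p}\omega$.

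First I would compute the joint pmf of the pair $(Y^*,\omega)=(Y-\omega,\omega)$. For integers $y^*\ge 0$ and $w\ge 0$, conditioning on $Y$ and using $Y=y^*+w$ gives
\[
\P(Y^*=y^*,\ \omega=w) = \P(Y=y^*+w)\,\P(\omega=w\mid Y=y^*+w) = e^{-\mu}\frac{\mu^{y^*+w}}{(y^*+w)!}\binom{y^*+w}{w}p^w(1-p)^{y^*}.
\]
The binomial coefficient cancels the factorial $(y^*+w)!$; then splitting $e^{-\mu}=e^{-(1-p)\mu}e^{-p\mu}$ and $\mu^{y^*+w}=\mu^{y^*}\mu^{w}$ factors the right-hand side as
\[
\left(e^{-(1-p)\mu}\frac{((1-p)\mu)^{y^*}}{y^*!}\right)\left(e^{-p\mu}\frac{(p\mu)^{w}}{w!}\right).
\]
The first factor depends only on $y^*$ and the second only on $w$, so this product form simultaneously yields that $Y^*$ and $\omega$ are independent, that $Y^*\sim\Pois((1-p)\mu)$ (claim (iii)), and that $\omega\sim\Pois(p\mu)$.

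With this in hand the remaining claims are immediate. For (i), $Y^\dagger=\frac{1-p}{p}\omega$ is a deterministic (measurable) function of $\omega$ alone, so independence of $Y^*$ and $\omega$ passes to independence of $Y^*$ and $Y^\dagger$; reinstating the coordinate index, the vectors $Y^*$ and $Y^\dagger$ are independent because the pairs $(Y_i^*,Y_i^\dagger)$ are mutually independent across $i$. For (ii), the cleanest route is the tower property: $\E[\omega]=\E[\E[\omega\mid Y]]=\E[pY]=p\mu$, hence $\E[Y^*]=\mu-p\mu=(1-p)\mu$ and $\E[Y^\dagger]=\frac{1-p}{p}\E[\omega]=(1-p)\mu$, so the two means agree (this also re-derives (ii) without invoking the full thinning computation).

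I do not anticipate a genuine obstacle here; the only point needing care is the change of summation variable from $y$ to $(y^*,w)$ — one should note that as $(y,w)$ ranges over $\{(y,w):0\le w\le y\}$ the pair $(y-w,w)$ ranges over all of $\Z_+^2$, so the factored expression is a bona fide product of two Poisson pmfs and no boundary mass is lost — together with routine manipulation of the factorials and exponentials.
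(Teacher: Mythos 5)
Your proposal is correct and follows essentially the same route as the paper's proof in Appendix \ref{app:pois_binom}: compute the joint pmf of $(Y-\omega,\omega)$, observe that it factors as a product of $\Pois((1-p)\mu)$ and $\Pois(p\mu)$ pmfs, and read off independence and the marginals, with (ii) then following from the two means (or, as you note, directly from the tower property). The only additions you make beyond the paper's argument are the explicit per-coordinate reduction and the remark that $Y^\dagger$ is a deterministic function of $\omega$, both of which the paper leaves implicit.
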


Lemma \ref{lem:pois_binom}, combined with the observation in \eqref{eq:cb_idea},
forms the basis for the CB test error estimator in the Poisson many means
problem. To stabilize the estimator, we can simply repeat the draws of binomial
noise from Lemma \ref{lem:pois_binom} over independent repetitions
$b=1,\dots,B$. To be concrete, this leads to the following method: we first
generate samples according to
\begin{equation}
\label{eq:cb_noise}
\begin{gathered}
\text{$\omega^b \,|\, Y \sim \Binom(Y,p)$, \; 
independently, \; for $b=1,\dots,B$}, \\  
Y^\pb = Y - \omega^b, \quad, 
Y^\mb = \frac{1-p}{p} \omega^b, \quad \text{for $b=1,\dots,B$},
\end{gathered}
\end{equation}
for an arbitrary binomial success probability $0 < p < 1$, and a number of
bootstrap draws $B \geq 1$; then we define the \emph{coupled bootstrap} (CB)
estimator, for test error under Bregman divergence loss $D_\phi$, by: 
\begin{equation}
\label{eq:cb_bregman}
\CB_p(g) = \frac{1}{B} \sum_{b=1}^B \Big( D_\phi(Y^\mb, g(Y^\pb)) + \phi(Y^\pb)
- \phi(Y^\mb) \Big).
\end{equation} 
We can view each \smash{$Y^\pb$} as a synthetic training set for $g$, and each
\smash{$Y^\mb$} as a synthetic test set. The correction term \smash{$\phi(Y^\pb) 
- \phi(Y^\mb)$} accounts for the fact that \smash{$Y^\pb,Y^\mb$} do not have the
same distribution (though recall they do have the same mean, by construction).

For the two loss functions of primary interest, squared loss and Poisson
deviance, the CB estimator in \eqref{eq:cb_bregman} becomes, respectively: 
\begin{align}
\label{eq:cb_sqr}
\CB_p^\sqr &= \frac{1}{B} \sum_{b=1}^B \Big( \|Y^\mb - g(Y^\pb)\|_2^2 +  
 \|Y^\pb\|^2_2 - \|Y^\mb\|_2^2) \Big), \\  
\label{eq:cb_dev}
\CB_p^\dev &= \frac{2}{B} \sum_{b=1}^B \sum_{i=1}^n \Big( Y^\pb_i \log Y^\pb_i
  - Y^\mb_i \log g_i(Y^\pb) + g_i(Y^\pb)- Y^\mb_i \Big).
\end{align}

\paragraph{Interlude: special care with deviance estimators.}

We take a brief but practically important detour to note that special care must
be taken with test error estimators with respect to Poisson deviance loss. In 
this case, each of the unbiased \eqref{eq:ue_dev} and the coupled bootstrap  
\eqref{eq:cb_dev} estimators can diverge if the coordinate functions of $g$   
can output zero. For the unbiased estimator this occurs when $Y_i \not= 0$ and
$g_i(Y-e_i) = 0$; for the coupled bootstrap estimator this occurs when
\smash{$Y^\mb_i \not= 0$} and \smash{$g_i(Y^\pb) = 0$}. As a safety mechanism,
we can simply pad the output of $g$ so that zero is never in the range of its
coordinate functions: say, we can define a modified algorithm 
\[
\tilde{g}_i(y) = g_i(y) 1\{ g_i(y) \not= 0\} + c 1 \{g_i(y) = 0\}, \quad
i=1,\dots,n, 
\]
for a small constant $c>0$. This is reasonable because even the population
Poisson deviance \eqref{eq:err_dev} can itself diverge when the coordinate
functions of $g$ can output zero. With a modified rule like the one above, we
may ask how frequently the padding is actually in effect in the computation of
the estimators \eqref{eq:ue_dev} and \eqref{eq:cb_dev}. We study this in
Appendix \ref{app:diverging_dev} and show that, in a sense, it is typically in
effect less frequently in the CB estimator \eqref{eq:cb_dev} than in the
unbiased one \eqref{eq:ue_dev}.  

\subsection{Unbiasedness for mean-shrunken target}

The next result is immediate from Lemma \ref{lem:pois_binom} and
\eqref{eq:cb_idea}.

\begin{corollary}
\label{cor:cb_bregman}
Let $Y \sim \Pois(\mu)$. Let $g: \Z_+^n \to \R^n$ be any algorithm, let $D_\phi$  
be any Bregman divergence loss function, and let $0 < p < 1$ and $B \geq 1$ be 
arbitrary. Then the CB estimator $\CB_p(g)$ in \eqref{eq:cb_bregman} is 
unbiased for $\Err_p(g)$ (assuming all terms in \eqref{eq:cb_bregman} have
finite expectations), where $\Err_p(g)$ is the test error of $g$ with respect to
a mean-shrunken Poisson problem:    
\begin{equation}
\label{eq:test_error_p}
\text{$\Err_p(g) = \E[ D_\phi(\newY_p, g(Y_p)) ]$, \; where $Y_p,\newY_p \sim
  \Pois((1-p) \mu)$, and $Y_p,\newY_p$ are independent.}  
\end{equation}
\end{corollary}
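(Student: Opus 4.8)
The plan is to reduce the claim to a single bootstrap replicate and then invoke the three-point formula of Proposition \ref{prop:three_point} together with the distributional facts of Lemma \ref{lem:pois_binom}. By linearity of expectation,
$\E[\CB_p(g)] = \frac{1}{B}\sum_{b=1}^B \E\big[D_\phi(Y^\mb, g(Y^\pb)) + \phi(Y^\pb) - \phi(Y^\mb)\big]$,
so it suffices to show that each summand has expectation $\Err_p(g)$. Note that the replicates $b=1,\dots,B$ need not be independent for this step; only linearity is used (their independence will matter solely for the variance analysis in Section \ref{sec:bias_variance}).

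Fixing $b$, I would apply Lemma \ref{lem:pois_binom} to the pair $(U,W) = (Y^\pb, Y^\mb)$: part (i) gives $U \indep W$, part (ii) gives $\E[U] = \E[W]$, and part (iii) gives $U \sim \Pois((1-p)\mu)$ (one should also observe here that $Y^\pb_i = Y_i - \omega^b_i \in \Z_+$, so $g$ is legitimately applied to $Y^\pb$). Now introduce a fictitious vector $V$ that is an independent copy of $U$; it need not actually be generated, and only serves to name the estimand. Then $U,V,W$ are mutually independent, $U,V$ are i.i.d., and $\E[U]=\E[W]$, so the hypotheses of the second part of Proposition \ref{prop:three_point} hold. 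Rearranging \eqref{eq:three_point2} yields $\E\big[D_\phi(W, g(U)) + \phi(U) - \phi(W)\big] = \E[D_\phi(V, g(U))]$, and since $V \sim \Pois((1-p)\mu)$ independently of $U$, the right-hand side is exactly $\Err_p(g)$ as defined in \eqref{eq:test_error_p} (with $U$ playing the role of $Y_p$ and $V$ the role of $\newY_p$). This is precisely the informal content of \eqref{eq:cb_idea}. Substituting back and averaging over $b$ gives $\E[\CB_p(g)] = \Err_p(g)$.

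The only points requiring care are bookkeeping rather than genuine obstacles: first, one must check that the integrability hypotheses stated in the corollary are exactly what is needed to apply Proposition \ref{prop:three_point}, namely that $\E|\phi(Y^\pb)|$, $\E|\phi(Y^\mb)|$, and $\E|D_\phi(Y^\mb, g(Y^\pb))|$ are finite (finiteness of $\E|\phi(V)|$ then follows since $V$ is equal in distribution to $Y^\pb$), and for the deviance loss this is where the padding discussed in the Interlude is relevant; second, one must keep straight that $V$ is a purely notational device used to express the target, not part of the sampling scheme. With these in hand, the corollary is indeed an immediate consequence of Lemma \ref{lem:pois_binom} and Proposition \ref{prop:three_point}.
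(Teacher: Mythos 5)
Your proof is correct and follows exactly the paper's intended route: the paper declares the corollary ``immediate from Lemma \ref{lem:pois_binom} and \eqref{eq:cb_idea},'' and your argument simply unpacks that chain---linearity over the $B$ replicates, Lemma \ref{lem:pois_binom} for the independence, equal-means, and mean-shrunken Poisson properties of $(Y^\pb, Y^\mb)$, and the three-point formula \eqref{eq:three_point2} applied with a fictitious independent copy to identify the estimand as $\Err_p(g)$. Your added care about integrability and about $V$ being purely notational is sound bookkeeping but does not change the approach.
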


The strength of Corollary \ref{cor:cb_bregman} rests on the fact that the
estimand in \eqref{eq:cb_bregman} of $\CB_p(g)$ for any choice of $p>0$ is  
highly intuitive: it is $\Err_p(g)$ in \eqref{eq:test_error_p}, which is the
test error that we would encounter in a slightly harder version of our original
problem, where the mean $\mu$ has been replaced by $(1-p) \mu$. 

Why is this important? It means that we do \emph{not} have to send $p \to 0$ in
order to be able to interpret the estimand of the CB estimator, and thus justify
its use. Any nonzero (noninfinitesimal) $p$ will still result in a target that
has a clear, intuitive meaning. This is good news for the CB estimator, because
when $p$ is away from zero, we can generally choose a reasonably small number of
bootstrap draws $B$ in order to stabilize the variance of the estimator, which
presents a computational advantage over the unbiased estimator in
\eqref{eq:ue_bregman}. We will learn more about the behavior of the CB
estimator, as we vary $p$ and $B$, in Sections \ref{sec:bias_variance} and
\ref{sec:experiments} (where we formally analyze the bias and variance, and
carry out empirical comparisons, respectively).

\subsection{Smoothness of mean-shrunken target}

Now that we have shown that $\CB_p(g)$ is unbiased for $\Err_p(g)$, it is
natural to ask whether $\Err_p(g)$ will be close to $\Err(g)$ for small $p$. Our
next result gives a partial answer by proving that if $g$ satisfies some mild 
moment conditions, then the map $p \mapsto \Err_p(g)$ will be continuous (and in 
fact, it can be continuously differentiable, depending on the number of moments  
assumed) in an interval containing $p=0$. Later on, in Section
\ref{sec:bias_variance}, we will derive results that give a more quantitative
sense of how close $\Err_p(g)$ can be to $\Err(g)$.

\begin{proposition}
\label{prop:test_error_p_smoothness}
For $0 \leq p < 1$, let $\Err_p(g)$ be as defined in \eqref{eq:test_error_p}.
If for some integer $k \geq 0$,    
\[
\E\big[ D_\phi(\newY, g(Y)) \langle \newY + Y, 1_n \rangle^m \big] < \infty,
\quad m = 0,\dots,k,
\]
where in the above above \smash{$Y,\newY \sim \Pois(\mu)$} are independent, and
$1_n \in \R^n$ denotes the vector of all 1s, then the map $p \mapsto \Err_p(g)$
has $k$ continuous derivatives on $[0,1)$.
\end{proposition}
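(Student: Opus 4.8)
The plan is to write $\Err_p(g)$ as an explicit power-series-type expression in $q:=1-p$ with nonnegative coefficients, and then invoke the classical term-by-term differentiation theorem, using the hypothesized moment bounds to certify \emph{uniform} convergence of each differentiated series on the closed interval $[0,1]$.

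First I would expand the expectation as a double sum over $\Z_+^n\times\Z_+^n$. Because $D_\phi\ge 0$ for every Bregman divergence (first-order convexity of $\phi$), Tonelli's theorem justifies freely rearranging the sum that defines $\Err_p(g)$. With $Y_p,\newY_p\sim\Pois(q\mu)$ independent, writing $|\mu|=\langle\mu,1_n\rangle$ and $s=\langle y+\newy,1_n\rangle$, collecting the powers of $q$ and the Poisson normalizing constants gives
\[
\Err_p(g)\;=\;e^{-2q|\mu|}\,G(q),\qquad
G(q):=\sum_{s=0}^{\infty}b_s\,q^s,\qquad
b_s:=\sum_{\langle y+\newy,1_n\rangle=s}D_\phi(\newy,g(y))\prod_{i=1}^{n}\frac{\mu_i^{\,y_i+\newy_i}}{y_i!\,\newy_i!}\;\ge\;0 .
\]
Running the same computation with the \emph{unshrunken} law $Y,\newY\sim\Pois(\mu)$ and the extra factor $\langle\newY+Y,1_n\rangle^{m}$ shows $\E\big[D_\phi(\newY,g(Y))\langle\newY+Y,1_n\rangle^{m}\big]=e^{-2|\mu|}\sum_{s\ge0}s^{m}b_s$. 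Hence the hypothesis of the proposition is exactly the statement that $\sum_{s\ge0}s^{m}b_s<\infty$ for $m=0,\dots,k$.

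Next I would carry out the differentiation argument for $q\in[0,1]$, which covers $p\in[0,1)$ via $q=1-p$. Set $g_s(q)=b_sq^s$, so that for $0\le j\le k$ we have $g_s^{(j)}(q)=b_s\frac{s!}{(s-j)!}\,q^{s-j}$ when $s\ge j$ and $g_s^{(j)}\equiv0$ otherwise; since $\frac{s!}{(s-j)!}\le s^{j}$ and $q^{s-j}\le1$ on $[0,1]$, this yields the uniform bound $|g_s^{(j)}(q)|\le b_s s^{j}\le b_s s^{k}$. As $\sum_s b_s s^{k}<\infty$, the Weierstrass $M$-test gives uniform convergence on $[0,1]$ of $\sum_s g_s^{(j)}$ for each $j=0,\dots,k$. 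Applying the term-by-term differentiation theorem inductively then gives $G\in C^{k}([0,1])$ with $G^{(j)}=\sum_s g_s^{(j)}$, each a uniform limit of polynomials and hence continuous. Finally $q\mapsto e^{-2q|\mu|}$ is $C^{\infty}$, so by the Leibniz rule $q\mapsto e^{-2q|\mu|}G(q)\in C^{k}([0,1])$, and composing with the affine map $p\mapsto1-p$ shows $p\mapsto\Err_p(g)$ has $k$ continuous derivatives on $[0,1)$.

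The only genuine obstacle is behavior at the endpoint $q=1$, i.e.\ $p=0$: on the open interval $G$ is a power series of radius of convergence at least $1$ (by the $m=0$ hypothesis) and is therefore automatically $C^{\infty}$ there, but differentiability \emph{up to and including} the endpoint is precisely what the higher moment bounds purchase, through the uniform majorant $b_s s^{k}$ that is valid on all of $[0,1]$. The remaining steps---justifying the Tonelli rearrangement via $D_\phi\ge0$ and identifying the $b_s$ with the moment expansion---are routine.
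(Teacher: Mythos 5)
Your proof is correct, and while it starts from the same place as the paper's---expanding $\Err_p(g)$ as a double sum over $\Z_+^n\times\Z_+^n$ against the Poisson pmf---the analytic engine is genuinely different. The paper keeps the exponential factor inside each summand and differentiates under the sum via dominated convergence, invoking a Lipschitz-in-$p$ bound on the summands to justify the interchange; it then iterates, each time re-applying its continuity lemma to the function $D_\phi(\newY_p,g(Y_p))\langle\newY_p+Y_p,1_n\rangle^m$. You instead factor out $e^{-2q|\mu|}$ and observe that what remains is a power series $G(q)=\sum_s b_s q^s$ with \emph{nonnegative} coefficients (using $D_\phi\ge 0$ and Tonelli), and that the moment hypothesis is precisely $\sum_s s^m b_s<\infty$; the Weierstrass $M$-test then gives uniform convergence of each differentiated series on the closed interval $[0,1]$, which is exactly what is needed for $C^k$ regularity up to the endpoint $q=1$ (i.e.\ $p=0$). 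Your route is arguably tighter: the paper's DCT step for the derivative is stated somewhat tersely (one really needs a summable majorant for the difference quotients, which your $b_s s^k$ bound supplies explicitly), whereas your argument reduces everything to the classical term-by-term differentiation theorem. What the paper's approach buys in exchange is an explicit formula, $\partial_p\Err_p(g)=2\sum_i\mu_i\,\Err_p(g)-\tfrac{1}{1-p}\E[D_\phi(\newY_p,g(Y_p))\langle\newY_p+Y_p,1_n\rangle]$, which is reused verbatim in the proof of the bias bound (Proposition \ref{prop:cb_bias}); your proof does not produce this identity directly, though it is recoverable by differentiating $e^{-2q|\mu|}G(q)$ and reassembling the sums. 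The only cosmetic blemish is the majorant $b_s s^j\le b_s s^k$ at $s=0$ when $j=0<k$; replacing it by $b_s(1+s^k)$, which is still summable under your hypotheses, fixes this trivially.
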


The proof of this result is not difficult but a bit technical and deferred to
Appendix \ref{app:test_error_p_smoothness}. We remark that when $k=0$, the
assumption in Proposition \ref{prop:test_error_p_smoothness} is simply
\smash{$\Err(g) = \E[ D_\phi(\newY, g(Y)) ] < \infty$} (i.e., the original test
error is finite), which is extremely weak, and even in this case we get that
$\Err_p(g) \to \Err(g)$ as $p \to 0$.  

\section{Noiseless limit}
\label{sec:noiseless_limit}

In this section, we consider the \emph{infinite-bootstrap} version of the CB
estimator, \smash{$\CB_p^\infty(g) = \lim_{B \to \infty} \CB_p(g)$}. By the law
of large numbers, this estimator is equivalent to taking the expectation over
the binomial noise,   
\begin{equation}
\label{eq:cb_bregman_inf}
\CB_p^\infty(g) = \E[\CB_p(g) \,|\, Y] = 
\E\bigg[ D_\phi\bigg( \frac{1-p}{p} \omega, g(Y-\omega) \bigg) + \phi(Y-\omega)  
- \phi\bigg( \frac{1-p}{p} \omega \bigg) \bigg],
\end{equation}
where $\omega \,| \, Y \sim \Binom(Y,p)$, as in \eqref{eq:cb_noise}.

The next result considers the \emph{noiseless limit} of the infinite-bootstrap
version of the CB estimator \eqref{eq:cb_bregman_inf}, where $p \to 0$. Its
proof is deferred until Appendix \ref{app:noiseless_limit}.

\begin{theorem}
\label{thm:noiseless_limit}
Let $Y \sim \Pois(\mu)$. Let $g: \Z_+^n \to \R^n$ be any algorithm, let
$D_\phi$ be any Bregman divergence loss function, and assume that
$|\phi_i(g(Y))| < \infty$, $|\nabla_i\phi(g(Y))| < \infty$, and
$|\nabla_i\phi(g(Y-e_i))|< \infty$ almost surely, for each $i=1,\dots,n$. Then    
\begin{equation}
\label{eq:cb_noiseless}
\lim_{p \to 0} \CB_p^\infty(g) = \UE(g), \quad \text{almost surely},
\end{equation}
where $\UE(g)$ is the unbiased estimator defined in \eqref{eq:ue_bregman}.
Thus as a consequence, the noiseless limit of \smash{$\CB_p^\infty(g)$} is
unbiased for $\Err(g)$.   
\end{theorem}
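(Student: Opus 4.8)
The plan is to compute the limit pointwise in $Y$: fix a realization of $Y$ for which the stated finiteness conditions hold (an almost-sure event) and evaluate the conditional expectation in \eqref{eq:cb_bregman_inf} as $p\to0$. The observation that makes this tractable is that, given $Y$, the binomial vector $\omega\,|\,Y\sim\Binom(Y,p)$ takes values in the finite set $\prod_{i=1}^n\{0,\dots,Y_i\}$, so $\CB_p^\infty(g)$ is a finite sum $\sum_v \P(\omega=v\,|\,Y)\,T_p(v)$ over the possible values $v$ of $\omega$, where $T_p(v)=D_\phi(\tfrac{1-p}{p}v,g(Y-v))+\phi(Y-v)-\phi(\tfrac{1-p}{p}v)$. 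As $p\to0$, the probabilities behave as $\P(\omega=0\,|\,Y)=(1-p)^{N}\to1$ (with $N=\langle Y,1_n\rangle$), $\P(\omega=e_j\,|\,Y)=Y_j p(1-p)^{N-1}=O(p)$, and $\P(\|\omega\|_1\geq2\,|\,Y)=O(p^2)$, so I would partition the sum into the three groups $v=0$, $v=e_j$, and $\|v\|_1\geq2$.

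The key algebraic step --- and the thing that prevents the otherwise diverging term $\phi(\tfrac{1-p}{p}\omega)$ from causing trouble --- is that, expanding $D_\phi$ via \eqref{eq:bregman}, the two appearances of $\phi(\tfrac{1-p}{p}v)$ cancel exactly, leaving
\[
T_p(v) = -\tfrac{1-p}{p}\,\langle \nabla\phi(g(Y-v)),\, v\rangle \;+\; \langle\nabla\phi(g(Y-v)),\, g(Y-v)\rangle - \phi(g(Y-v)) + \phi(Y-v),
\]
which is \emph{affine} in $\tfrac{1-p}{p}$ with $p$-free coefficients. I would then multiply $T_p(v)$ by $\P(\omega=v\,|\,Y)$ in each group. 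For $v=0$ (so the $\tfrac{1-p}{p}$-term vanishes), the contribution tends to $\phi(Y)-\phi(g(Y))+\langle\nabla\phi(g(Y)),g(Y)\rangle$. For $v=e_j$, the factor $p$ in $\P(\omega=e_j\,|\,Y)$ cancels the $\tfrac1p$ in $T_p(e_j)$, so this group contributes $-\sum_j Y_j\langle\nabla\phi(g(Y-e_j)),e_j\rangle$, that is, $-\langle\nabla\phi(g_-(Y)),Y\rangle$ in the notation of \eqref{eq:ue_bregman}, in the limit, while the remaining ($p$-free) part of $T_p(e_j)$ is killed by the leftover factor $p$. For $\|v\|_1\geq2$, $T_p(v)=O(1/p)$ while $\P(\omega=v\,|\,Y)=O(p^2)$, so the whole (finite) group contributes $O(p)\to0$.

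Collecting the surviving pieces gives
\[
\lim_{p\to0}\CB_p^\infty(g) = \phi(Y) - \phi(g(Y)) + \langle\nabla\phi(g(Y)), g(Y)\rangle - \langle\nabla\phi(g_-(Y)), Y\rangle,
\]
and rewriting $\phi(Y)-\phi(g(Y))+\langle\nabla\phi(g(Y)),g(Y)\rangle = D_\phi(Y,g(Y))+\langle\nabla\phi(g(Y)),Y\rangle$ (again by \eqref{eq:bregman}) identifies the right-hand side as exactly $\UE(g)$ from \eqref{eq:ue_bregman}. The finiteness hypotheses on $\phi(g(Y))$, $\nabla_i\phi(g(Y))$, and $\nabla_i\phi(g(Y-e_i))$ are precisely what ensures all of these limiting terms are finite, hence that the limit is well-defined, and they hold almost surely; this yields the almost-sure statement. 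The final consequence --- that the noiseless limit is unbiased for $\Err(g)$ --- is then immediate from Proposition \ref{prop:ue_bregman}, which asserts exactly that $\UE(g)$ is unbiased for $\Err(g)$, so no further interchange-of-limits argument is needed.

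The only part requiring genuine care, as opposed to bookkeeping, is making the cancellation of the $\phi(\tfrac{1-p}{p}\omega)$ contributions airtight and then controlling the $\|\omega\|_1\geq2$ remainder uniformly as $p\to0$; once one notes that conditionally on $Y$ everything is a finite sum whose summands are affine in $\tfrac{1-p}{p}$, the rest is a matter of matching powers of $p$ against the binomial probabilities. I do not anticipate trouble with limit interchanges, since the computation is entirely pointwise in $Y$ and involves only finite sums.
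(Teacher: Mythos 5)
Your proposal is correct and follows essentially the same route as the paper: the paper likewise exploits the exact cancellation of the $\phi\bigl(\tfrac{1-p}{p}\omega\bigr)$ terms to reduce the conditional expectation to quantities affine in $\tfrac{1-p}{p}$, and then packages your power-of-$p$ bookkeeping over the partition $\{\omega=0\}$, $\{\omega=e_j\}$, $\{\|\omega\|_1\geq 2\}$ into a standalone lemma (Lemma \ref{lem:limiting_inner_prod}) giving $\lim_{p\to 0}\E[h_i(y-\omega)]=h_i(y)$ and $\lim_{p\to 0}\tfrac{1-p}{p}\E[\omega_i h_i(y-\omega)]=y_i h_i(y-e_i)$. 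Your inline version of that lemma, the identification of the surviving terms with $\UE(g)$, and the appeal to Proposition \ref{prop:ue_bregman} for the unbiasedness consequence all match the paper's argument.
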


That the limiting CB estimator \eqref{eq:cb_bregman} recovers the unbiased
estimator \eqref{eq:ue_bregman} based on Hudson's lemma, as $B \to \infty$ and 
$p \to 0$, is certainly an encouraging property for the former. We recall that
in the Gaussian many means problem, the analogous result was derived in  
\citet{oliveira2021unbiased}: there, the CB estimator recovers the unbiased 
estimator based on Stein's lemma, in the noiseless limit. However, this Gaussian
result requires $g$ to be weakly differentiable (which is the condition required
for Stein's unbiased estimator to be valid in the first place). In the current 
Poisson many means problem, note that Theorem \ref{thm:noiseless_limit} requires
no such restrictions on $g$ (and indeed, recall, the unbiased estimator does not
either, from Proposition \ref{prop:ue_bregman}). 

Figure \ref{fig:noiseless_limit} illustrates this difference via a simple
simulation; see the figure caption for details. 

\begin{figure}[htb]
\centering
\includegraphics[width=\textwidth]{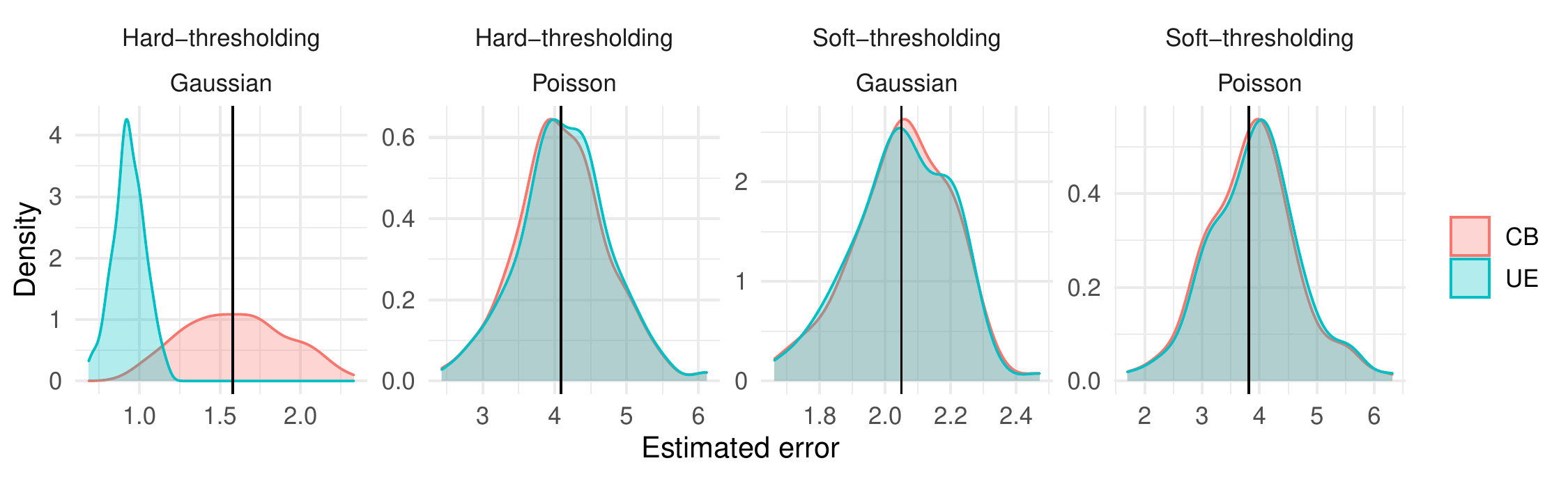}
\caption{Density of the CB and unbiased estimators of test error in Gaussian
  and Poisson settings, where the CB estimators effectively take large $B$ and a
  small amount of auxiliary noise. This is based on a simulation with $n=30$, 
  where we generate Gaussian or Poisson data with constant mean, and consider
  two estimators: soft-thresholding and hard-thresholding (the latter violating
  weak differentiability). The black vertical line in each panel marks the true
  test error. For small auxiliary noise in the Gaussian hard-thresholding
  setting, it is clear that CB and the unbiased estimator are separated (and the
  latter is far from unbiased).} 
\label{fig:noiseless_limit}
\end{figure}

\section{Bias and variance}
\label{sec:bias_variance}

In this section, we study a bias-variance decomposition of the estimator
$\CB_p(g)$ in \eqref{eq:cb_bregman}, when targeting the \emph{original} error
$\Err(g)$. We will consider an arbitrary Bregman divergence loss (used to define 
$\Err(g)$), and use the decomposition     
\begin{equation}
\label{eq:cb_bias_var}
\E[\CB_p(g) - \Err(g)]^2 = 
\underbrace{\big[\Err_p(g) - \Err(g)\big]^2}_{\Bias^2(\CB_p(g))} \,+\, 
 \underbrace{\E\big[\Var(\CB_p(g) \,|\, Y)\big]}_{\RVar(\CB_p(g))} \,+\, 
\underbrace{\Var\big(\E[\CB_p(g) \,|\, Y]\big)}_{\IVar(\CB_p(g))}.
\end{equation}
This is the usual bias-variance decomposition of squared error loss, where we
have used $\E[\CB_p(g)] = \Err_p(g)$ in the bias term, and we have further
expanded the usual variance term (using the law of total variance) into two
components which we call the \emph{reducible} and \emph{irreducible} variance,
respectively, as in \citet{oliveira2021unbiased}. We note that as the number of 
bootstrap draws $B$ grows, the reducible variance shrinks, but the irreducible
variance does not; the latter does not depend on $B$ at all, and in fact, it can
be viewed as the variance of the infinite-bootstrap version of the estimator,
\smash{$\CB_p^\infty(g) = \E[\CB_p(g) \,|\, Y]$}.

In what follows, we will analyze each of the three terms in
\eqref{eq:cb_bias_var} to understand their behavior as functions of $p$ and $B$,
with a focus on small $p$ and large $B$. As usual, we assume throughout that $Y
\sim \Pois(\mu)$, where $Y_p \sim \Pois((1-p) \mu) $ for $p \geq 0$, and we
denote by \smash{$\newY, \newY_p$} independent copies of $Y, Y_p$,
respectively. Lastly, $D_\phi$ represents an arbitrary Bregman divergence loss. 

\subsection{Bias}

First we give an exact expression for the bias, $\Bias(\CB_p(g)) = \Err_p(g) -
\Err(g)$, and an upper bound on its magnitude for small $p$, under an assumption
of monotone variance. The proof is given in Appendix \ref{app:cb_bias}.  

\begin{proposition}
\label{prop:cb_bias}
Assume that \smash{$\E[D_\phi(\newY_p, g_p(Y)) \langle Y_p+\newY_p, 1_n \rangle]
  < \infty$}. Then for all $p \in [0,1)$,   
\begin{equation}
\label{eq:cb_bias} 
\Err_p(g) - \Err(g) = -\sqrt{2 \sum_{i=1}^n\mu_i}
\int_0^p \frac{1}{\sqrt{1-t}} \Cor\Big( D_\phi(\newY_t, g(Y_t)), \langle
\newY_t + Y_t, 1_n \rangle \Big) \sqrt{\Var\big[ D_\phi(\newY_t, g(Y_t)) \big]} 
\, dt.
\end{equation}
Further, if $\Var[D_\phi(\newY_p, g(Y_p))]$ is decreasing in $p$ on $[0,1/2]$,
then for any $p$ in this range,  
\begin{equation}
\label{eq:cb_bias_bd} 
|\Err_p(g) - \Err(g)| \leq \frac{5p}{3} \sqrt{\Var\big[ D_\phi(\newY, g(Y))
  \big] \sum_{i=1}^n\mu_i}
\end{equation}
\end{proposition}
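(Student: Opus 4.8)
The plan is to prove \eqref{eq:cb_bias} by differentiating $t \mapsto \Err_t(g)$ and integrating, and then to obtain \eqref{eq:cb_bias_bd} from \eqref{eq:cb_bias} by crudely bounding the correlation and the variance. Write $F(y,\newy) = D_\phi(\newy, g(y))$, so $\Err_t(g) = \E[F(Y_t,\newY_t)]$ with $Y_t,\newY_t \sim \Pois((1-t)\mu)$ independent, and $\Err_0(g) = \Err(g)$. Under the stated moment hypothesis, Proposition \ref{prop:test_error_p_smoothness} (the case $k=1$) gives that $t \mapsto \Err_t(g)$ is $C^1$ on $[0,1)$, so by the fundamental theorem of calculus $\Err_p(g) - \Err(g) = \int_0^p \frac{d}{dt}\Err_t(g)\,dt$, and it remains to identify the derivative. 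Once I show
\[
\frac{d}{dt}\Err_t(g) = -\frac{1}{1-t}\,\Cov\big(\langle Y_t + \newY_t, 1_n\rangle,\, F(Y_t,\newY_t)\big),
\]
identity \eqref{eq:cb_bias} follows by writing the covariance as $\Cor(\cdot,\cdot)\sqrt{\Var(\langle Y_t+\newY_t,1_n\rangle)}\sqrt{\Var[D_\phi(\newY_t,g(Y_t))]}$, using $\Var(\langle Y_t + \newY_t, 1_n\rangle) = 2(1-t)\sum_{i=1}^n\mu_i$ (since $\langle Y_t,1_n\rangle \sim \Pois((1-t)\sum_i\mu_i)$, similarly for $\newY_t$, and $Y_t \indep \newY_t$), and noting $(1-t)^{-1}\sqrt{2(1-t)\sum_i\mu_i} = (1-t)^{-1/2}\sqrt{2\sum_i\mu_i}$.

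To compute the derivative, note that the common parametrization $\theta = (1-t)\mu$ of $Y_t$ and $\newY_t$, together with the elementary Poisson identity $\partial_{\theta_i}\E_\theta[h(Y)] = \E_\theta[h(Y+e_i) - h(Y)]$ (obtained by differentiating the product Poisson pmf termwise, which the moment hypothesis legitimizes) and the chain rule, gives
\[
\frac{d}{dt}\Err_t(g) = -\sum_{i=1}^n \mu_i\big(\E[F(Y_t + e_i, \newY_t)] - \E[F(Y_t,\newY_t)]\big) - \sum_{i=1}^n \mu_i\big(\E[F(Y_t, \newY_t + e_i)] - \E[F(Y_t,\newY_t)]\big).
\]
Then I apply Hudson's lemma (Lemma \ref{lem:hudson}), in the rearranged form $\theta_i\E[h(Y+e_i)] = \E[Y_i h(Y)]$ with $\theta_i = (1-t)\mu_i$, to each term: $\mu_i\E[F(Y_t+e_i,\newY_t)] = (1-t)^{-1}\E[Y_{t,i}F(Y_t,\newY_t)]$, and analogously in the $\newY_t$ argument. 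Summing over $i$ and using $\sum_i\mu_i = (1-t)^{-1}\E[\langle Y_t,1_n\rangle] = (1-t)^{-1}\E[\langle \newY_t,1_n\rangle]$, the two bracketed sums collapse, by bilinearity of covariance, to $(1-t)^{-1}\Cov(\langle Y_t+\newY_t,1_n\rangle, F(Y_t,\newY_t))$, which is exactly the displayed derivative.

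For the bound \eqref{eq:cb_bias_bd}, I start from \eqref{eq:cb_bias}, use $|\Cor(\cdot,\cdot)| \le 1$, and invoke the monotonicity hypothesis: for $t \in [0,p]\subseteq[0,1/2]$, $\Var[D_\phi(\newY_t,g(Y_t))] \le \Var[D_\phi(\newY_0,g(Y_0))] = \Var[D_\phi(\newY,g(Y))]$, so this factor pulls out of the integral, leaving
\[
|\Err_p(g) - \Err(g)| \le \sqrt{2\sum_{i=1}^n\mu_i}\,\sqrt{\Var[D_\phi(\newY,g(Y))]}\int_0^p (1-t)^{-1/2}\,dt = 2\big(1-\sqrt{1-p}\big)\sqrt{2\sum_{i=1}^n\mu_i}\,\sqrt{\Var[D_\phi(\newY,g(Y))]}.
\]
Since $t \mapsto 1-\sqrt{1-t}$ is convex on $[0,1/2]$ and vanishes at $0$, it lies below its chord there, so $1-\sqrt{1-p} \le 2(1-1/\sqrt2)\,p$ and $\int_0^p(1-t)^{-1/2}dt \le (4-2\sqrt2)p$; collecting constants, the leading factor is $\sqrt2(4-2\sqrt2) = 4\sqrt2 - 4 \le \tfrac{5}{3}$, the last inequality being equivalent to $12\sqrt2 \le 17$, i.e.\ $288 \le 289$. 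This is \eqref{eq:cb_bias_bd}.

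The only genuinely delicate point is the differentiation-under-the-sum in the derivative computation (equivalently, the $C^1$ claim): one needs a domination of the termwise-differentiated product Poisson sums that is uniform for $t$ in a neighborhood, and this is precisely where the moment hypothesis enters — I would either cite Proposition \ref{prop:test_error_p_smoothness} for this or reproduce its dominated-convergence argument. Everything after the derivative is pinned down is bookkeeping: rewriting a covariance as a correlation times standard deviations, computing the variance of a sum of independent Poissons, and the convexity-plus-arithmetic step that turns $2(1-\sqrt{1-p})$ into a clean multiple of $p$.
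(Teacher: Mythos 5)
Your proof is correct and takes essentially the same route as the paper's: the fundamental theorem of calculus applied to $t \mapsto \Err_t(g)$, identification of the derivative as $-(1-t)^{-1}\Cov\big(D_\phi(\newY_t,g(Y_t)),\langle \newY_t+Y_t,1_n\rangle\big)$, conversion to a correlation using $\Var[\langle \newY_t+Y_t,1_n\rangle]=2(1-t)\sum_i\mu_i$, and then the correlation and monotone-variance bounds together with $\int_0^p(1-t)^{-1/2}\,dt=2(1-\sqrt{1-p})$. The only differences are cosmetic: you derive the derivative via the discrete score identity and Hudson's lemma rather than reading it off the termwise pmf differentiation in the proof of Proposition \ref{prop:test_error_p_smoothness}, and your convexity argument giving $4\sqrt2-4\le 5/3$ is actually a cleaner justification of the final constant than the paper's, which presents that last step as an equality.
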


We remark that the assumption of decreasing variance of the loss is fairly
natural (because the variance of each component of $Y_p$ decreases monotonically
to $0$ as $p$ increases to $1$). We can also drop this condition, and replace
the variance term in the bound \eqref{eq:cb_bias_bd} by \smash{$\sup_{t\in[0,p)} 
  \Var[D_\phi(\newY_t,g(Y_t))]$}.   

\subsection{Reducible variance}

Next we bound the reducible variance, $\RVar(\CB_p(g))$. We focus on the
dependence on $p$ and $B$, for small $p$ and large $B$. The notation $O(\cdot)$
is to be interpreted in this regime (small $p$, large $B$), and hides factors
that may depend on the mean $\mu$, which may in turn depend on the
dimensionality $n$. The proof is given in Appendix \ref{app:cb_rvar}.   

\begin{proposition}
\label{prop:cb_rvar}
Assume the variables $f(Y_p)$, $f^2(Y_p)$, $f(Y_p) \langle Y_p, 1_n \rangle$, 
$f^2(Y_p)\langle Y_p, 1_n \rangle$ all have finite $L^1$ norm, uniformly bounded  
over all functions $f \in \cF$ and all $p \in [0,q)$, for some $q > 0$, where
\[
\cF = \big\{ y \mapsto D_\phi(y,g(y)) \big\} \cup
\big\{ y \mapsto \nabla_i\phi(g(y)) : i = 1,\dots,n \big\}.
\]
Then for all $p \in [0,q)$, 
\begin{multline}
\label{eq:cb_rvar_bd}
\RVar(\CB_p(g)) \leq \frac{2}{B}\Var\Big[ D_\phi(Y,g(Y))) +\langle 
Y, \nabla \phi(g(Y)) \rangle\Big] + \frac{2}{Bp}\sum_{i=1}^n\mu_i 
\E\big[ \nabla_i\phi(g(Y_p))^2 \big] +{} \\ \frac{2}{B}\sum_{i=1}^n\mu_i^2
\Var\big[ \nabla_i\phi(g(Y_p)) \big] + O\bigg( \frac{p}{B} \bigg). 
\end{multline}
\end{proposition}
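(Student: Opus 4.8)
The plan is to first exploit the conditional-i.i.d.\ structure. Given $Y$, the binomial draws $\omega^1,\dots,\omega^B$ are i.i.d., so the $B$ summands of $\CB_p(g)$ in \eqref{eq:cb_bregman} are conditionally i.i.d.\ given $Y$, whence $\Var(\CB_p(g)\mid Y)=\frac1B\Var(T\mid Y)$ for a single summand $T$ (say with $b=1$), and therefore $\RVar(\CB_p(g))=\frac1B\,\E[\Var(T\mid Y)]$. It then suffices to bound $\E[\Var(T\mid Y)]$ by $B$ times the right-hand side of \eqref{eq:cb_rvar_bd}.

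Next I would rewrite $T$ to expose its fluctuation. Writing $\omega=\omega^1$, $Y^*=Y-\omega$, $Y^\dagger=\frac{1-p}{p}\omega$ and $\psi(y)=\nabla\phi(g(y))$, the definition of $D_\phi$ makes the two individually unstable $\phi(Y^\dagger)$ terms cancel, leaving
\[
T \;=\; D_\phi(Y^*,g(Y^*)) + \big\langle \psi(Y^*),\, Y^*-Y^\dagger \big\rangle
  \;=\; D_\phi(Y^*,g(Y^*)) - \tfrac1p\big\langle \psi(Y^*),\, \omega-pY \big\rangle ,
\]
using $Y^*-Y^\dagger = Y-\frac1p\omega$. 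Thus, with the centered binomial noise $\xi:=\omega-pY$ (so $\E[\xi\mid Y]=0$ and $\Var(\xi_i\mid Y)=pY_i(1-p)$), we have $T=D_\phi(Y^*,g(Y^*))-\frac1p\langle\psi(Y^*),\xi\rangle$. On the event $\{\omega=0\}=\{Y^*=Y\}$, which has conditional probability $(1-p)^{\langle Y,1_n\rangle}\ge 1-p\langle Y,1_n\rangle$, one checks that $T$ collapses to the $\sigma(Y)$-measurable value $D_\phi(Y,g(Y))+\langle\psi(Y),Y\rangle$; so I would decompose $T$ into this base value plus a remainder supported on $\{\omega\ne0\}$, and bound $\Var(T\mid Y)$ by the conditional second moment of that remainder.

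I would then turn to the heart of the argument: extracting from this conditional second moment the leading $O(1/p)$ contribution. Its dominant piece is $-\frac1p\langle\psi(Y^*),\omega\rangle$ (of order $1/p$ on $\{\omega\ne0\}$, a set of probability $O(p\langle Y,1_n\rangle)$), so the leading term is of order $\frac1p$; precisely, I would compute $\E[\langle\psi(Y^*),\omega\rangle^2\mid Y]$ via the elementary binomial moment identities $\E[\omega_i f(\omega)\mid Y]=pY_i\,\E[f(\omega+e_i)\mid Y']$ and its $\omega_i\omega_j$ analogue (where $Y'$ adjusts one coordinate of $Y$), take expectation over $Y$, and apply Hudson's lemma (Lemma \ref{lem:hudson}) to $y\mapsto\nabla_i\phi(g(y))^2$ and related functions to convert the resulting $\E[Y_i(\cdots)]$ quantities into the $\mu_i\,\E[\nabla_i\phi(g(Y_p))^2]$ and $\mu_i^2\Var[\nabla_i\phi(g(Y_p))]$ expressions in the statement (using $Y^*\stackrel{d}{=}Y_p$ from Lemma \ref{lem:pois_binom}(iii)). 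I expect the cross terms between $-\frac1p\langle\psi(Y^*),\omega\rangle$ and the ``variation'' pieces $D_\phi(Y^*,g(Y^*))-D_\phi(Y,g(Y))$ and $\langle\psi(Y^*)-\psi(Y),Y\rangle$ to produce the $\Var[D_\phi(Y,g(Y))+\langle Y,\nabla\phi(g(Y))\rangle]$ term, while the purely ``variation'' squared pieces, being $O(p)$ once multiplied by $\P(\omega\ne0\mid Y)$, get absorbed into the $O(p/B)$ remainder.

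The main obstacle I anticipate is the moment bookkeeping: every manipulation must be arranged so that the only integrability invoked is the assumed uniform-in-$p$ finiteness of the $L^1$ norms of $f(Y_p),f^2(Y_p),f(Y_p)\langle Y_p,1_n\rangle,f^2(Y_p)\langle Y_p,1_n\rangle$ over $f\in\cF$. In particular, the $1/p$ factor multiplying $\psi(Y^*)$ is delicate wherever it meets a difference $\psi(Y^*)-\psi(Y)$ or a $D_\phi$ difference, and one must verify that such products, weighted by the $O(p)$ probability of $\{\omega\ne0\}$ and expanded in powers of $p$, contribute only at the stated orders (and never at order $1/p$). Confirming that the residual is genuinely $O(p/B)$ — with no hidden logarithm or extra $1/p$ — is where I expect the real work to lie.
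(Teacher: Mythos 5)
Your opening moves coincide with the paper's: the reduction $\RVar(\CB_p(g))=\frac1B\,\E[\Var(T\mid Y)]$ for a single summand $T$, and the algebraic rewriting $T=D_\phi(Y^*,g(Y^*))+\langle\nabla\phi(g(Y^*)),Y\rangle-\frac1p\langle\omega,\nabla\phi(g(Y^*))\rangle$, are exactly how the paper's proof begins. After that your route diverges, and it contains a step that fails. You propose to bound $\Var(T\mid Y)$ by $\E[(T-T_0)^2\mid Y]$, where $T_0=D_\phi(Y,g(Y))+\langle\nabla\phi(g(Y)),Y\rangle$ is the value of $T$ on $\{\omega=0\}$. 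The inequality is valid, but it overshoots the conditional variance by exactly $(\E[T\mid Y]-T_0)^2$, and this does not vanish as $p\to0$: by Theorem \ref{thm:noiseless_limit}, $\E[T\mid Y]\to\UE(g)=T_0-\langle\nabla\phi(g_-(Y)),Y\rangle$, so the overshoot tends to $\langle\nabla\phi(g_-(Y)),Y\rangle^2$. This leaves an extra $\frac1B\E[\langle\nabla\phi(g_-(Y)),Y\rangle^2]$ in your bound --- an $O(1/B)$ term containing a squared \emph{first} moment --- which is neither absorbed by the $O(p/B)$ remainder nor dominated by the variance and second-moment terms appearing in \eqref{eq:cb_rvar_bd}. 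Equivalently, controlling the singular piece through the uncentered quantity $\frac1{p^2}\E[\langle\omega,\nabla\phi(g(Y^*))\rangle^2]$ rather than its variance leaves behind $\big(\sum_i\mu_i\E[\nabla_i\phi(g(Y_p))]\big)^2$, which is $O(1)$ in $p$ and absent from the stated bound.

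The paper avoids this by never decomposing around $\{\omega=0\}$. It splits $T$ via $\Var(A+B\mid Y)\le 2\Var(A\mid Y)+2\Var(B\mid Y)$ with $A=D_\phi(Y^*,g(Y^*))+\langle\nabla\phi(g(Y^*)),Y\rangle$ and $B=\frac1p\langle\omega,\nabla\phi(g(Y^*))\rangle$; it bounds $\E[\Var(A\mid Y)]\le\Var(A)$ by the law of total variance, which after a Taylor expansion in $p$ yields the $\frac2B\Var[D_\phi(Y,g(Y))+\langle Y,\nabla\phi(g(Y))\rangle]+O(p/B)$ terms (so this term does \emph{not} come from cross terms, as you conjecture); and it computes $\Var[\langle\omega,\nabla\phi(g(Y^*))\rangle]$ exactly by exploiting the \emph{unconditional} independence of $\omega$ and $Y^*$ from Lemma \ref{lem:pois_binom}, with $\omega\sim\Pois(p\mu)$ and $Y^*\sim\Pois((1-p)\mu)$, through the product-variance formula for independent factors. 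That independence is the linchpin of the whole computation and is entirely missing from your plan; your substitute (conditional binomial size-bias identities followed by Hudson's lemma) can recover the leading $\frac{1}{Bp}\sum_i\mu_i\E[\nabla_i\phi(g(Y_p))^2]$ contribution, but at the cost of much heavier bookkeeping, and it does not repair the centering problem above.
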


A simple simulation, whose results are presented in Figure \ref{fig:cb_rvar},
shows that the reducible variance bound \eqref{eq:cb_rvar_bd} appears to have
the right dependence on $\mu$ and $B$. See the figure caption for details. 

\begin{figure}[htb]
\centering
\begin{subfigure}[b]{0.495\textwidth}
\includegraphics[width=\textwidth]{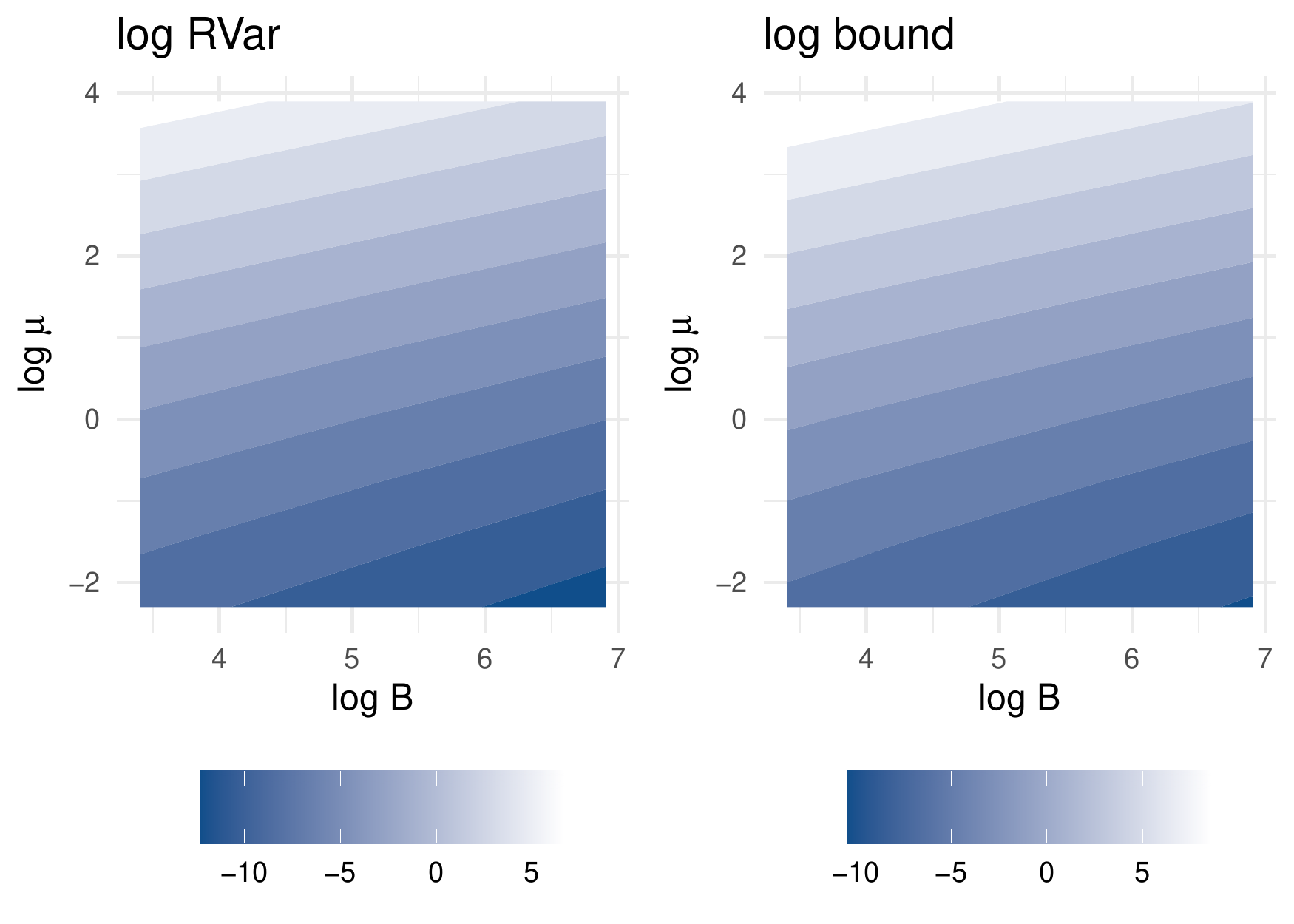}
\caption{Squared loss}
\end{subfigure}
\begin{subfigure}[b]{0.495\textwidth}
\includegraphics[width=\textwidth]{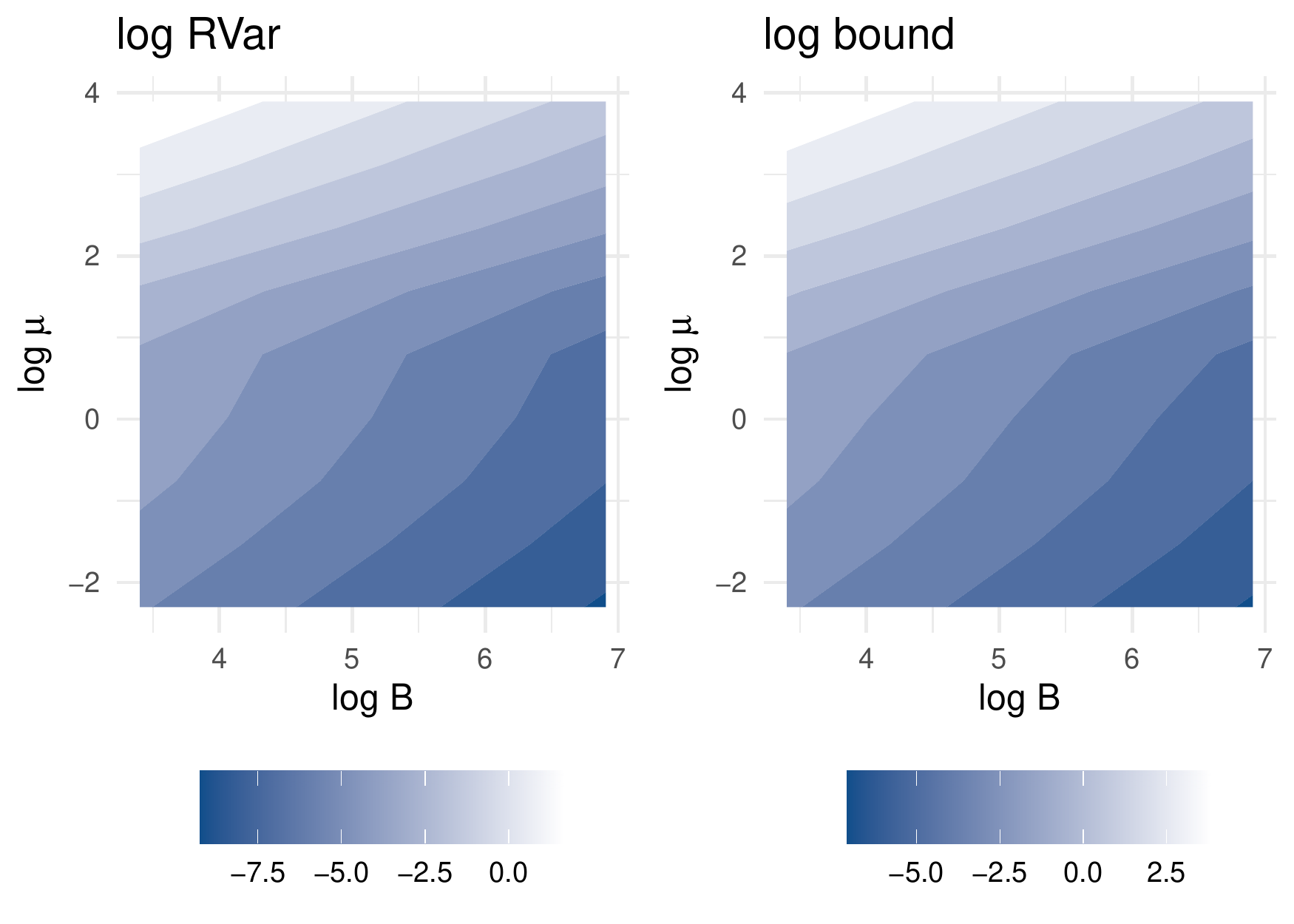}
\caption{Deviance loss}
\end{subfigure}
\caption{Comparison of the true reducible variance (approximated by Monte Carlo)
  and the bound given in \eqref{eq:cb_rvar_bd}, for squared and deviance loss,
  in a simulation with $n=100$ and $p=0.1$. The data vector $Y$ has Poisson 
  entries, and $\mu$ denotes the common mean of each component; we use a simple
  linear shrinkage estimator $g$. We can see that the behavior for varying
  $B,\mu$ looks qualitatively similar across the true reducible variance heatmap
  and the bound heatmap, for each loss function.}      
\label{fig:cb_rvar}
\end{figure}

\subsection{Irreducible variance}

Last we analyze the irreducible variance, $\IVar(\CB_p(g))$. The proof is given
in Appendix \ref{app:cb_ivar}.  

\begin{proposition}
\label{prop:cb_ivar}
Assume that \smash{$\E[D^2_\phi(Y,g(Y))] < \infty$} and
 \smash{$\E[\langle \nabla(g(Y)), Y \rangle^2] < \infty$}. 
Define $\Phi_g$ to have component functions 
\[
\Phi_{g,i}(y) = \sup_{0 \leq z \leq y} \,
\big| \nabla_i\phi(g(y)) \big|, \quad i = 1,\dots,n.
\]
(Here when we write $0 \leq z \leq y$, all inequalities are to be interpreted
componentwise.) Then,     
\begin{equation}
\label{eq:cb_ivar_bd}
\lim_{p \to 0} \IVar(\CB_p(g)) \leq 2 \Var\Big[ D_\phi(Y, g(Y)) + \langle \nabla
\phi(g(Y)), Y \rangle \Big] + 2 \E\big[ \langle \Phi_g(Y), Y \rangle^2 \big].    
\end{equation}
\end{proposition}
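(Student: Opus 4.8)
The plan is to pass to the infinite-bootstrap estimator and invoke Theorem~\ref{thm:noiseless_limit}. By definition $\IVar(\CB_p(g)) = \Var\big(\E[\CB_p(g)\mid Y]\big) = \Var\big(\CB_p^\infty(g)\big)$, using $\E[\CB_p(g)\mid Y] = \CB_p^\infty(g)$ from \eqref{eq:cb_bregman_inf}, and Theorem~\ref{thm:noiseless_limit} (its finiteness conditions following from the present moment hypotheses) gives $\CB_p^\infty(g) \to \UE(g)$ almost surely as $p\to 0$. So it suffices to (i) upgrade this to $\Var(\CB_p^\infty(g)) \to \Var(\UE(g))$, and (ii) bound $\Var(\UE(g))$ by the right-hand side of \eqref{eq:cb_ivar_bd}.

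Step (ii) is short. Write $\UE(g) = A - C$ with $A = D_\phi(Y,g(Y)) + \langle\nabla\phi(g(Y)),Y\rangle$ and $C = \langle\nabla\phi(g_-(Y)),Y\rangle = \sum_{i=1}^n Y_i\,\nabla_i\phi(g(Y-e_i))$. Then $\Var(A-C) \le 2\Var(A) + 2\Var(C) \le 2\Var(A) + 2\E[C^2]$ produces the first term of \eqref{eq:cb_ivar_bd} directly. For $C$: on $\{Y_i\ge1\}$ the point $Y-e_i$ satisfies $0\le Y-e_i\le Y$ componentwise, so $|\nabla_i\phi(g(Y-e_i))| \le \Phi_{g,i}(Y)$, while on $\{Y_i=0\}$ the $i$th summand of $C$ vanishes; hence $|C| \le \sum_i Y_i\Phi_{g,i}(Y) = \langle\Phi_g(Y),Y\rangle$ and $\E[C^2] \le \E[\langle\Phi_g(Y),Y\rangle^2]$, which is the second term.

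Step (i) is the crux and the main obstacle: almost sure convergence alone only yields $\Var(\UE(g)) \le \liminf_p\Var(\CB_p^\infty(g))$ by Fatou --- the wrong direction --- so an $L^2$ (or uniform-integrability) argument is needed to control the limit from above. The plan is to exhibit a single $p$-free random variable $M$ with $|\CB_p^\infty(g)| \le M$ for all small $p$ and $\E[M^2]<\infty$; then dominated convergence turns the almost sure limit into an $L^2$ limit, whence $\Var(\CB_p^\infty(g)) \to \Var(\UE(g))$. To build $M$, expand the Bregman divergence inside \eqref{eq:cb_bregman_inf}: the $\phi\big(\tfrac{1-p}{p}\omega\big)$ terms cancel, leaving, with $\psi(z) := D_\phi(z,g(z)) + \langle\nabla\phi(g(z)),z\rangle$ and $\omega\mid Y\sim\Binom(Y,p)$,
\[
\CB_p^\infty(g) = \E\big[\psi(Y-\omega)\mid Y\big] \;-\; \tfrac{1-p}{p}\,\E\big[\langle\nabla\phi(g(Y-\omega)),\,\omega\rangle\mid Y\big].
\]
For the second term, the binomial size-biasing identity $\E[\omega_i\,h(Y-\omega)\mid Y] = p\,Y_i\,\E[h(Y-e_i-\omega')\mid Y]$ with $\omega'\mid Y\sim\Binom(Y-e_i,p)$, applied coordinatewise with $h = \nabla_i\phi\circ g$, cancels the $1/p$ and (using $0\le Y-e_i-\omega'\le Y$) bounds this term in absolute value by $(1-p)\langle\Phi_g(Y),Y\rangle \le \langle\Phi_g(Y),Y\rangle$ uniformly in $p$, while also exhibiting its almost sure limit $C$. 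For the first term, $|\E[\psi(Y-\omega)\mid Y]| \le \max_{0\le z\le Y}|\psi(z)|$; isolating the $\{\omega=0\}$ contribution, which carries probability $(1-p)^{\langle Y,1_n\rangle}$ and equals $\psi(Y) = A$, and controlling the remainder by its vanishing mass, yields both the almost sure limit $A$ and a $p$-free dominating bound. Taking $M$ to be the sum of these two bounds gives the claim; the square-integrability of $M$ is exactly what the stated moment hypotheses (together with finiteness of $\E[\langle\Phi_g(Y),Y\rangle^2]$) supply, and handling the $\max_{0\le z\le Y}|\psi(z)|$ term carefully --- bounding it via $D_\phi(z,g(z))$ and $\langle\Phi_g(Y),Y\rangle$ for $z\le Y$ --- is where essentially all the technical work lies, after which (ii) closes the argument.
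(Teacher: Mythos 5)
Your overall strategy is sound but genuinely different from the paper's. The paper never passes through $\UE(g)$ or an $L^2$ limit: it uses the same algebraic rewriting you do (cancelling $\phi(Y^\dagger)$ so that $\E[\CB_p(g)\,|\,Y]$ becomes the conditional expectation of $D_\phi(Y^*,g(Y^*))+\langle\nabla\phi(g(Y^*)),Y^*\rangle-\tfrac{1-p}{p}\langle\nabla\phi(g(Y^*)),\omega\rangle$), splits by $\Var(X+Z)\le 2\Var(X)+2\Var(Z)$, and then bounds each piece \emph{for every fixed $p$}: the first by the law of total variance, $\Var(\E[\cdot\,|\,Y])\le\Var(\cdot)$, whose unconditional variance converges to the $p=0$ value by the continuity-in-$p$ argument from Proposition \ref{prop:test_error_p_smoothness}; the second by $\Var\le\E[(\cdot)^2]$ together with the $\Phi_g$ envelope and $\E[\omega/p\,|\,Y]=Y$. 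This avoids any domination/uniform-integrability argument entirely. Your route, if completed, buys something stronger --- it identifies $\lim_{p\to 0}\IVar(\CB_p(g))=\Var(\UE(g))$ exactly, with the stated bound then following from your (correct) step (ii) --- and your size-biasing identity $\E[\omega_i h(Y-\omega)\,|\,Y]=pY_i\,\E[h(Y-e_i-\omega')\,|\,Y]$ is a clean way to see both the uniform bound and the pointwise limit of the $\omega$ term.

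The genuine gap is the one you flag but do not close: square-integrability of the envelope $\max_{0\le z\le Y}|\psi(z)|$. This does \emph{not} follow formally from $\E[\psi(Y)^2]<\infty$ for a general distribution, and the hypotheses only control $\psi$ at the point $Y$ itself, not at all $z\le Y$. It is in fact provable here, but only by exploiting the Poisson law: since $\max_{0\le z\le y}\psi^2(z)\le\sum_{0\le z\le y}\psi^2(z)$, one gets $\E[\max_{0\le z\le Y}\psi^2(z)]\le\sum_z\psi^2(z)\P(Y\ge z)$, and for independent Poissons $\P(Y\ge z)\le e^{\sum_i\mu_i}\,\P(Y=z)$ (coordinatewise, $\P(Y_i\ge z_i)/\P(Y_i=z_i)=\sum_{j\ge0}\mu_i^j z_i!/(z_i+j)!\le e^{\mu_i}$), so the envelope's second moment is at most $e^{\sum_i\mu_i}\,\E[\psi(Y)^2]$. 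Without some such argument your step (i) is incomplete, since as you note Fatou only gives the inequality in the useless direction. Two smaller points: finiteness of $\E[\langle\Phi_g(Y),Y\rangle^2]$ is not among the stated hypotheses, so you should say explicitly that it may be assumed without loss of generality (otherwise \eqref{eq:cb_ivar_bd} is vacuous); and you are silently reading the paper's $\Phi_{g,i}(y)=\sup_{0\le z\le y}|\nabla_i\phi(g(y))|$ as $\sup_{0\le z\le y}|\nabla_i\phi(g(z))|$, which is surely the intended definition but worth stating.
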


\subsection{Discussion of bias and variance results}

We discuss interpretation of the results above. The bias bound
\eqref{eq:cb_bias_bd} decreases linearly with $p$, which suggests that we should
take $p$ to be as small as possible in order to decrease the bias. The
irreducible variance bound \eqref{eq:cb_ivar_bd} provides no resistance to this
idea, as it has a stable noiseless limit, as we send $p \to 0$. The behavior of
the reducible variance bound \eqref{eq:cb_rvar_bd}, however, is more intricate.   
The second term on the right-hand side in \eqref{eq:cb_rvar_bd} diverges as $p
\to 0$, but this can be offset by sending $B \to \infty$. 

How large do we need to take $B$? Altogether, there are really only two
quantities on the right-hand side in \eqref{eq:cb_rvar_bd} that $B$ needs to
balance out, which are the second and third terms. First, let us normalize the
target error by the number of samples, because this would be the natural scale
of concern, in general (our original definition of $\Err(g)$ in
\eqref{eq:err_sqr} or \eqref{eq:err_dev} is a sum, rather than an average, over
samples). We can see from \eqref{eq:cb_bias_var} that rescaling each of
$\Err(g)$ and $\CB_p(g)$ by $1/n$ multiplies all terms in the error
decomposition---bias, reducible variance, and irreducible variance---by a factor
of $1/n^2$. Now, ignoring constants, the (squared) bias bound
\eqref{eq:cb_bias_bd} and the second and third terms in the reducible variance
bound \eqref{eq:cb_rvar_bd} are, after multiplying by $1/n^2$:   
\[
\frac{p^2}{n^2} \|\mu\|_1 \quad \text{and} \quad  
\frac{1}{n^2 Bp} \|\mu\|_1 + \frac{1}{n^2 B} \|\mu\|_2^2,  
\]
respectively, where recall, we use $\mu = (\mu_1,\dots,\mu_n) \in \R^n_+$ for  
mean vector. As we can see, increasing the total signal energy $\|\mu\|_1$
adversely affects the control we have over the bias and reducible variance. In a
moderate signal regime, where $\|\mu\|_1 / n$ is moderate or small, the rough
orders for the bias and the reducible variance in the above display would be
small, even for only modest values of $p$ and $B$. However, in a large signal
regime, where $\|\mu\|_1 / n$ is large (possibly increasing as the sample size
$n$ grows), we may need to take $p$ to be small to offset this (if we want the
bias to be held small), which requires us to take $B$ large enough to dominate
$\|\mu\|_1 / (n^2 p)$ or $\|\mu\|_2^2 / n^2$ (depending on which is larger) in
the reducible variance bound. 

In practice, for any given problem at hand, we would generally recommend
choosing $p$ to be small, such as $p = 0.05$ or $p = 0.1$, but not tiny. This
choice is made in favor of keeping the variance under control (for a reasonable
number of bootstrap samples $B$), at the potential expense of incurring a
nontrivial bias in the CB estimator. However, this brings us back to a primary
feature of the CB estimator---recall, for any $p>0$, it is unbiased for 
$\Err_p(g)$. This represents a shift in focus, where we now consider estimating
error in a problem setting where the mean has been shrunk from $\mu$ to $(1-p)
\mu$, which is intuitively a conservative bet and often a reasonable
undertaking even for moderately small but not infinitesimal values of $p$.   

\section{Simulated experiments}
\label{sec:experiments}

In this section, we run and analyze two sets of simulations. The first,
presented in Section \ref{sec:cb_ue_vp}, compares the unbiased estimator (UE) in
\eqref{eq:ue_bregman} and the CB estimator in \eqref{eq:cb_bregman}, across four
settings. Each setting is defined by a different data model and algorithm $g$,
and we examine the performance of CB versus UE in estimating the true error,
as we vary the binomial noise parameter $p$, for a fixed sample size $n$. We
find that CB performs favorably overall: it delivers similar error estimates to
UE for small values of $p$, and importantly, it can have much smaller variance 
than UE when $g$ is unstable.     

The second set of simulations, presented in Section \ref{sec:cb_ue_ss}, focuses
on just one setting in which UE generally behaves favorably. The motivation
here is to compare the variability of CB and UE after stratifying the two to
have roughly equal computational cost---which is accomplished by sampling
summands in \eqref{eq:ue_sqr} or \eqref{eq:ue_dev}. In this simulation, we fix
the binomial noise parameter $p$, and vary the sample size $n$ and signal size
$\mu$. We find that CB has lower variability unless the signal size $\mu$ is
very large. 
 
\subsection{CB versus UE, varying $p$}
\label{sec:cb_ue_vp}

Here we compare the CB and UE estimators, for squared and deviance loss
functions: see \eqref{eq:cb_sqr}, \eqref{eq:cb_dev} for CB and
\eqref{eq:ue_sqr}, \eqref{eq:ue_dev} for UE. Throughout, we set $n=100$, and use
$B=100$ bootstrap samples for CB. We consider the following combinations of
different data models for $Y$, and algorithms $g$:  
\begin{itemize}
\item \emph{Low-dimensional regression.} We set $p=10$, draw features $X_i \sim
  N(\theta, I_p)$, independently, $i=1,\dots,n$, where each $\theta_j = 3$ and
  $I_p$ denotes the $p \times p$ identity matrix; then we draw responses $Y_i
  \sim \Pois(X_i ^\T \beta)$, independently, $i=1,\dots,n$, where each $\beta_j
  = 0.05$. This corresponds to a signal-to-noise ratio (SNR) of approximately
  2. We examine two choices for $g$, a Poisson regression and a regression tree.  

\item \emph{High-dimensional regression.} We set $p=200$, and use a similar
  setup to the above, except with features $X_i \sim N(0, \sigma^2 I_p)$, where
  $\sigma^2 = 1.5$, and responses $Y_i \sim \Pois(X_i^\T \beta)$, where each
  $\beta_j = 0.13$. This was done to maintain an SNR of roughly 2. In this
  setting, we take $g$ to be a lasso Poisson regression, with the tuning
  parameter $\lambda$ chosen by 5-fold cross-validation (CV).   

\item \emph{Denoising.} We draw $Y_i \sim \Pois(\mu_i)$, independently,  
  $i=1,\dots,n$, where $\mu_i = 10$ for $i \leq 10$ and $\mu_i = 0.5$ for $i >
  10$. In this setting, we take $g$ to be a 1-step improvement on an empirical
  Bayes (EB) estimator as described in \citet{Brown2013}, with tuning parameter
  fixed at $h = 0.85$.    
\end{itemize}
In each setting, we perform 100 repetitions (i.e., we draw the data vector $Y 
\in \R^n$ 100 times from the specified data model); but we note that in the
regression settings, the features are drawn once and fixed throughout. We
consider a range of noise levels $p$ for the CB estimator: 0.05, 0.1, 0.3, 0.5,
and 0.7. All error metrics and error estimators, here and throughout all
empirical examples, are scaled by $1/n$. 

\begin{figure}[tb]
\centering 
\begin{subfigure}[b]{0.485\textwidth}
\includegraphics[width=\textwidth]{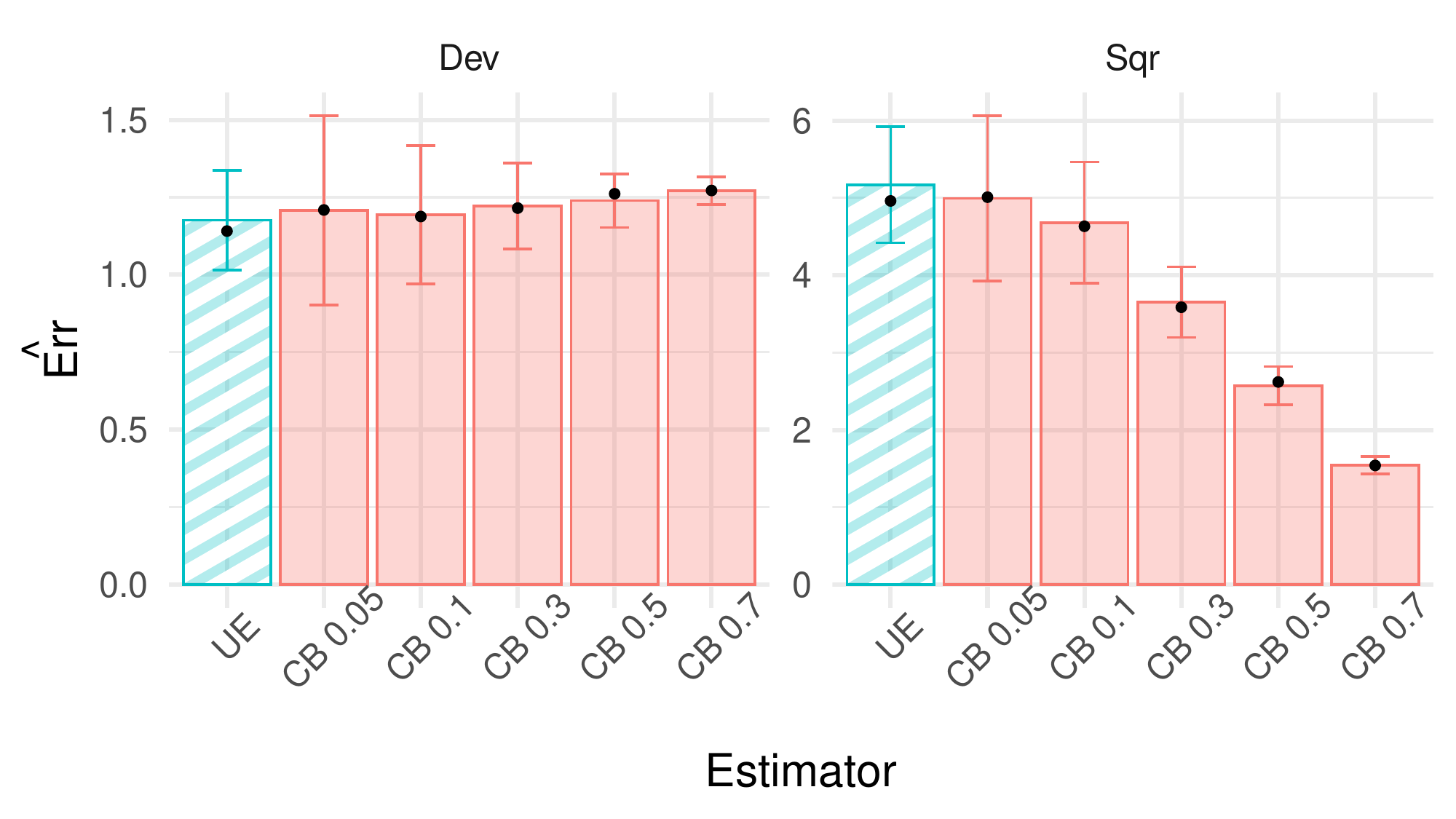}
\caption{Low-dim regression, $g=$ Poisson regression}
\end{subfigure} 
\begin{subfigure}[b]{0.485\textwidth}
\includegraphics[width=\textwidth]{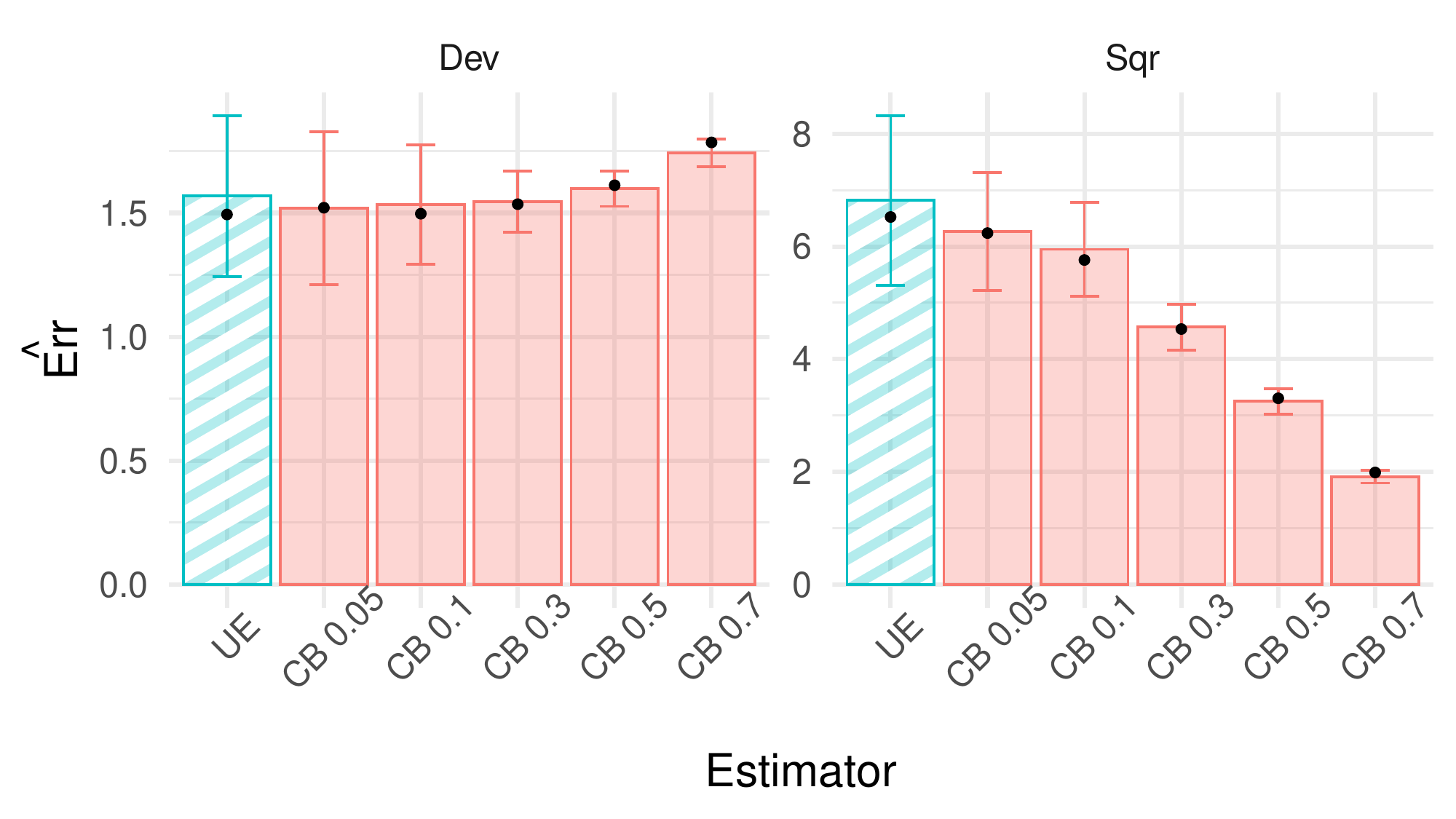}
\caption{Low-dim regression, $g=$ regression tree}
\end{subfigure} 
\begin{subfigure}[b]{0.485\textwidth}
\includegraphics[width=\textwidth]{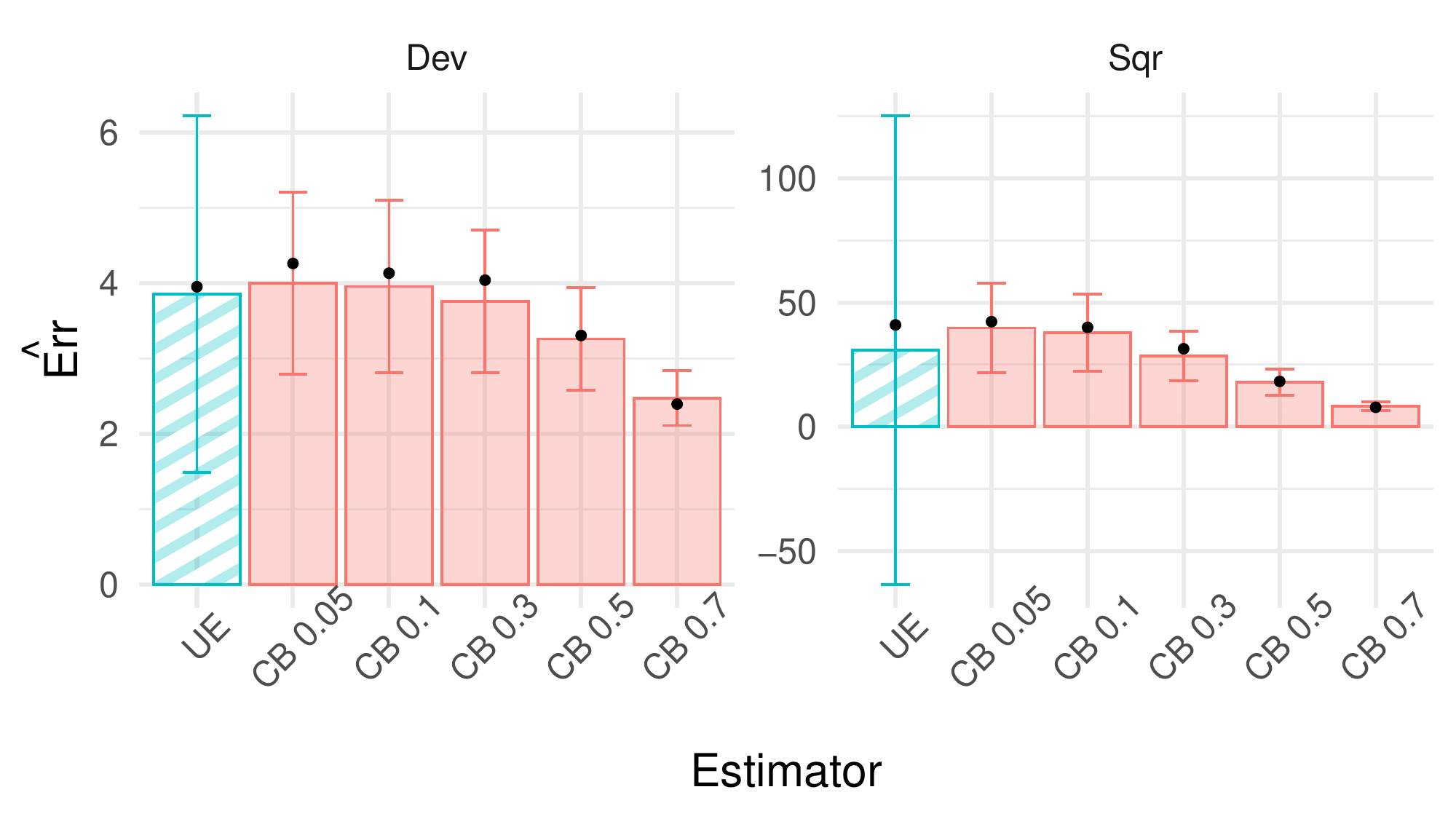}
\caption{High-dim regression, $g=$ 5-fold CV lasso}
\end{subfigure}
\begin{subfigure}[b]{0.485\textwidth}
\includegraphics[width=\textwidth]{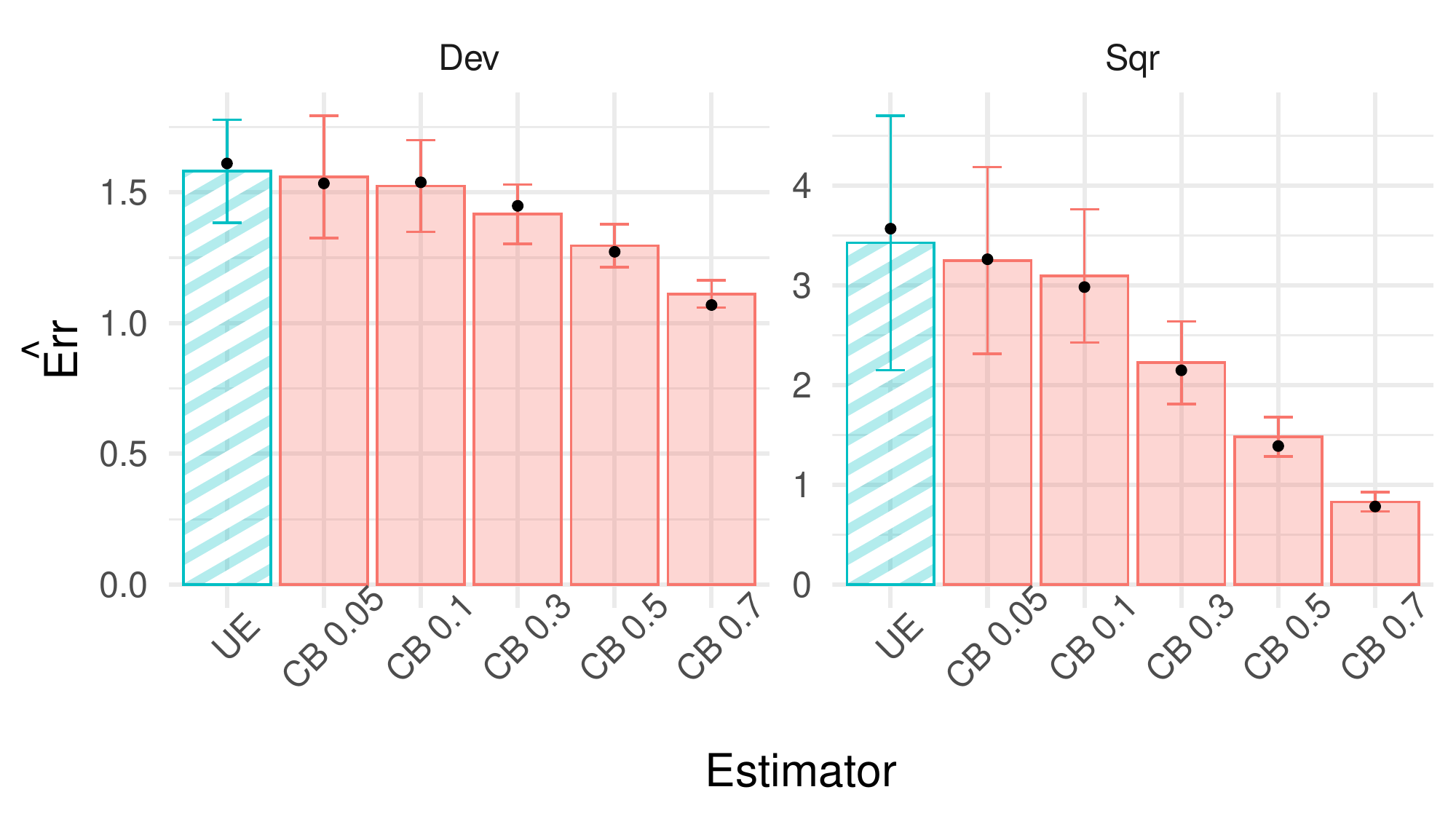}
\caption{Denoising, $g=$ EB 1-step estimator}
\end{subfigure}
\caption{Comparison of CB and UE across different data models, algorithms, and
  loss functions.} 
\label{fig:cb_ue_vp}
\end{figure}

The results are displayed in Figure \ref{fig:cb_ue_vp}, with each panel
(a)--(d) displaying a different combination of data model and algorithm $g$. In
each panel, the average test error estimate is displayed for each method (CB or
UE), as well as standard errors measured over the 100 repetitions. Furthermore,
the black points denote the true estimands (computed via Monte Carlo): $\Err(g)$
for UE, and $\Err_p(g)$ for CB. As expected, all estimators are seen to be
roughly unbiased for their targets: $\Err(g)$ or $\Err_p(g)$. Interestingly, we 
also see that for squared loss, the target $\Err_p(g)$ clearly decreases as $p$
grows, but for deviance loss, the behavior of $\Err_p(g)$ tends to be more
robust to growing $p$ (and can increase or decrease, depending on the setting).  

In panel (a), which is the low-dimensional regression setting, with Poisson
regression as the algorithm $g$, we can see that UE has a noticeably lower
standard error than CB at the lowest binomial noise level of $p=0.05$,
particularly for deviance loss. This is the only setting in which this
happens. In all others, the CB estimator at the lowest noise level has either
comparable or smaller variability than UE. In fact, in panel (c), which is the
high-dimensional regression setting, with CV-tuned lasso as the algorithm $g$,
we see that UE has a dramatically higher standard error than CB at any level of
noise $p$. The algorithm $g$ is inherently unstable here, because CV (operating 
in high-dimensions, and at a moderate SNR) can choose very different tuning  
parameter values across different data instances. Despite this, CB is able to
deliver estimates of reasonably low variance, since it averages across draws of
auxiliary binomial noise, which acts as a method of smoothing (like bagging). We
note that the analogous phenomenon also occurs in the Gaussian setting, as
observed by \citet{oliveira2021unbiased}.

\subsection{CB versus UE, sampling summands}
\label{sec:cb_ue_ss}

The unbiased estimator in \eqref{eq:ue_bregman} requires $n+1$ runs of the
algorithm $g$, making it computationally expensive for large sample sizes. In
contrast, the number of runs of $g$ required by the CB estimator in
\eqref{eq:cb_bregman} is $B$, which is a user-chosen parameter (recall that any 
choice of $B$ results in an unbiased estimator $\CB_p(g)$ for $\Err_p(g)$,
whereas larger $B$ reduces the variance of the estimator). In the last
subsection, we fixed $n = B = 100$. In this one, we consider larger much
sample sizes, with $n$ ranging from $10^3$ to $10^5$. We maintain $B = 100$,
but we equate computational costs between UE and CB by sampling $m = 100$
summands uniformly at random (and without replacement) from
\eqref{eq:ue_sqr} or \eqref{eq:ue_dev}, and then scaling up the resulting sum by 
$n/m$. We denote the estimator resulting from this ``sampling summands''
approach by \smash{$\UESS(g)$}, which is unbiased for $\Err(g)$.    

For the data model, we draw $Y_i \sim \Pois(\mu)$, independently, $i=1,\dots,n$, 
where $\mu$ ranges from 0.5 to 30. For the algorithm, we use a simple linear
shrinkage estimator: \smash{$g(y) = 0.8y + 0.2 \bar y + 0.01 1\{ \bar y = 0
  \}$}. We note that this choice is generally favorable to UE, and more unstable  
algorithms $g$ would only create more variability for UE relative to CB, and
thus look more favorable to CB, as observed in the last subsection. For the
binomial noise parameter, we set \smash{$ p = \min\{ 0.1, \sum_{i=1}^n
  \mu_i/\sum_{i=1}^n \mu_i^2 \}$}, which roughly balances the leading terms in
the reducible variance upper bound \eqref{eq:cb_rvar_bd}.

\begin{figure}[tb]
\centering
\begin{subfigure}[b]{0.875\textwidth}
\includegraphics[width=\textwidth]{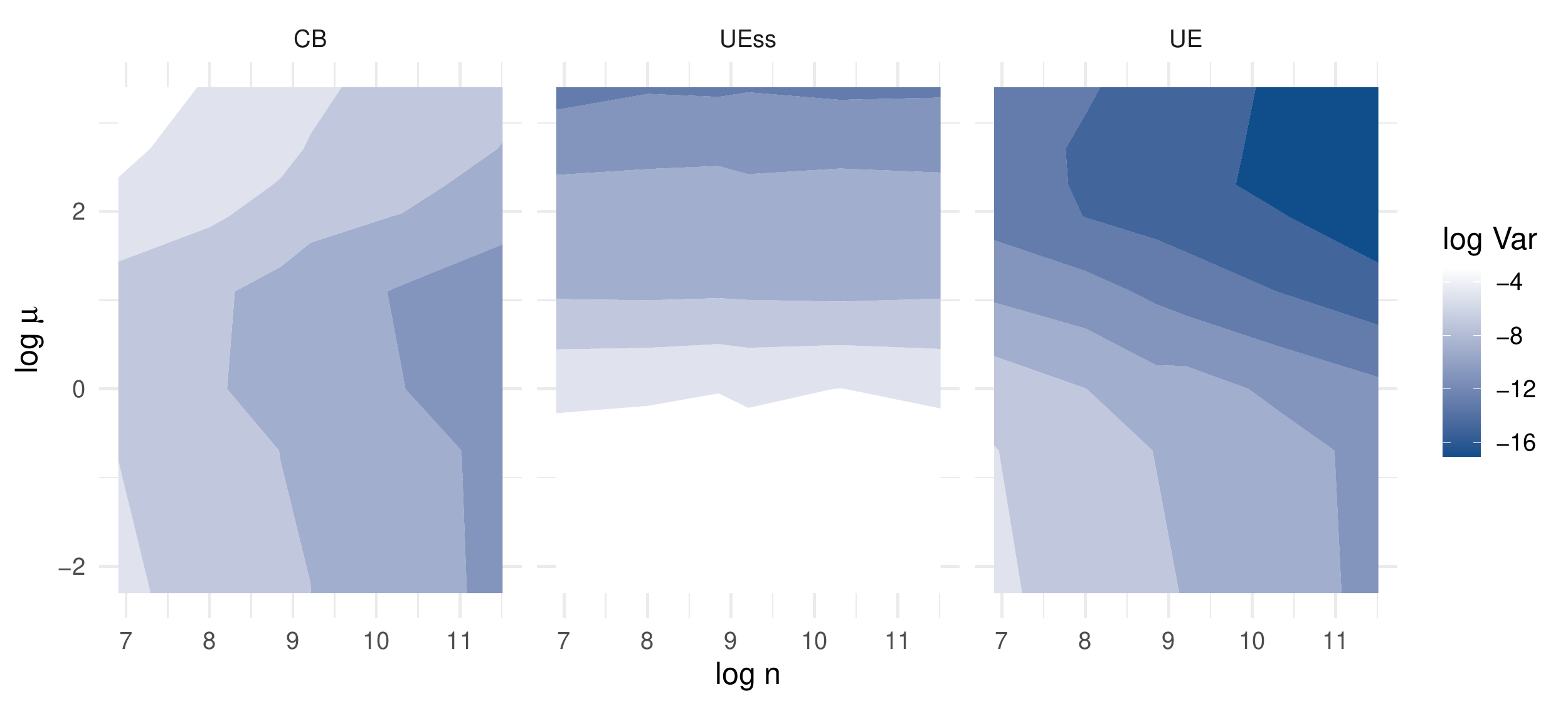}
\caption{Deviance loss}
\end{subfigure}
\begin{subfigure}[b]{0.875\textwidth}
\includegraphics[width=\textwidth]{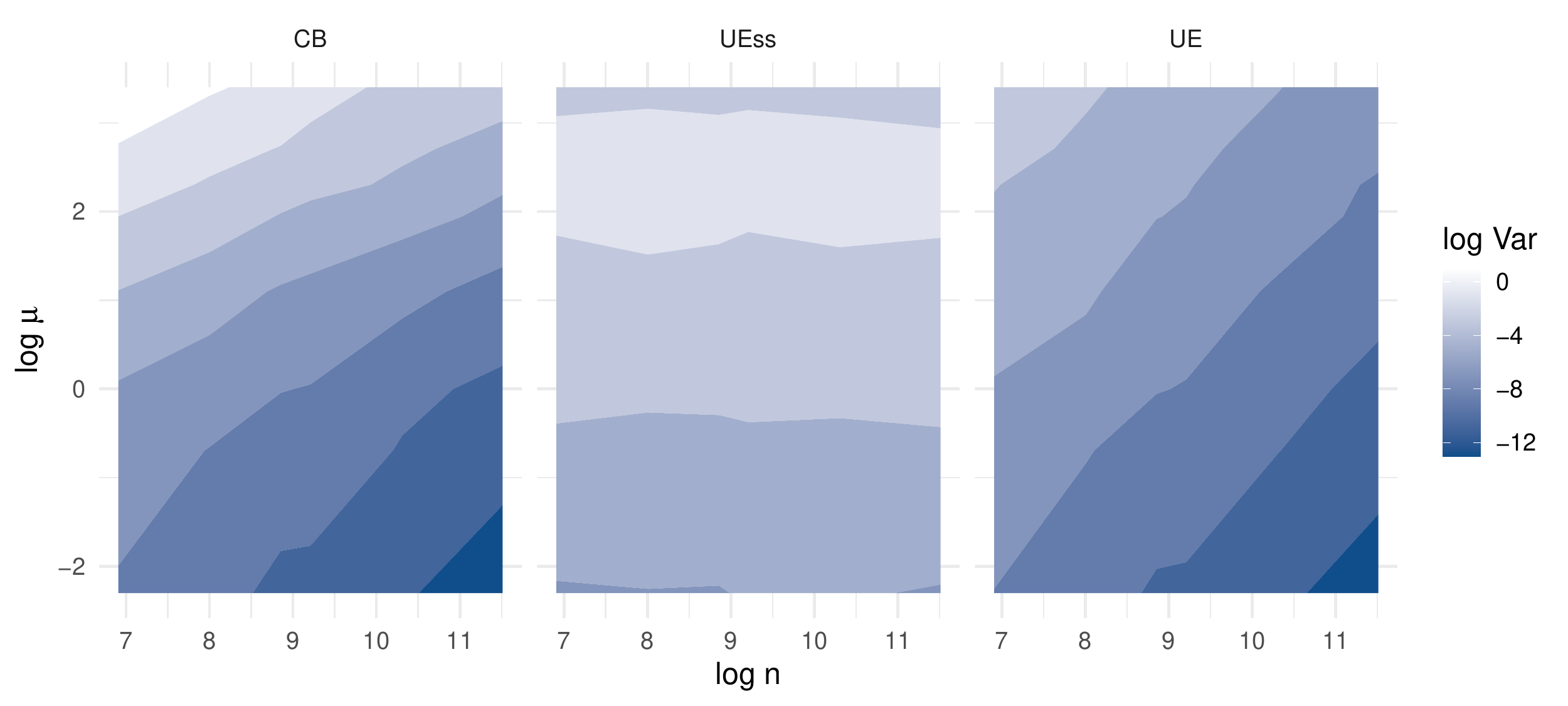}
\caption{Squared loss}
\end{subfigure}
\caption{Variability of CB, \smash{$\UESS$}, and UE as functions of $n$ and 
  $\mu$, for a simple linear shrinkage estimator.} 
\label{fig:cb_ue_ss}
\end{figure}

The results are displayed in Figure \ref{fig:cb_ue_ss}. For deviance loss, the
results are overall quite favorable for CB: it has a lower variance
than \smash{$\UESS$} (darker shade of blue) in all but the top left corner,
which corresponds to small $n$ and large $\mu$. In fact, the variability of CB 
is quite similar (across all $n,\mu$) to that of UE for deviance loss, even
though the former is considerably cheaper ($B=100$ runs of the algorithm $g$,
versus $n$ runs). For squared loss, there is more of a clear tradeoff: for large
values of $\mu$  (i.e., roughly $\log \mu > 2$, or $\mu > 7.38$), we see that CB
has greater variability than \smash{$\UESS$}; for moderate values of $\mu$
(roughly $\log \mu$ between 0 and 2, or $\mu$ between 1 and 7.38), CB has
comparable variability for small $n$ and smaller variability for large $n$;
while for small values of $\mu$ (roughly $\log \mu < 0$, or $\mu < 1$), CB has
smaller variability than \smash{$\UESS$}.  

\section{Applications}
\label{sec:applications}

\subsection{Image denoising}

We consider the following Poisson image denoising framework from
\citet{Harmany2012} (motivated by the study of Poisson noise or shot
noise in areas such as microscopy and astrophotography). We observe data
\smash{$Y_i \sim \Pois(f^*_i)$}, independently, $i=1,\dots,n$, where \smash{$f^* 
  \in \R^n_+$} is an unknown signal of interest, which we assume has the
structure of an $N \times N$ image, where \smash{$n = \sqrt{N}$}. We consider an
estimator \smash{$\hf = g(Y)$} for $f^*$ given by solving the optimization
problem: 
\begin{equation}
\label{eq:tv_denoising}
\minimize_{f \geq 0} \; \sum_{i=1}^n \big(- Y_i \log(f_i + \rho) +  f_i + \rho 
\big) + \tau \sum_{i \sim j} |f_i - f_j|, 
\end{equation}
where $\rho$ is a small positive constant to avoid the singularity $f=0$, and
we write $i \sim j$ to indicate that indices $i,j$ are adjacent to each other in
the ordering determined by the underlying image. This estimator is a form of
total variation (TV) regularized Poisson image denoising; we note that the loss  
term is equivalent (up to constants) to the Poisson deviance between $f$ and
$Y$, and the penalty term encourages the estimated image \smash{$\hf$} to be
piecewise constant, with the tuning parameter $\tau \geq 0$ determining the
strength of regularization.   

As an example, we consider the well-known synthetic phantom image $f^*$, of
resolution $128 \times 128$ (so that $n = 16384$). We use CB to estimate the
test error of \smash{$\hf$}, the Poisson image denoising estimator defined in 
\eqref{eq:tv_denoising}, over a range of values of the tuning parameter
$\tau$. We consider both squared \eqref{eq:cb_sqr} and deviance 
\eqref{eq:cb_dev} loss, and set $B=50$ and $p=0.1$. We did not consider the
unbiased estimator in \eqref{eq:ue_sqr} or \eqref{eq:ue_dev} in this experiment,
due to its prohibitive computational cost (it requires $n = 16384$ refits of the
TV denoising estimator \eqref{eq:tv_denoising}).     

\begin{figure}[htb]
\centering
\includegraphics[width=\textwidth]{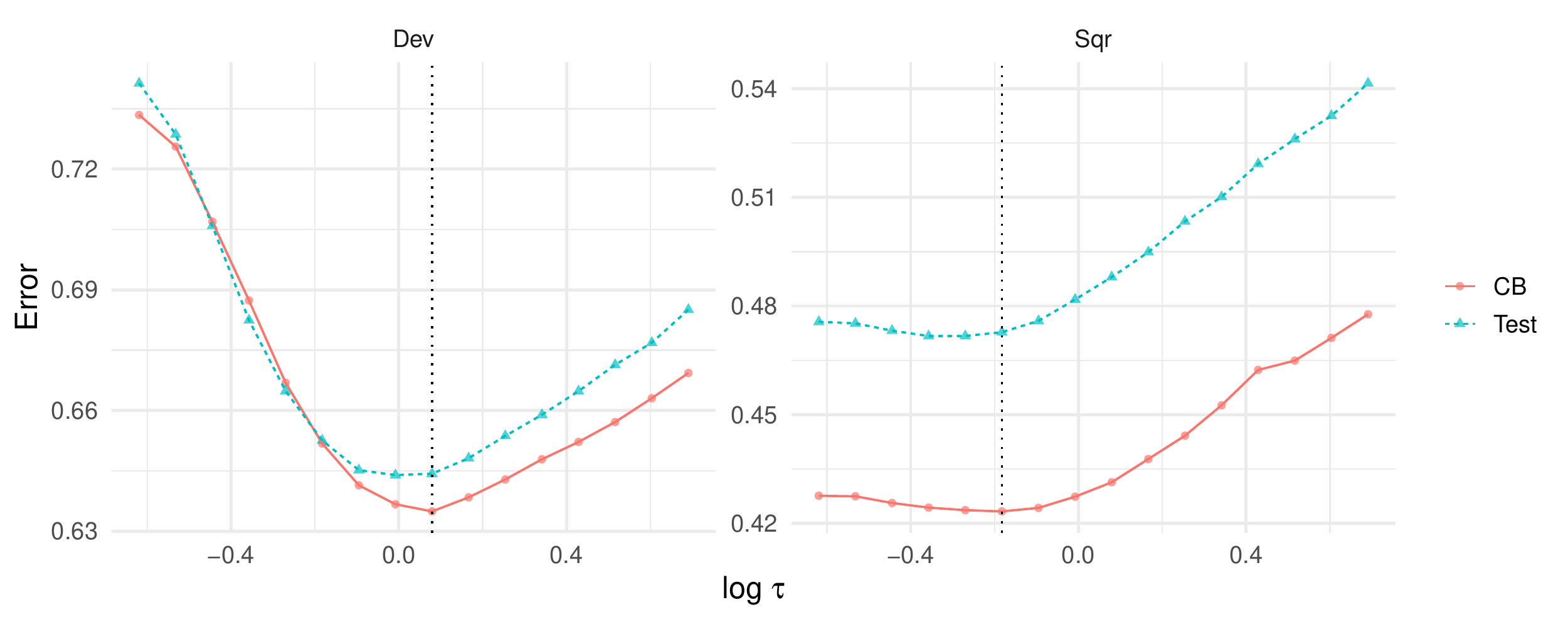}
\caption{Comparison of CB and true test error curves, as functions of the 
  tuning parameter $\tau$, for a Poisson image denoising estimator.}  
\label{fig:phantom_error}
\end{figure}

\begin{figure}[htb]
\centering
\includegraphics[width = \textwidth]{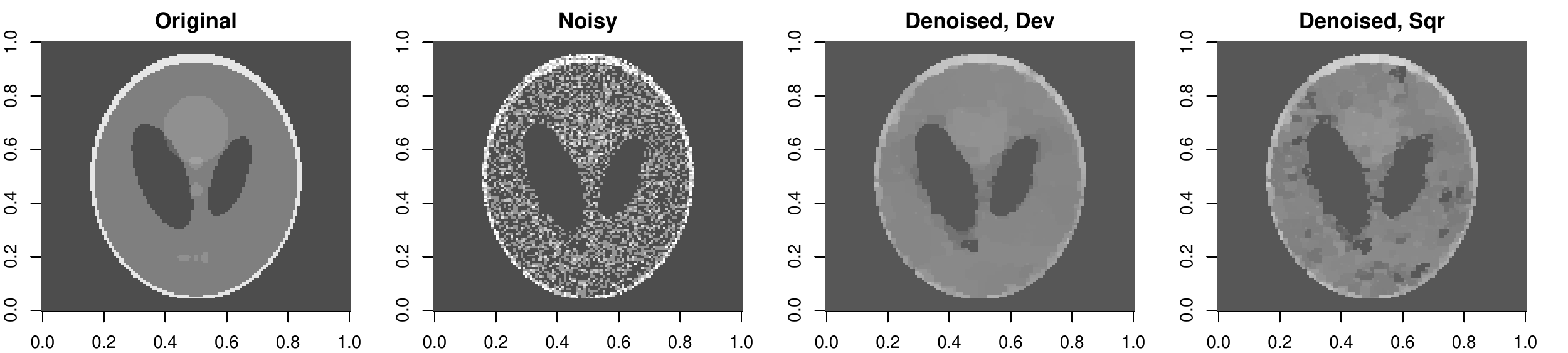}
\caption{Original, noisy, and denoised estimates at the CB-optimal value of
  $\tau$, for squared and deviance loss.}
\label{fig:phantom_image}
\end{figure}

Figure \ref{fig:phantom_error} displays the CB error curve and true error curve
(approximated by Monte Carlo), as functions of $\tau$, with separate panels for
squared and deviance loss. There are two points worth noting. First, despite the
gap between the CB and true test error curves (unsurprising, because $p=0.1$),
their curvature is similar; in particular, the value of $\tau$ minimizing the CB
curve is close to the value minimizing test error. This is the case for both
squared and deviance loss, and it shows that the CB estimator can be useful for
model tuning, even when $p$ is not small. Second, the value of $\tau$ minimizing
the CB curve is larger for deviance loss than it is for squared loss, marked by
the dotted lines in each panel. This translates into a greater degree of
regularization, as can be seen clearly in Figure \ref{fig:phantom_image}, which
plots the denoised estimates themselves at the CB-optimal values of $\tau$, for 
squared and deviance loss.

\subsection{Density estimation}
\label{sec:density}

We study density estimation, which can be turned into a Poisson regression via
Lindsey's method \citep{1974LindseyProb, Lindsey1992, EfronTibs1996}. The basic
idea is to discretize the domain into bins, and model the count in each bin as a
Poisson random variable, with the mean parameter constrained or regularized to
be smoothly varying across the bins. We can then estimate the mean parameter 
by (regularized) maximum likelihood, which in turn gives a discretized density
estimate.  

In particular, we consider an example from \citet{Phillips2006} on the
distribution of \emph{Bradypus variegatus}, a lowland species of sloth found
across Central and South America. Each data point consists of latitude and
longitude pair, representing a site where a sloth was seen, and the data set
contains 116 total sightings. To form a 2d density estimate, we apply Lindsey's
method, with 200 equally-spaced bins along the latitude and longitude axes, and
we use a P-spline to model the Poisson mean parameter. P-splines were first
proposed by \citet{Eilers1996}; details for the 2d case can be found in
\citet{Eilers2003, Eilers2006, Eilers2021}. We use a 2d cubic B-spline
parametrization for the mean function with 30 knots in each dimension, and we
use a penalty on the sum of squared second-order differences across adjacent 
B-spline parameters along each dimension. Moreover, we consider two versions of
this penalty: an \emph{anisotropic} version, which decouples the regularization
strength along each dimension, and has two tuning parameters $\lambda_1,
\lambda_2 \geq 0$; and an \emph{isotropic} version, which ties together the
regularization strength over the dimensions, and has a single tuning parameter  
$\lambda \geq 0$.

\begin{figure}[htbp]
\centering
\begin{subfigure}[b]{0.4\textwidth}
\includegraphics[width=\textwidth]{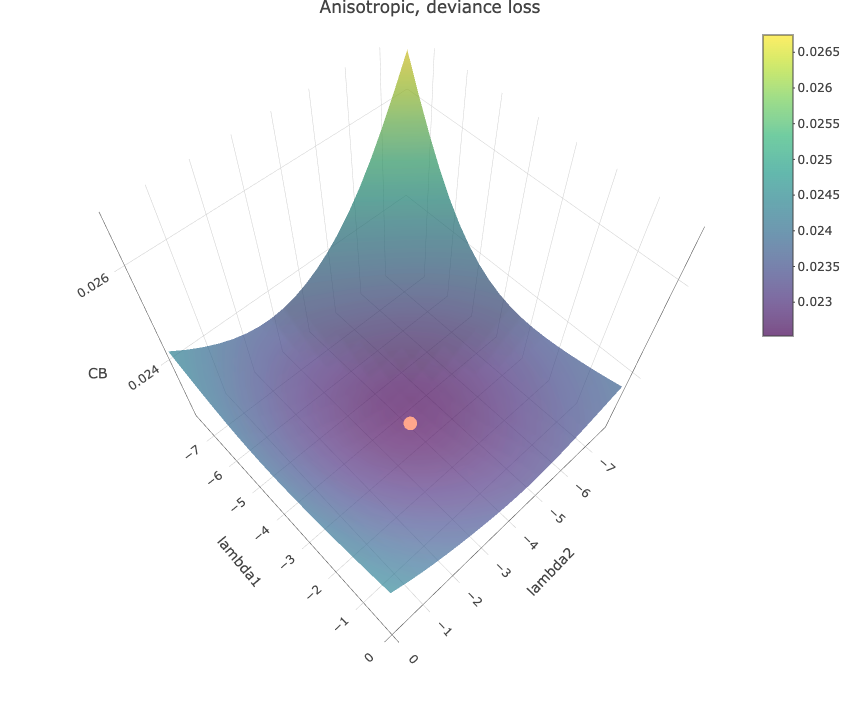}
\caption{Anisotropic penalty, deviance loss}
\end{subfigure}
\begin{subfigure}[b]{0.4\textwidth}
\includegraphics[width=\textwidth]{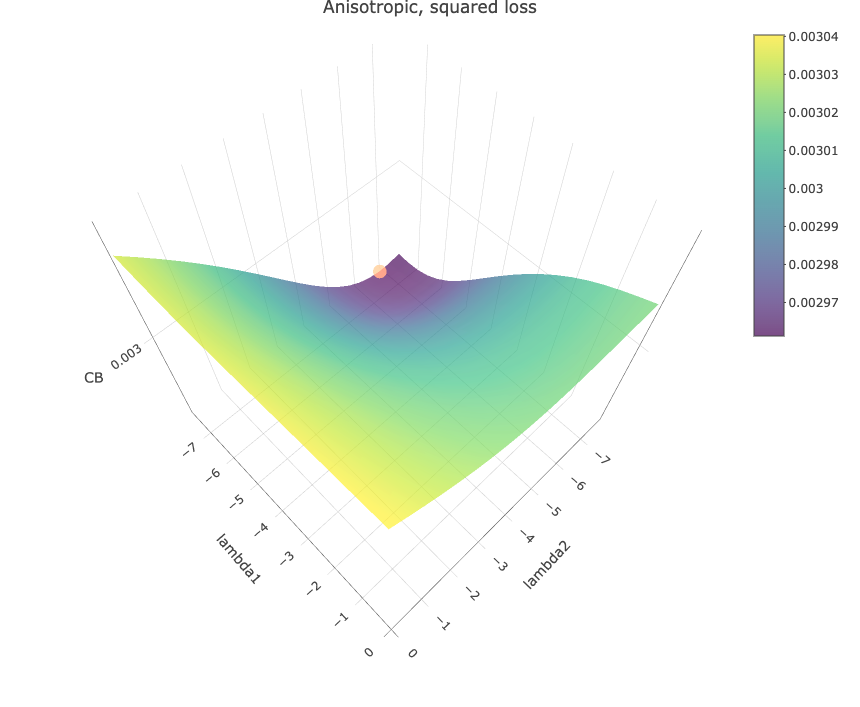}
\caption{Anisotropic penalty, squared loss}
\end{subfigure}
\begin{subfigure}[b]{0.8\textwidth}
\centering
\includegraphics[width=\textwidth]{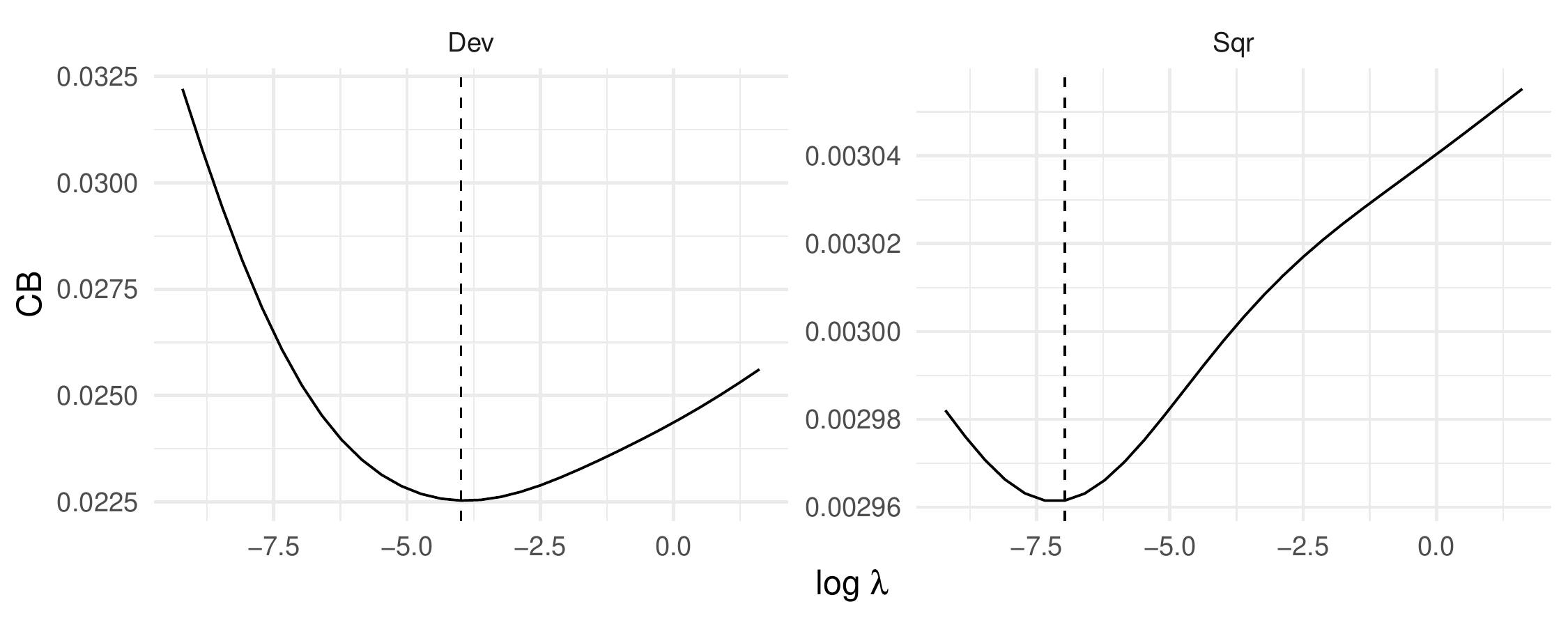}
\caption{Isotropic penalty}
\end{subfigure}
\caption{CB curves for anisotropic and isotropic penalties as a function of the
  tuning parameter(s).}
\label{fig:sloth_error}

\bigskip
\includegraphics[width = 0.7\textwidth]{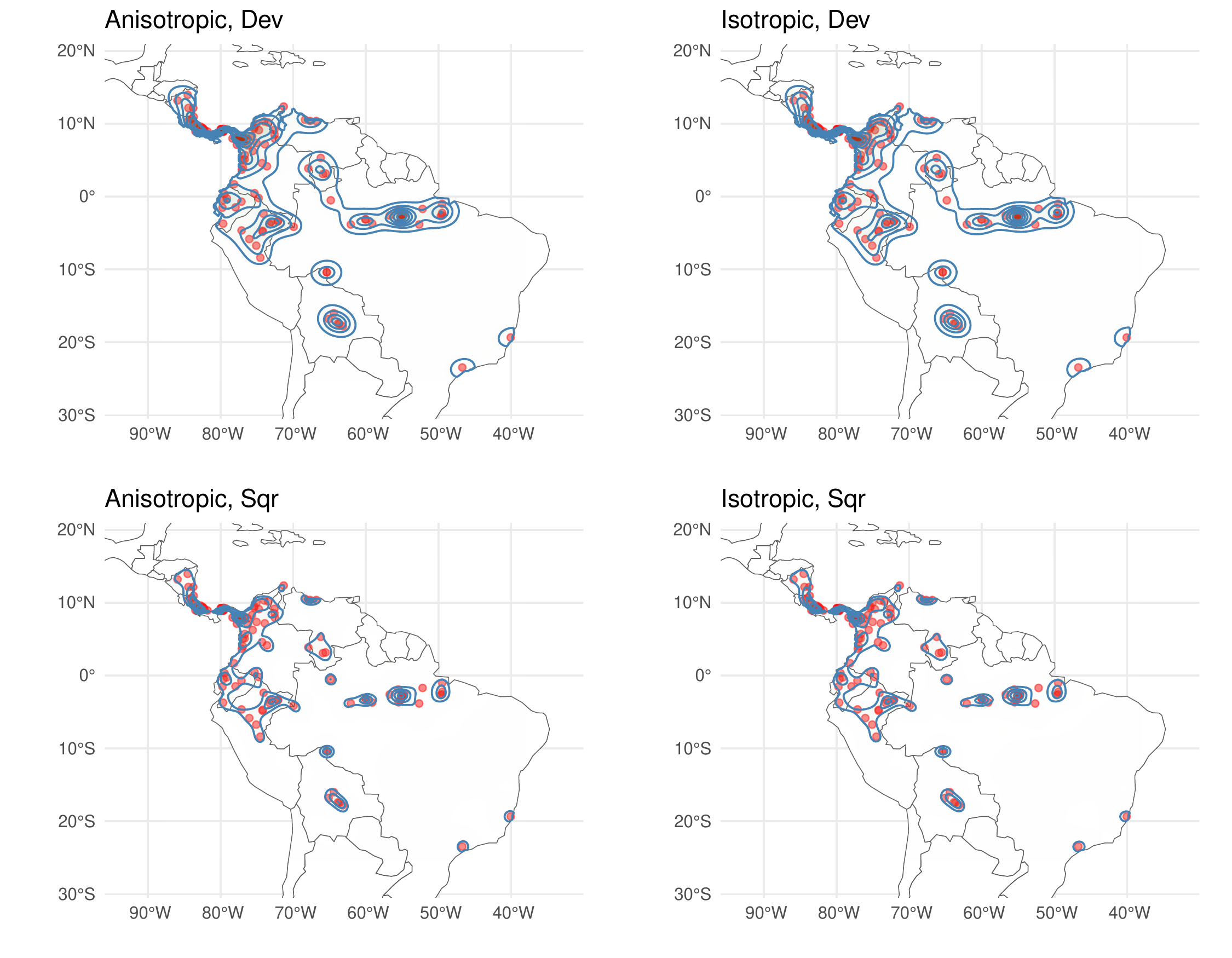}
\caption{Density estimates at the CB-optimized values of the tuning parameters
  for anisotropic and isotropic penalties, and squared and deviance loss.}
\label{fig:sloth_density}
\end{figure}

Figure \ref{fig:sloth_error} shows the results of using the CB method, with
$B=100$ and $p=0.1$, to estimate both squared and deviance loss, across a range 
of tuning parameter values. For either the anisotropic or isotropic penalty, it
is clear that minimizing CB-estimated deviance loss leads to larger tuning
parameter values---and hence more regularized density estimates---than
minimizing CB-estimated squared loss. As we can see in Figure
\ref{fig:sloth_density}, this leads to more plausible looking density estimates
(top row). The estimates obtained optimizing CB-estimated squared loss (bottom
row) appear too concentrated around the observations themselves.  

\section{Discussion}
\label{sec:discussion}

We proposed and analyzed a coupled bootstrap (CB) method for test error
estimation in the Poisson means problem, with a focus on squared and Poisson
deviance loss functions. The CB estimator, for any choice of the binomial noise
parameter $p>0$, is unbiased for an intuitive target: $\Err_p(g)$, the test 
error of the given algorithm $g$, when the mean vector in the Poisson
model has been shrunk from $\mu$ to $(1-p) \mu$. Importantly, this unbiasedness
requires no assumptions on $g$ whatsoever. Furthermore, we proved that in the
noiseless limit $p \to 0$, the CB estimator (with infinite bootstrap iterations)
reduces to the natural unbiased estimator (UE) for test error that comes from an
application of Hudson's lemma. However, CB has two key advantages over
UE. First, it requires running the algorithm $g$ in question $B$ times (which is
a user-controlled parameter in CB), versus $n+1$ times (which comes directly
from the form of UE). Second, as we show in our experiments, CB can often have
smaller variance than UE, particularly when the underlying algorithm $g$ is
unstable.

We finish by emphasizing that it would be interesting to extend the CB framework
to other data models, beyond Gaussian, as in \citet{oliveira2021unbiased}, and 
Poisson, as in the current paper. To explain what would be required for this, it
may be helpful to first recap the general developments in Section
\ref{sec:cb}. Given any random vector $Y$, suppose that we can generate a pair
$(Y^*, Y^\dagger)$ such that: 
\begin{enumerate}[(i)]
\item $Y^*, Y^\dagger$ are independent; and 
\item $\E[Y^*] = \E[Y^\dagger]$.
\end{enumerate}
Then letting \smash{$\newY^*$} denote an independent copy of $Y^*$, Proposition  
\ref{prop:three_point} implies (as stated in \eqref{eq:cb_idea}, which we copy
here for convenience): 
\[
\text{$D_\phi(Y^\dagger, g(Y^*)) + \phi(Y^*) - \phi(Y^\dagger)$ \; is unbiased
  for \; $\E[ D_\phi(\newY^*, g(Y^*)) ]$}, 
\]
for any Bregman divergence $D_\phi$ which serves as our loss
function. Therefore, under properties (i) and (ii) we can estimate error as
measured by an arbitrary Bregman divergence, unbiasedly---granted, the error
here is defined when the training and test distributions are given by that of
$Y^*$, which is different from our original data distribution. This means that
there is actually an implicit third property that we need in order for us to
want to use the estimator in the above display:   
\begin{enumerate}
\item[(iii)] the law of $Y^*$ is ``close enough'' to that of $Y$ that
  \smash{$\E[D_\phi(\newY^*, g(Y^*))]$} is an ``interesting'' proxy target.  
\end{enumerate}
This is less explicit than either (i) or (ii) but it is just as important. To be
clear, the properties (i), (ii), and (iii) are already met by the existing
Gaussian and Poisson constructions. Moreover, for any given distribution of
$Y$, if we can fulfill (i), (ii), and (iii), then we can build a corresponding
CB estimator for the (proxy) test error by averaging the above construction over  
multiple bootstrap draws, as in \eqref{eq:cb_bregman}.            

Towards satisfying properties (i) and (ii), the recent paper of
\citet{Neufeld2023} provides a number of constructions which serve a related but  
distinct purpose, in a selective inference context. From some initial random vector $Y$, 
they seek to create a pair \smash{$(Y^{(1)}, Y^{(2)})$} which are independent,
and satisfy \smash{$Y^{(1)} + Y^{(2)} = Y$}. Fortunately, by simple rescaling,
one can check that their constructions (from their Table 2) can be adapted
to satisfy (i) and (iii) for the gamma, exponential, binomial, multinomial, and
negative binomial families of distributions. Meanwhile, property (iii) can be 
argued on a case-by-case basis. As an example, consider $n=1$ (only for
simplicity, the same idea can be applied coordinatewise in the multivariate 
case), and assume $Y \sim \mathrm{Exp}(\lambda)$, exponentially distributed with 
rate $\lambda > 0$. Then for arbitrary $\epsilon \in (0,1)$, we can define    
\begin{align*}
Z &\sim \mathrm{Beta}(\epsilon, 1-\epsilon), \\
Y^* &= \frac{Z}{\epsilon} \cdot Y, \\
Y^\dagger &= \frac{1-Z}{1-\epsilon} \cdot Y,
\end{align*}
where $\mathrm{Beta}(\alpha, \beta)$ denotes the beta distribution with shapes
$\alpha, \beta > 0$. From \citet{Neufeld2023}, we know that $Y^*, Y^\dagger$ are
independent, with $Y^* \sim \mathrm{Gam}(\epsilon, \epsilon \lambda)$ and
$Y^\dagger \sim \mathrm{Gam}(1-\epsilon, (1-\epsilon) \lambda)$, where 
$\mathrm{Gam}(\alpha, \beta)$ is the gamma distribution with shape 
$\alpha>0$ and rate $\beta>0$. Thus, we can see that $\E[Y^*] = \E[Y^\dagger]$,
and so (i) and (ii) are clearly satisfied. Furthermore, the distribution
$\mathrm{Gam}(\epsilon, \epsilon \lambda)$ of $Y^*$ is indeed similar to that  
$\mathrm{Exp}(\lambda)$ of $Y$, with the latter approaching the former as
$\epsilon \to 1$, which confirms our property (iii).  

Given the success we have seen for the CB method in the Gaussian and Poisson
settings, we feel these and other extensions are worth exploring, along of
course with theory and experiments to support their use as potentially core
tools for error and risk estimation in denoising and fixed-X regression
problems.    

\section*{Acknowledgements}

We thank Aaditya Ramdas and Boyan Duan for inspiring discussions. NLO was
supported by an Amazon Fellowship.   

\bibliographystyle{plainnat}
\bibliography{ryantibs, cbpois}

\newpage
\appendix

\section{Proofs and additional details for Section \ref{sec:cb}}

\subsection{Proof of Lemma \ref{lem:pois_binom}}
\label{app:pois_binom}

The joint probability mass function of $(Y,\omega)$ is given by
\begin{align*}
\P(Y = y, \omega = w) &= \P(\omega = k \,|\, Y = y)\P(Y = y) \\
&= \binom{y}{k} p^k(1-p)^{y-k} \frac{\mu^ye^{-\mu}}{y!}, 
\end{align*}
for all $y \geq 0$ and $k \in \{0, \dots, y\}$. The probability mass function of
$(U,V) = (Y-\omega, \omega)$ is thus
\begin{align*}
\P(U = u, V = v) &= \P(Y = u+v, \omega = v) \\
&= \binom{u+v}{v} p^v(1-p)^{u}\frac{\mu^{u+v}e^{-\mu}}{(u+v)!}\\
&= \frac{1}{u!v!} p^v(1-p)^{u}\mu^u \mu^ve^{-(p+1-p)\mu}\\
&= \frac{((1-p)\mu)^ue^{-(1-p)\mu}}{u!}\frac{(p\mu)^ve^{-p\mu}}{v!}.  
\end{align*}
This shows that $U$ and $V$ are independent $\Pois((1-p)\mu)$ and $\Pois(p\mu)$
random variables, respectively, which proves the desired result.

\subsection{Divergence deviance terms in UE versus CB}
\label{app:diverging_dev}

For deviance loss, a summand in the unbiased estimator \eqref{eq:ue_dev} is
undefined if $Y_i \not= 0$ and $g_i(Y-e_i)=0$, and a summand in the CB estimator
\eqref{eq:cb_dev} is undefined if \smash{$Y^\dagger_i \not= 0$} and
\smash{$g_i(Y^*) = 0$}, where $Y^*,Y^\dagger$ are a sample from 
\eqref{eq:cb_noise} (we have hidden the dependence on $b$). Suppose for
simplicity that $n=1$ and $g(0)=0$. We can then compute, under $Y \sim 
\Pois(\mu)$, the probability with which this happens for the unbiased and CB
estimators. For the unbiased estimator, this is:   
\[
\P(Y = 1) = e^{-\mu} \mu.
\]
For the CB estimator, this is:
\begin{align*}
\sum_{y=1}^\infty \P(Y^* = 0 \,|\, Y = y) \P(Y = y) 
&= \sum_{y=1}^\infty \P(\omega = Y \,|\, Y = y) \P(Y = y) \\ 
&= \sum_{y=1}^\infty p^y e^{-\mu} \mu^y/y! \\
&= e^{-(1-p)\mu} \sum_{y=1}^\infty e^{-p\mu} (p\mu)^y/y! \\
&= e^{-(1-p)\mu} (1-e^{-p\mu}) \\
&= e^{-\mu}(e^{p\mu}-1).
\end{align*}
Figure \ref{fig:diverging_dev} plots these two probabilities, a functions of
$\mu$, for $p$ ranging over 0.01, 0.1, 0.3, 0.5. For small $p$, it is clear that
the CB estimator has much lower probability of being ill-defined than the
unbiased estimator.     

\begin{figure}[htb]
\centering
\includegraphics[width=\textwidth]{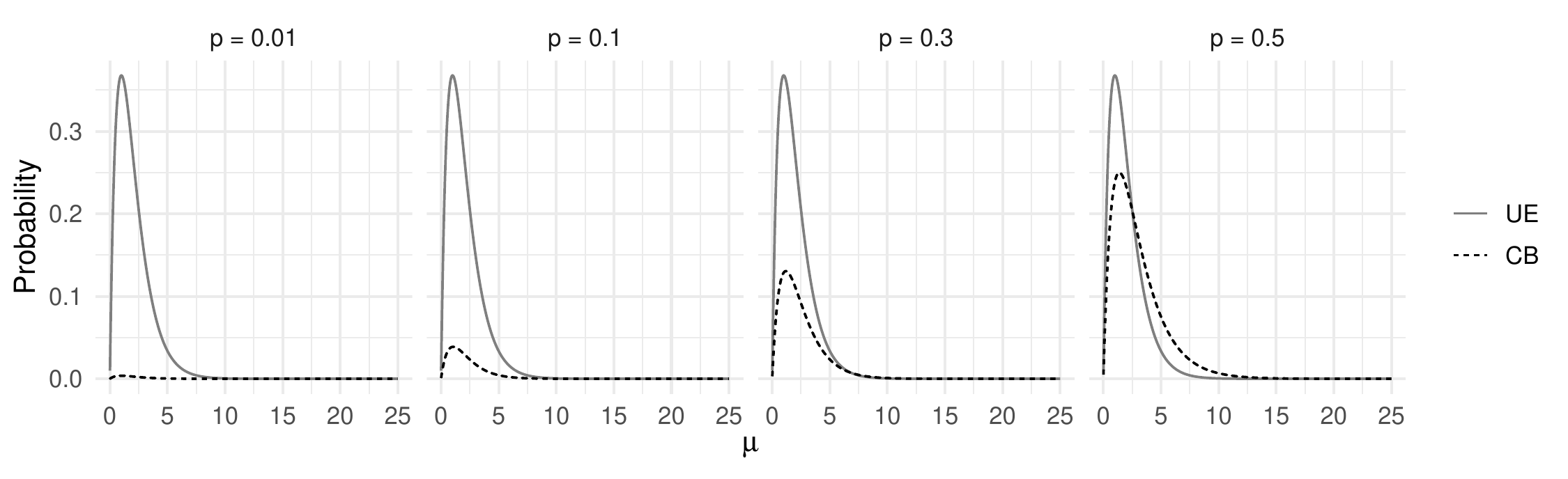}
\caption{Comparison of the probabilities of an individual summand from the 
  unbiased and CB estimators being ill-defined, as functions of the mean
  $\mu$. The four panels show different values of $p$.} 
\label{fig:diverging_dev}
\end{figure}

\allowdisplaybreaks
\subsection{Proof of Proposition \ref{prop:test_error_p_smoothness}}
\label{app:test_error_p_smoothness}

In this proof we use \smash{$f(Y_p, \newY_p)$} to denote \smash{$D_\phi(\newY_p,  
  g(Y_p))$}. First, we show that the map is continuous. For any $p \in [0,1)$,
\begin{align*}
\lim_{t \to p} \E[f(Y_t, \newY_t)] 
&= \lim_{t \to p} 
\sum_{y_1, \dots, y_n=0}^\infty \sum_{\newy_1, \dots, \newy_n=0}^\infty 
f(y, \newy)  \frac{e^{-2(1-t)\sum_{i=1}^n\mu_i} 
(\prod_{i=1}^n \mu_i^{y_i+\newy_i}) (1-t)^{\sum_{i=1}^n y_i+\newy_i}} 
{\prod_{i=1}^n y_i!\newy_i!} \\
&= \sum_{y_1, \dots, y_n=0}^\infty \sum_{\newy_1, \dots, \newy_n=0}^\infty  
f(y, \newy)  \frac{e^{-2(1-p)\sum_{i=1}^n\mu_i} 
(\prod_{i=1}^n \mu_i^{y_i+\newy_i}) (1-p)^{\sum_{i=1}^n y_i+\newy_i}} 
{\prod_{i=1}^n y_i!\newy_i!} \\
&= \E[f(Y_p, \newY_p)].
\end{align*}
To switch the infinite sum and the limit, we used the dominated convergence
theorem (DCT). The dominating function is given by
\[
h(\newy, y) = f(y, \newy) \frac{\prod_{i=1}^n \mu_i^{y_i+\newy_i}}
{\prod_{i=1}^n y_i!\newy_i!},
\]
which is integrable by assumption:
\begin{align*}
\sum_{y_1, \dots, y_n=0}^\infty \sum_{\newy_1, \dots, \newy_n=0}^\infty
 h(\newy, y) 
&=  \sum_{y_1, \dots, y_n=0}^\infty \sum_{\newy_1, \dots, \newy_n=0}^\infty 
f(y, \newy) \frac{\prod_{i=1}^n \mu_i^{y_i+\newy_i}}{\prod_{i=1}^n 
y_i!\newy_i!}e^{-2\sum_{i=1}^n\mu_i}e^{2\sum_{i=1}^n \mu_i} \\
&= e^{2\sum_{i=1}^n\mu_i}\E[f(Y_0, \newY_0)] < \infty.
\end{align*} 

Next, for the first derivative, note that
\begin{align*}
\frac{\partial}{\partial p} \E[f(Y_p, \newY_p)]
&= \frac{\partial}{\partial p} \sum_{y_1, \dots, y_n=0}^\infty 
\sum_{\newy_1, \dots, \newy_n=0}^\infty f(y, \newy)  
\frac{e^{-2(1-p)\sum_{i=1}^n\mu_i} \prod_{i=1}^n \mu_i^{y_i+\newy_i} 
(1-p)^{\sum_{i=1}^n y_i+\newy_i}}{\prod_{i=1}^n y_i!\newy_i!} \\
&= \sum_{y_1, \dots, y_n=0}^\infty \sum_{\newy_1, \dots, \newy_n=0}^\infty 
\frac{f(y, \newy)\prod_{i=1}^n \mu_i^{y_i+\newy_i}}
{\prod_{i=1}^n y_i!\newy_i!}\frac{\partial}{\partial p}  
e^{-2(1-p)\sum_{i=1}^n\mu_i}(1-p)^{\sum_{i=1}^n y_i+\newy_i} \\
&= \sum_{y_1, \dots, y_n=0}^\infty \sum_{\newy_1, \dots, \newy_n=0}^\infty 
\frac{f(y, \newy)\prod_{i=1}^n \mu_i^{y_i+\newy_i}}
{\prod_{i=1}^n y_i!\newy_i!} \bigg( 2\sum_{i=1}^n\mu_i
e^{-2(1-p)\sum_{i=1}^n\mu_i}(1-p)^{\sum_{i=1}^ny_i+\newy_i} \\
& \qquad - \sum_{i=1}^n (y_i+\newy_i)e^{-2(1-p)\sum_{i=1}^n\mu_i}
(1-p)^{\sum_{i=1}^ny_i+\newy_i -1} \bigg) \\
&= 2\sum_{i=1}^n\mu_i \E[f(Y_p, \newY_p)] -\frac{1}{1-p} 
\E\big[ f(Y_p, \newY_p)\langle \newY_p+Y_p, 1_n \rangle\big], 
\end{align*}
where we used DCT to switch the sums and derivative, using a similar dominating 
function as above and recognizing that the summand is Lipschitz in $p$ with
Lipschitz constant depending on \smash{$\mu,y,\newy$}. Now to prove continuity  
of the first derivative, we apply the above continuity result with
\smash{$f(Y_p, \newY_p)\langle \newY_p+Y_p, 1_n \rangle$} in place of
\smash{$f(Y_p, \newY_p)$}. For $k\th$ derivatives, the argument follows from
sequential applications of the same continuity result and similar derivative
calculations.  

\section{Proof of Theorem \ref{thm:noiseless_limit}}
\label{app:noiseless_limit}

We start with a lemma that contains two key results to be used in the proof of
Theorem \ref{thm:noiseless_limit}.      

\begin{lemma}
\label{lem:limiting_inner_prod} 
Let $h: \Z^n_+ \to \R^n_+$, and set $h(z) = 0$ for $z \notin \Z^n_+$. Fix any $y
\in \Z_+^n$, and draw $\omega_i \sim \Binom(y_i, p)$, independently, for
$i=1,\dots,n$, where $p \in [0,1)$. Then, for each $i=1,\dots,n$,
\begin{enumerate}[(a)]
\item $\lim_{p \to 0} \E[h_i(y-\omega)] = h_i(y)$;
\item $\lim_{p \to 0} \frac{1-p}{p} \cdot \E[\omega_i h_i(y-\omega)] =    
  y_i h_i(y-e_i)$.
\end{enumerate}
Recall, we use $e_i \in \R^n$ to denote the vector whose $i\th$ entry is 1, with 
all others 0. 
\end{lemma}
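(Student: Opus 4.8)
The plan is to prove the two claims in Lemma \ref{lem:limiting_inner_prod} by direct computation, exploiting the fact that $y \in \Z_+^n$ is \emph{fixed}, so that $\omega$ takes values in a finite set and all expectations are finite sums. This finiteness is what makes the limits $p \to 0$ easy to pass inside the expectation without invoking any dominated convergence machinery.

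For part (a), I would write $\E[h_i(y - \omega)] = \sum_{0 \le w \le y} h_i(y-w) \prod_{j=1}^n \binom{y_j}{w_j} p^{w_j}(1-p)^{y_j - w_j}$, a finite sum (componentwise $0 \le w \le y$). As $p \to 0$, each term $\prod_j p^{w_j}(1-p)^{y_j - w_j}$ tends to $0$ unless $w = 0$, in which case it tends to $1$. Hence the only surviving term is $w = 0$, giving $h_i(y)$. This is essentially the statement that $\omega \to 0$ in probability (indeed, $\P(\omega = 0) \to 1$), combined with boundedness of the finite sum.

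For part (b), the key observation is that $\E[\omega_i h_i(y - \omega)]$ picks out, in its leading order in $p$, the terms where $\omega$ has exactly one unit of mass in coordinate $i$. Writing the sum out, the $w$ with $w_i \ge 1$ contribute a factor $p^{w_i}$; the dominant contribution as $p \to 0$ comes from $w = e_i$, which carries a factor $w_i \binom{y_i}{1} p (1-p)^{y_i - 1} = y_i \, p \,(1-p)^{y_i-1}$ and, in all other coordinates $j \ne i$, the factor $(1-p)^{y_j}$ (forcing $w_j = 0$ to leading order); this term equals $y_i h_i(y - e_i) \cdot p (1-p)^{\sum_j y_j - 1} + o(p)$. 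Multiplying by $(1-p)/p$ and letting $p \to 0$ gives $y_i h_i(y - e_i)$, using that $(1-p)^{\sum_j y_j}/1 \to 1$. I would make the ``all other terms are $o(p)$'' claim precise by noting each remaining $w$ has $\sum_j w_j \ge 2$ or has $w_i = 0$ with $\sum_j w_j \ge 1$; in the former case the term is $O(p^2)$, and in the latter case $\omega_i = w_i = 0$ so the term vanishes identically. The convention $h(z) = 0$ for $z \notin \Z_+^n$ handles the edge case $y_i = 0$, where the claimed limit is $0 \cdot h_i(y - e_i) = 0$ and indeed $\omega_i \equiv 0$ so the left side is $0$; it also makes $h_i(y-e_i)$ well-defined notationally (matching the $g_i(-1) = 0$ convention from Hudson's lemma).

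The main obstacle is purely bookkeeping: organizing the multi-index sum in part (b) so that the leading-order-in-$p$ term is cleanly separated from the $o(p)$ remainder, and handling the degenerate coordinates ($y_j = 0$) without special-casing everything. There is no analytic difficulty — everything is a finite sum of polynomials in $p$ — so once the term $w = e_i$ is isolated the result is immediate. I would present part (a) first as a warm-up, then reuse the same ``$\omega$ concentrates near $0$'' idea for part (b) with one extra order of $p$ tracked.
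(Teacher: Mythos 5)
Your proposal is correct and follows essentially the same route as the paper: both arguments expand the expectation as a finite sum over $0 \le w \le y$, isolate the $w=0$ term for part (a) and the $w=e_i$ term for part (b), and dispose of the remaining terms by noting they either vanish identically (when $w_i=0$) or carry a factor $p^{\sum_j w_j}$ with $\sum_j w_j \ge 2$ and hence are $o(p)$. The paper formalizes this with an explicit partition of the index set ($\Omega_{i0}$, $\Omega_{i10}$, $\Omega_{i1\bar 0}$), which is exactly the case split you describe.
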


\begin{proof}
Define the following sets:
\begin{align*}
\Omega &= \{(\omega_1, \dots, \omega_n):
\omega_j \in \{0, \dots, y_j\}, j = 1,\dots,n\}, \\
\Omega_{\setminus 0} &= \Omega \setminus \{0\}, \\
\Omega_{i0} &= \{(\omega_1, \dots, \omega_n):
\omega_i = 0, \omega_j \in \{0, \dots, y_j\}, j \not= i\}, \\
\Omega_{i10} &= \{(\omega_1, \dots, \omega_n):
\omega_i = 1, \omega_j = 0, j \not=i \}, \\
\Omega_{i1\bar 0}&= \{(\omega_1, \dots, \omega_n):
\omega_i \geq 1, \omega_j \in \{0, \dots, y_j\},
j \not= i\} \setminus \Omega_{i10}.
\end{align*}
For the first result (a), we have 
\begin{align*}
\lim_{p\to 0} \E[h_i(y-\omega)] 
&= \lim_{p\to 0} \sum_{\omega \in \Omega} h_i(y-\omega)
p^{\sum_{k=1}^n\omega_k} (1-p)^{\sum_{k=1}^n y_k-\omega_k}
\prod_{j=1}^n\binom{y_j}{\omega_j} \\
&= \lim_{p\to 0} h_i(y) (1-p)^{\sum_{k=1}^n y_k} \\
&\qquad + \sum_{\omega \in \Omega_{\setminus 0}} 
\lim_{p\to 0} h_i(y-\omega)p^{\sum_{k=1}^n\omega_k} 
(1-p)^{\sum_{k=1}^n y_k-\omega_k}
\prod_{j=1}^n\binom{y_j}{\omega_j} \\
&= h_i(y),
\end{align*}
since for $\omega \in \Omega_{\setminus 0}$, we have that
\smash{$\sum_{k=1}^n\omega_k > 0$}. For the second result (b), 
\begin{align*}
\lim_{p\to 0} \frac{1-p}{p} \E[\omega_i h_i(y-\omega)] 
&= \lim_{p\to 0} \frac{1-p}{p} \sum_{\omega \in \Omega} 
\omega_i h_i(y-\omega) p^{\sum_{k=1}^n\omega_k} 
(1-p)^{\sum_{k=1}^n y_k-\omega_k}
\prod_{j=1}^n\binom{y_j}{\omega_j} \\
&= \lim_{p\to 0} \sum_{\omega \in \Omega} 
\omega_i h_i(y-\omega)p^{\sum_{k=1}^n\omega_k -1}
(1-p)^{1+\sum_{k=1}^n y_k-\omega_k}\prod_{j=1}^n
\binom{y_j}{\omega_j} \\
&= \lim_{p\to 0} \sum_{\omega \in \Omega_{i1 0}} 
\omega_i h_i(y-\omega)p^{\sum_{k=1}^n\omega_k -1}
(1-p)^{1+\sum_{k=1}^n Y_k-\omega_k}
\prod_{j=1}^n\binom{y_j}{\omega_j} \\
&= \lim_{p\to 0} h_i(y-e_i)p^0 (1-p)^{1+\sum_{k=1}^n y_k-\omega_k}
\binom{y_i}{1} \prod_{j\neq i, j=1}^n\binom{y_j}{0} \\
&= y_ih_i(y-e_i), 
\end{align*}
where we use the fact that $\Omega = \Omega_{i0} \cup \Omega_{i10} \cup
\Omega_{i1\bar 0}$ and 
\begin{multline*}
\lim_{p\to 0} \sum_{\omega \in \Omega_{i0}} 
\omega_i h_i(y-\omega)p^{\sum_{k=1}^n\omega_k -1} 
(1-p)^{1+\sum_{k=1}^n y_k-\omega_k}\prod_{j=1}^n\binom{y_j}{\omega_j} \\
= \lim_{p\to 0} \sum_{\omega \in \Omega_{i0}}  0 h_i(y-\omega)
p^{\sum_{k=1}^n\omega_k -1} (1-p)^{1+\sum_{k=1}^n 
y_k-\omega_k}\prod_{j=1}^n\binom{y_j}{\omega_j} = 0
\end{multline*}
as well as 
\begin{multline*}
\lim_{p\to 0} \sum_{\omega \in \Omega_{i1\bar 0}} 
\omega_i h_i(y-\omega)p^{\sum_{k=1}^n\omega_k -1} 
(1-p)^{1+\sum_{k=1}^n y_k-\omega_k}\prod_{j=1}^n\binom{y_j}{\omega_j} \\
= \sum_{\omega \in \Omega_{i1\bar 0}}  \omega_i h_i(y-\omega) 
0^{\sum_{k=1}^n\omega_k -1} \prod_{j=1}^n\binom{y_j}{\omega_j} = 0,
\end{multline*}
since for $\omega \in \Omega_{i1\bar 0}$, we have that
\smash{$\sum_{k=1}^n\omega_k - 1 >0$}. 
\end{proof}

Now we are ready to prove Theorem \ref{thm:noiseless_limit}. We start by
expanding the infinite bootstrap estimator  
\begin{align*}
\E[\CB_p(g) \,|\, Y]
&= \sum_{i=1}^n \E\Big[ D_\phi(Y^\dagger_i, g_i(Y^*)) + \phi(Y^*_i)
-\phi(Y^\dagger_i) \,\big|\, Y \Big] \\
&= \sum_{i=1}^n \E\Big[ \phi(Y^*_i) - \phi(g_i(Y^*)) - 
\nabla_i\phi(g(Y^*))(Y^\dagger_i - Y^*_i) \,\big|\, Y \Big] \\
&= \sum_{i=1}^n \E\bigg[ \phi(Y_i - \omega_i)-\phi(g_i(Y-\omega)) - 
\nabla_i\phi(g(Y - \omega)) \bigg( \frac{1-p}{p}\omega_i - (Y_i - \omega_i)
\bigg) \,\Big|\, Y \bigg] \\
&= \sum_{i=1}^n \E\bigg[ \phi(Y_i - \omega_i)-\phi(g_i(Y-\omega)) - 
\frac{1-p}{p}\nabla_i\phi(g(Y - \omega))\omega_i + 
\nabla_i\phi(g(Y - \omega)) (Y_i - \omega_i) \,\Big|\, Y \bigg].
\end{align*}
Then, taking the limit in the last line,
\begin{align*}
\sum_{i=1}^n \lim_{p\to 0} \,
& \E\bigg[ \phi(Y_i - \omega_i)-\phi(g_i(Y-\omega)) -  
\frac{1-p}{p}\nabla_i\phi(g(Y - \omega))\omega_i + 
\nabla_i\phi(g(Y - \omega)) (Y_i - \omega_i) \,\Big|\, Y \bigg] \\
&= \sum_{i=1}^n \lim_{p\to 0} 
\E \Big[ \phi(Y_i - \omega_i)-\phi(g_i(Y-\omega)) + 
\nabla_i\phi(g(Y - \omega)) (Y_i - \omega_i) \,\big|\, Y \Big] - 
Y_i\nabla_i\phi(g(Y - e_i)) \\
&= \sum_{i=1}^n \phi(Y_i)-\phi(g_i(Y)) + \nabla_i\phi(g(Y)) (Y_i ) - 
Y_i\nabla_i\phi(g(Y - e_i)) \\
&= \UE(Y),
\end{align*}
where in the second-to-last and last lines we used Lemma 
\ref{lem:limiting_inner_prod} parts (b) and (a), respectively. This completes
the proof. 

\section{Proofs for Section \ref{sec:bias_variance}}
 
\subsection{Proof of Proposition \ref{prop:cb_bias}}
\label{app:cb_bias}

From Proposition \ref{prop:test_error_p_smoothness}, the mapping $p \mapsto
\Err_p(g)$ has a continuous derivative for $p \in [0,1)$. By an application  
of the fundamental theorem of calculus, we can write the bias as 
\begin{align*}
\Err_p(g) &- \Err(g) = \int_0^p \frac{\partial}{\partial t}  \Err_t(g) \, dt \\   
&= \int_0^p \bigg\{ 2\sum_{i=1}^n\mu_i \E[D_\phi(\newY_t, g(Y_t))] -
\frac{1}{(1-t)} \E\big[ D_\phi(\newY_t, g(Y_t))\langle \newY_t+Y_t, 1_n
\rangle\big] \bigg\} \, dt \\  
&= \int_0^p\bigg\{ 2\sum_{i=1}^n \mu_i \Err_t(g) -\frac{1}{(1-t)} 
\Cov\Big( D_\phi(\newY_t, g(Y_t)),\langle \newY_t+Y_t, 1_n \rangle\Big)
-\frac{1}{(1-t)} 2\Err_t(g) (1-t)\sum_{i=1}^n\mu_i \bigg\} \,dt \\
&= -\int_0^p \frac{1}{(1-t)} \Cov\Big( D_\phi(\newY_t, g(Y_t)),\langle
\newY_t+Y_t, 1_n \rangle\Big) \, dt \\
&= -\int_0^p \frac{1}{(1-t)} \Cor\Big( D_\phi(\newY_t, g(Y_t)),\langle
\newY_t+Y_t,  1_n \rangle\Big) \sqrt{\Var\big[ D_\phi(\newY_t, g(Y_t)) \big]} 
\sqrt{\Var\big[ \langle \newY_t+Y_t, 1_n \rangle\big]} \, dt \\
&= -\int_0^p \frac{1}{(1-t)} \Cor\Big( D_\phi(\newY_t, g(Y_t)), 
\langle \newY_t+Y_t, 1_n \rangle\Big) 
\sqrt{\Var\big[ D_\phi(\newY_t, g(Y_t)) \big]} 
\sqrt{(2(1-t)\sum_{i=1}^n\mu_i} \, dt \\ 
&= -\int_0^p \frac{1}{\sqrt{1-t}} \Cor\big[ D_\phi(\newY_t, g(Y_t)),
\langle \newY_t+Y_t, 1_n \rangle\Big)
\sqrt{\Var\big[ D_\phi(\newY_t, g(Y_t)) \big]} \, dt 
\cdot \sqrt{2\sum_{i=1}^n\mu_i}. 
\end{align*}
which proves \eqref{eq:cb_bias}. Upper bounding the correlation by $1$, and
using the monotone variance assumption for $p \in [0,1/2]$, we have  
\begin{align*}
|\Err_p(g) - \Err(g)| &\leq \int_0^p \frac{1}{\sqrt{1-t}} 
\sqrt{\Var\big[ D_\phi(\newY_t, g(Y_t)) \Big)} \, dt 
\cdot \sqrt{2\sum_{i=1}^n\mu_i} \\
&\leq \int_0^p \frac{1}{\sqrt{1-t}} 
\sqrt{\Var\big[ D_\phi(\newY, g(Y)) \Big)} \, dt
\cdot \sqrt{2\sum_{i=1}^n\mu_i} \\
&= \sqrt{\Var\big[ D_\phi(\newY, g(Y)) \Big)} 
\sqrt{2\sum_{i=1}^n\mu_i}
\int_0^p \frac{1}{\sqrt{1-t}} \, dt \\
&= \sqrt{\Var\big[ D_\phi(\newY, g(Y)) \Big)} 
\sqrt{2\sum_{i=1}^n\mu_i}
\frac{2p}{1+\sqrt{1-p}} \\
&= \sqrt{\Var\big[ D_\phi(\newY, g(Y)) \Big)}
\sqrt{2\sum_{i=1}^n\mu_i} \frac{5p}{3}, 
\end{align*}
which proves \eqref{eq:cb_bias_bd}.

\subsection{Proof of Proposition \ref{prop:cb_rvar}}
\label{app:cb_rvar}

We start from the fact that
\[
\Var[\CB_p \,|\, Y] = \frac{1}{B} \Var\Big[ D_\phi(Y^\dagger, g(Y^*)) +
\phi(Y^*) - \phi(Y^\dagger) \,\big|\, Y \Big].
\]
Then
\begin{align*}
\E\big[ \Var[\CB_p \,|\, Y] \big]
&= \frac{1}{B} \E\bigg[ \Var\Big[ D_\phi(Y^\dagger, g(Y^*)) + \phi(Y^*) -
\phi(Y^\dagger) \,\big|\, Y \Big]\bigg] \\
&= \frac{1}{B} \E\bigg[ \Var\Big[ \phi(Y^*) - \phi(g(Y^*)) -
\langle\nabla\phi(g(Y^*)),Y^\dagger-g(Y^*) \rangle \,\big|\,Y \Big]\bigg] \\
&= \frac{1}{B} \E\bigg[ \Var\Big[ \phi(Y^*) - \phi(g(Y^*)) - 
\langle\nabla\phi(g(Y^*)),Y^\dagger - Y^* + Y^* -g(Y^*) \rangle \,\big|\,Y
\Big]\bigg] \\
&= \frac{1}{B} \E\bigg[ \Var\Big[ D_\phi(Y^*,g(Y^*)) -
\langle\nabla\phi(g(Y^*)), Y^\dagger - Y^*\rangle \,\big|\,Y \Big]\bigg] \\
&= \frac{1}{B} \E\bigg[ \Var\Big[ D_\phi(Y^*,g(Y^*)) - 
\frac{1}{p}\langle\omega, \nabla\phi(g(Y^*))\rangle + 
\langle\nabla\phi(g(Y^*)),Y\rangle \,\big|\,Y \Big]\bigg] \\
&\leq \frac{2}{B} \E\bigg[ \Var\Big[ D_\phi(Y^*,g(Y^*)) + 
\langle\nabla\phi(g(Y^*)),Y\rangle \,\big|\,Y \Big]\bigg] + 
\frac{2}{Bp^2}\E\bigg[ \Var\Big[ \langle\omega, \nabla\phi(g(Y^*))
\rangle \,\big|\,Y \Big]\bigg] \\
&\leq \frac{2}{B}\Var\Big[ D_\phi(Y^*,g(Y^*)) +
\langle Y^*,\nabla\phi(g(Y^*))\rangle \Big] + 
\frac{2}{Bp^2}\Var\Big[ \langle\omega, \nabla\phi(g(Y^*))\rangle \Big] \\
&= \frac{2}{B}\Var\Big[ D_\phi(Y,g(Y)) +\langle Y,\nabla\phi(g(Y))\rangle \Big]
+  \frac{2}{B}O(p) + \frac{2}{Bp^2}
\Var\big[\langle\omega, \nabla\phi(g(Y^*)) \rangle\big],
\end{align*}
where the second-to-last line uses the law of total variance, and the last line
uses a Taylor expansion which follows from the continuity in $p$ of expected
value of functions of $Y_p$ as assumed in the proof of Proposition
\ref{prop:test_error_p_smoothness}. For the last term in the above display,
note that $\omega$ and $Y^*$ are independent. Therefore, we can use that fact
that if $X_1$ and $X_2$ are independent, then $\Var[X_1X_2] = \Var[X_1]\Var[X_2]
+ \Var[X_1]\E^2[X_2] + \Var[X_2]\E^2[X_1]$, which translates to
\begin{align*}
\frac{2}{Bp^2}\Var\big[ &\langle \omega, \nabla\phi(g(Y^*)) \rangle\big] \\
&= \frac{2}{Bp^2}\sum_{i=1}^n\Var\big[ \omega_i \nabla_i\phi(g(Y^*))\big] \\    
&= \frac{2}{Bp^2}\sum_{i=1}^n\Big( 
\Var[\omega_i] \Var\big[ \nabla_i\phi(g(Y^*)) \big] + 
\Var[\omega_i] \E^2\big[ \nabla_i\phi(g(Y^*)) \big] + 
\Var\big[ \nabla_i\phi(g(Y^*)) \big] \E^2[\omega_i] \Big) \\
&= \frac{2}{Bp^2}\sum_{i=1}^n\Big( 
p \mu_i \Var\big[ \nabla_i\phi(g(Y^*)) \big] +
p \mu_i \E^2\big[ \nabla_i\phi(g(Y^*)) \big] + 
\Var\big[ \nabla_i\phi(g(Y^*)) \big] p^2 \mu_i^2 \Big) \\
&= \frac{2}{Bp}\sum_{i=1}^n \mu_i \E\big[ \nabla_i\phi(g(Y^*))^2 \big] +  
\frac{2}{B}\sum_{i=1}^n \mu_i^2\Var\big[ \nabla_i\phi(g(Y^*)) \big] \\
&= \frac{2}{Bp}\sum_{i=1}^n \mu_i \E\big[ \nabla_i\phi(g(Y^*))^2 \big].
\end{align*}
Putting it all together gives the desired result \eqref{eq:cb_rvar_bd}.

\subsection{Proof of Proposition \ref{prop:cb_ivar}}
\label{app:cb_ivar}

Note that the irreducible variance $\Var(\E[\CB_p(g) \,|\, Y])$ does not depend
on $B$ (because the inner expectation does not), so we assume without a  
loss of generality that $B=1$ henceforth. Observe that
\begin{align}
\nonumber
\Var\big( &\E[\CB_p(g) \,|\, Y] \big) \\
\nonumber
&= \Var\Big( \E\Big[ D_\phi(Y^*, g(Y^*)) + 
\langle \nabla \phi(g(Y^*)), Y^* -  Y^\dagger\rangle \,\big|\, Y \Big]\Big) \\
\nonumber
&= \Var\bigg( \E\bigg[ D_\phi(Y^*, g(Y^*)) + 
\langle \nabla\phi(g(Y^*)), Y^*\rangle - \frac{1-p}{p}
\langle \nabla\phi(g(Y^*)), \omega\rangle \,\Big|\, Y \bigg]\bigg) \\
\label{eq:last_step}
&\leq 2\Var\Big( \E\Big[ D_\phi(Y^*, g(Y^*)) + 
\langle \nabla\phi(g(Y^*)), Y^*\rangle \big|\, Y \Big]\Big) + 
2\Var\bigg( \E\bigg[ \frac{1-p}{p}\langle \nabla\phi(g(Y^*)), 
\omega\rangle \Big|\, Y \bigg]\bigg).
\end{align}
For the first term in \eqref{eq:last_step}, we apply the law of total variance and 
\begin{align*}
\Var\Big( \E\Big[ D_\phi(Y^*, g(Y^*)) + \langle \nabla\phi(g(Y^*)), Y^*\rangle
  \Big] \Big) 
&\leq \Var\Big[ D_\phi(Y^*, g(Y^*)) + \langle \nabla\phi(g(Y^*)), Y^*\rangle
\Big] \\
&\to \Var\Big[ D_\phi(Y, g(Y)) + \langle \nabla\phi(g(Y)), Y\rangle\Big], 
\quad \text{as $p \to 0$},
\end{align*}
where the last convergence is guaranteed provided that $D^2_\phi(Y,g(Y))$ and
$\langle \nabla\phi(g(Y)), Y\rangle^2$ have finite expectation, according to the
first step in the proof of Proposition \ref{prop:test_error_p_smoothness}. 

For the second term in \eqref{eq:last_step} recalling that $\Phi_g$ as defined
in the statement of the proposition, 
\begin{align*}
\lim_{p\to 0} \Var\bigg( \E\bigg[ \frac{1-p}{p}\langle \nabla\phi(g(Y^*)),
  \omega\rangle \,\Big|\, Y \bigg]\bigg) 
&\leq \lim_{p\to 0} \E\bigg( \E^2\bigg[ \frac{1-p}{p}\langle \nabla\phi(g(Y^*)),
\omega\rangle \,\Big|\, Y \bigg]\bigg) \\
&\leq \lim_{p\to 0}\E\bigg( \E^2\bigg[ \frac{1-p}{p}\langle \Phi_g(Y), \omega 
\,\Big|\, Y \bigg]\bigg) \\
&= \lim_{p\to 0} \, (1-p)^2\E\Big[ \langle \Phi_g(Y), \E[\omega/p \,|\, Y] 
\rangle^2\Big] \\ 
&= \lim_{p\to 0} \, (1-p)^2\E\big[ \langle \Phi_g(Y), Y \rangle^2 \big] \\ 
&=\E\big[ \langle \Phi_g(Y), Y\rangle^2 \big].
\end{align*}
Putting it all together gives the desired result \eqref{eq:cb_ivar_bd}.

\end{document}